\pgfplotsset{compat=newest}
\definecolor{labelkey}{rgb}{0,0.08,0.45}
\definecolor{refkey}{rgb}{0,0.6,0.0}
\definecolor{Brown}{rgb}{0.45,0.0,0.05}
\definecolor{dgreen}{rgb}{0.00,0.49,0.00}
\definecolor{dblue}{rgb}{0,0.08,0.75}
\title{
Reinforcement learning with restrictions on the action set
}
\author{ Mario Bravo$^1$ and Mathieu Faure$^2$
\\[5mm]
\small $\!^1$ Instituto de Sistemas Complejos de Ingenier\'ia (ISCI),\\
\small Universidad de Chile \\
 \small Rep\'ublica 701, Santiago, Chile.\\
 {\ttfamily mbravo@dii.uchile.cl}\\
[4mm] \small $\!^2$Aix-Marseille University (Aix-Marseille School of Economics)\\
\small CNRS \& EHESS \\
\small  2 rue de la vieille charit\'e 13236 Marseille Cedex 02\\
{\ttfamily mathieu.faure@univ-amu.fr}
}
\newcommand{\escon}[2]{\mathbb E (#1 \, \,  \vert \, \,  #2)}
\newcommand{\pcon}[2]{\mathbb P (#1 \, \,  \vert \, \,  #2)}
\newcommand{\RR}{\ensuremath{\mathbb{R}}}
\newcommand{\NN}{\ensuremath{\mathbb N}}
\newcommand{\FF}{\ensuremath{\mathcal F}}
\newcommand{\ind}{\ensuremath{\mathds 1}}
\newcommand{\Gr}{\ensuremath{\operatorname{Gr}}}
\newcommand{\var}{\ensuremath{\operatorname{var}}}
\newcommand{\expo}{\ensuremath{\operatorname{exp}}}
\newcommand{\BR}{\ensuremath{\operatorname{BR}}}
\newcommand{\Argmax}{\operatornamewithlimits{argmax}}
\newcommand{\norm}[1]{\left\Vert #1 \right\Vert} 
\numberwithin{equation}{section}
\newtheorem{theorem}{Theorem}[section]
\newtheorem{definition}[theorem]{Definition}
\newtheorem{proposition}[theorem]{Proposition}
\newtheorem{lemma}[theorem]{Lemma}
\theoremstyle{definition}
\newtheorem{remark}[theorem]{Remark}
\begin{document}
\date{}
\maketitle
\date{}
\begin{abstract}
\noindent Consider a 2-player normal-form game repeated over time. We introduce an adaptive learning procedure, where the players only observe their own realized payoff at each stage. We assume that agents do not know their own payoff function, and have no information on the other player. Furthermore, we assume that they have restrictions on their own action set such that, at each stage, their choice is limited to a subset of their action set. We prove that the empirical distributions of play converge to the set of Nash equilibria for zero-sum and potential games, and games where one player has two actions. 
%Our analysis relies on recent results on stochastic approximation by Bena\"im and Raimond (2010).
\end{abstract}

\section{Introduction}
\noindent  First introduced by Brown~\cite{brown51} to compute the value of zero-sum games, {\it fictitious play} is one of the most intensely studied and debated procedures in game theory. Consider an $N$-player normal form game which is repeated in discrete time. At each time, players compute a \textit{best response} to the opponent's empirical average play.

\noindent A major issue in fictitious play is identifying classes of games where the empirical frequencies of play converge to the set of  Nash equilibria of the underlying game. A large body of literature has been devoted to this question.  Convergence for 2-player zero-sum  games was obtained by Robinson~\cite{robinson51} and for general (non-degenerate) $2 \times 2$ games by Miyasawa~\cite{Miy61}. Monderer and Shapley \cite{MonSha96b} proved the same result for potential games, and Berger~\cite{berger05} for $2$-player games where one of the players has only two actions. Recently, a large proportion of these results have been re-explored using the stochastic approximation theory (see for example, Bena\"im~\cite{benaim99}, Benveniste et al.~\cite{bmp90}, Kushner and Yin~\cite{ky03}), where the asymptotic behavior of the fictitious play procedure  can be analyzed through related dynamics. For instance, Hofbauer and Sorin~\cite{hs06} obtain more general convergence results for zero-sum games, while Bena\"{i}m, Hofbauer and Sorin \cite{bhs05} extend Monderer and Shapley's result to a general class of potential games, with nonlinear payoff functions on compact convex action sets.

\noindent Most of these convergence properties also hold for \emph{smooth fictitious play}, introduced by Fudenberg and Kreps~\cite{fk93}, (see also \cite{fl98}), where agents use a fictitious play strategy in a game where payoff functions are perturbed by random variables, in the spirit of Harsanyi~\cite{Har73}. For this adaptive procedure, convergence holds in $2 \times 2$ games (see \cite{bh99a}), zero-sum, potential games (see \cite{hs02}), and supermodular games (see \cite{bf12}).

\noindent As defined above, in fictitious play or smooth fictitious play,
players compute best responses to their opponents' empirical
frequencies of play. Three main assumptions are made here: (i) each player knows the structure of the game, i.e. she knows her own payoff function; (ii)
each player is informed of the action selected by her opponents at each stage; thus
she can compute the empirical frequencies; (iii) each player is
allowed to choose any action at each time, so that she can actually play a
best response.  

\noindent The next question is usually, what happens if assumptions (i) and (ii) are relaxed. One approach is to assume that the agents observe only their realized payoff at each stage. This is the minimal information framework of the so-called  {\em reinforcement learning} procedures (see \cite{bs97, er98} for pioneer work on this topic). Most work in this direction proceeds as follows: $a)$ construct a sequence of mixed strategies which are updated taking into account the payoff they receive (which is the only information agents have access to) and $b)$ study the convergence (or non-convergence) of this sequence. It is supposed that players are given a rule of behavior (a {\em decision rule}) which depends on a {\em state variable} constructed by means of the aggregate information they gather and their own history of play. 

\noindent It is noteworthy that most of the decision rules considered in the literature are {\it stationary} in the sense that they are defined through a time-independent function of the state variable. This kind of rule has proved useful in the analysis of simple cases (e.g. $2 \times 2$ games \cite{posch97}), 2-players games with positive payoff \cite{bs97, beggs05, hopkins02, hp05} or in establishing convergence to perturbed equilibria in 2-player games \cite{lc05} or multiplayer games \cite{cms10}. An example of a non-homogeneous (time-dependent) decision rule is proposed by Leslie and Collins~\cite{lc06} where, via stochastic approximation techniques, convergence of mixed actions is shown for zero-sum games and multiplayer potential games.  Another interesting example that implements a non-homogeneous decision rule is proposed by Hart and Mas-Colell \cite{hmc01}. Using techniques based on consistent procedures (see Hart and Mas-Colell~\cite{hmc00}), the authors show that, for any game, the joint empirical frequency of play converges to the set of correlated equilibria. To our knowledge, this is the only reinforcement learning procedure that uses a decision rule depending explicitly on the last action played (i.e. it is {\em Markovian}). However, in all the examples described above, assumption (iii) holds; in other words, players can use any action at any time. 

\noindent A different idea, that of releasing assumption (iii), comes from
Bena\"im and Raimond~\cite{br10}, who introduced the \textit{Markovian fictitious play} (MFP) procedure, where players have restrictions on their action set, due to limited computational capacity or even to physical restrictions. Players know the structure of the game and, at each time, they are informed of opponents' actions, as in the fictitious play framework. Under the appropriate conditions regarding payers' ability to explore their action set, it is shown that this adaptive procedure converges to  Nash equilibria for zero-sum and potential games.
 
\noindent Here, we drop all three assumptions (i), (ii) and (iii). The main novelty of this work is that we construct a sophisticated, non-stationary learning procedure in $2$-player games with minimal information and restrictions on players' action sets. We assume that players do not anticipate opponents' behavior  and  that they have no information on the structure of the game (in particular, they do not know their own payoff function) nor on opponents' actions at each stage. This means that the only information allowing agents to react to the environment is their past realized payoffs; the adaptive procedure presented in this work thus belongs to the class of reinforcement learning algorithms. In addition (and in the spirit of the (MFP) procedure), we suppose that at each stage the agents are restricted to a subset of their action set, which depends on the action they chose at the previous stage.  The decision rule we implement is fully explicit, and it is easy for each agent to compute the mixed strategy which dictates her next action. She actually chooses an action through a non-homogeneous Markovian rule which depends on a meaningful state variable.  

\noindent One of the main differences between this procedure and standard reinforcement learning is that the  sequence of mixed strategies is no longer a natural choice of state variable. Indeed, the set of mixed strategies available to a given agent at time $n+1$ depends on the action he chose at time $n$. As a consequence, it is unrealistic to expect good asymptotic behavior from the sequence of mixed strategies, and  we turn our attention to the sequence of empirical moves. Our main finding is that the empirical frequencies of play converge to Nash equilibria in zero-sum and potential games, including convergence of the average scored payoffs. We also show convergence in the case where at least one player has only two actions. 

\noindent This paper is organized as follows.  In Section~\ref{sec:model} we describe the setting and present our model, along with our main result. Section~\ref{sec:preliminaries} introduces the general framework in which we analyze our procedure. The related Markovian fictitious play procedure is also presented, to help the reader better grasp our adaptive procedure. Section~\ref{proofs} gives the proof of our main result,  presented as an extended sketch, while the remaining results and technical comments are left to the Appendix. 

%%%%%%%%%%%%%%%%%%%%%%%%%%%%%%%%%%%%%%%%%%%%%%%%%%%%%%%%%%%%%%%%%%%%%%%%%%%%%%%%%%%%%%%%
% THE MODEL
%%%%%%%%%%%%%%%%%%%%%%%%%%%%%%%%%%%%%%%%%%%%%%%%%%%%%%%%%%%%%%%%%%%%%%%%%%%%%%%%%%%%%%%%
\section{The Model}\label{sec:model}

%%%%%%%%%%%%%%%%%%%%%%%%%%%%
\subsection{Setting}

Let $\mathcal{G} = (N,(S^i)_{i \in N}, (G^i)_{i \in N})$ be a given finite normal form game and $S = \prod_i S^i$ be the set of action profiles. We call $\Delta(S^i)$ the mixed action set, i.e
\begin{equation*}
\Delta(S^i)= \left \{ \sigma^i \in \RR^{|S^i|} \, :\, \sum_{s^i \in S^i} \sigma^i(s^i) =1, \, \sigma^i(s^i) \geq 0, \,\forall s^i \in S^i \right\},
\end{equation*}
and $\Delta = \prod_i \Delta(S^i)$.  More generally, given a finite
set $S$, $\Delta(S)$ denotes the set of probability distributions over
$S$.  

\noindent In the whole paper, for any agent $i$, we denote $\delta_{s^i}$ the pure strategy $i$ seen as an element of $\Delta(S^i)$. As usual, we use the notation $-i$ to exclude player $i$, namely $S^{-i}$ denotes the set  $\prod_{j \neq i} S^j$ and $\Delta^{-i}$ the set $\prod_{j \neq i} \Delta(S^i)$. 

\begin{definition}
The Best-Response correspondence for player $i \in N$, $\BR^i: \Delta^{-i} \rightrightarrows \Delta(S^i)$, is defined as 
\begin{equation*}
  \BR^i(\sigma^{-i})= \Argmax \limits_{\sigma^i \in \Delta(S^i)} G^i(\sigma^i, \sigma^{-i}).\,
\end{equation*}
 for any $\sigma^{-i}\in \Delta^{-i}$. The Best-Response correspondence $\BR: \Delta \rightrightarrows \Delta$ is given by 
 \begin{equation*}
   \BR(\sigma)= \prod_{i \in N} \BR^i(\sigma^{-i}) ,
 \end{equation*}
for all $\sigma \in \Delta$.
\end{definition}

\noindent Recall that a Nash equilibrium of the game $\mathcal G$ is a fixed point of the set-valued map $\BR$, namely a mixed action profile $\sigma^* \in \Delta$ such that $\sigma^* \in \BR(\sigma^*)$. 

\subsection{Payoff-based Markovian procedure} \label{sec:MFP}

We consider a situation where the game $\mathcal G$ described above is repeated in discrete time. Let $s_n^i \in S^i$ be the action played by player $i$ at time $n$. We assume that players do not know the game that they are playing, i.e. they know  neither their own payoff functions nor opponents'.  Also we assume that the information that a player can gather at any stage of the game is given by her payoff, i.e. at each time $n$ each player $i \in N$ is informed of 
\begin{equation*}
g_n^i=G^i(s_n^1,s_n^2,...,s_n^N).
\end{equation*}
Players are not able to observe opponents' actions. 

\noindent In this framework, a {\em reinforcement learning} procedure can be defined in the following manner.
Let us assume that, at the end of stage $n \in \NN$, player $i$ has constructed a {\em state variable} $X_n^i$. Then
\begin{itemize}
\item[$(a)$] at stage $n+1$, player $i$ selects a mixed strategy $\sigma_n^i$ according to a {\em decision rule}, which can depend on state variable $X_n^i$ the time $n$.
\item[$(b)$] Player $i$'s action $s_{n+1}^i$ is randomly drawn according to $\sigma_n^i$.
\item[$(c)$]  She only observes $g_{n+1}^i$, as a consequence of the realized action profile $(s_{n+1}^1,\ldots,s_{n+1}^N)$.
\item[$(d)$] Finally, this observation allows her to update her state variable to $X_{n+1}^i$ through an {\em updating rule}, which can depend on  observation $g_{n+1}^i$,  state variable $X_n^i$, and time $n$.
\end{itemize}

\noindent  In this work we assume that, in addition, players have restrictions on their action set. This idea was introduced by Bena\"{\i}m and Raimond ~\cite{br10} through the definition of the (MFP) procedure (see Section~\ref{sec:mark-fict-play} for details). Suppose that, when an agent $i$  plays a pure strategy $s \in S^i$ at stage $n \in \NN$, her available actions at stage $n+1$ are reduced to a subset of $S^i$. This can be due to physical restrictions, computational limitations or a large number of available actions.  The subset of actions available to player $i$ depends on her last action and is defined through a stochastic \emph{exploration matrix} $M_0^i \in \RR^{|S^i|}$. In other words, if at stage $n$ player i plays $s \in S^i$,  she can switch to action  $r \neq s$ at stage $n+1$ if and only if $M_0^i(s,r)>0$. 

\noindent  The matrix $M_0^i$ is assumed to be irreducible and reversible with respect to its unique invariant measure $\pi^i_0$, i.e.
\begin{equation*} \label{eq:pi_0}
 \pi_0^i(s)M_0^i(s,r)= \pi_0^i(r)M_0^i(r,s),
\end{equation*}
\noindent for every $s,r \in S^i$. This assumption guarantees that agents have access to any of their actions.
\begin{remark}
Recall that a stochastic matrix $M$ over a finite set $S$ is said to be irreducible if it has a unique recurrent class which is given by $S$.
\end{remark}

\noindent For $\beta>0$ and a vector $R \in \RR^{|S^i|}$, we define the stochastic matrix $M^i[\beta, R]$ as
\begin{equation} \label{def_M}
  M^i[\beta,R](s,r)= 
\begin{cases} 
M_0^i(s,r) \expo(-\beta  |R(s)- R(r)|_+) &s \neq r \\
 1 - \sum \limits_{s' \neq s} M^i[\beta,R](s,s') & s =r, \\ 
\end{cases}
\end{equation}
where, for a number $a \in \RR$, $|a|_+= \max\{a,0 \}$. 

\noindent From the irreducibility of the exploration matrix $M_0^i$, we have that $M^i[\beta,R]$ is also irreducible and its unique invariant measure is given by
\begin{equation}
\pi^i[\beta,R](s)= \frac{\pi_0^i(s)\expo(\beta R(s))}{\sum \limits_{r \in S^i}\pi_0^i(r)\expo(\beta R(r))}, \label{mesinv}
\end{equation}
for any $\beta >0$, $R \in \RR^{|S^i|}$, and $s \in S^i$.

\noindent Let $(\beta_n^i)_n$ be a deterministic sequence and let $\mathcal F_n$ be the sigma algebra generated by the history of play up to time $n$.
We suppose that, at the end of stage $n$, player $i$ has a state variable $R_n^i \in \mathbb{R}^{|S^i|}$. Let $M_n^i= M^i[\beta_n^i, R_n^i]$ and $\pi_n^i= \pi_n^i[\beta_n^i, R_n^i]$.  Player $i$ selects her action at time $n+1$ through the
following \emph{choice rule}:
\begin{equation}\label{strategy}  \tag{CR}
\begin{aligned}
 \sigma_n^i(s)=\mathbb P(s_{n+1}^i=s \, \vert \, \mathcal F_n) &=   M_n^i(s_n^i,s),\\
&= 
\begin{cases}
 M_0^i(s_n^i,s) \expo(-\beta_n^i |R_n^i(s_n^i) - R_n^i(s)|_+) &  s \neq s_n^i \\ 
1 - \sum \limits_{s' \neq s}  M_n^i(s_n^i,s') & s =s_n^i. 
 \end{cases}
 \end{aligned}
\end{equation}
for every $s \in S^i$. As we will see, variable $R_n^i$ will
be defined so as to be an estimator of the
time-average payoff vector.

\noindent At time $n+1$, player $i$ observes her realized payoff $g_{n+1}^i$.  The \textit{updating rule} chosen by player $i$ is defined as follows. Agent $i$ updates the vector $R_n^i \in \mathbb{R}^{|S^i|}$, only on the component associated to the action selected at stage $n$. For every action $s \in S^i$,  
\begin{equation} \label{eq:ur} \tag{UR}
 R_{n+1}^i(s)  = R_n^i(s) +\gamma_{n+1}^i(s) \left ( g_{n+1}^i - R^i_n(s)\right )\ind_{\{s_{n+1}^1=s\}},\\
\end{equation}
where, \[ \gamma^i_{n+1}(s) = \min \left \{1\,, \frac{1}{(n+1)\pi^i_{n}(s)}\right\},\]
and $\ind_E$ is the indicator of the event $E$.

\begin{remark}
Strictly speaking, the state variable is of the form $X_n^i=(R_n^i,s_n^i)$, since the choice rule \eqref{strategy} is Markovian. We use this interpretation for the sake of simplicity.
\end{remark}

\noindent Note that the step size $\gamma_{n+1}^i(s)$ depends only on $\pi_0^i$,  $\beta_n^i$ and $R_n^i$. Also, as we will see later on,$\gamma_n^i(s) = 1/(n\pi^i_{n-1}(s))$ for sufficiently large $n$ (\emph{c.f.} Section~\ref{lm:R}). 

\noindent While choosing this step size might appear surprising, we
believe that it is actually very natural, as it takes advantage of the
fact that the invariant distribution $\pi_n^i$ is known by player $i$. To put it another way: a natural candidate for step size $\gamma_n^i(s)$ in \eqref{eq:ur} is $\gamma_n^i(s) = 1/\theta_n^i(s)$, where $\theta_n^i(s)$ is equal to the number of times agent $i$ actually played action $s$ during the $n$ first steps. If the Markov process was homogeneous and ergodic, with invariant measure $\pi^i$, then the expected value of $\theta_n^i$ would be exactly $n \pi^i(s)$. 

\noindent Consequently, our stochastic approximation scheme
\eqref{eq:ur} can be interpreted as follows. Assume that, at time
$n+1$, action $s$ is played by agent $i$. Then $R_{n+1}(s)$ is
updated by taking a convex combination of $R_n(s)$ and of the
realized payoff playing $s$ at time $n+1$; additionally  the weight
that is put on the  realized payoff is inversely proportional to
the number of times this action \emph{should} have been played
(and not the number of times it has \emph{actually} been played).

\noindent Let us denote by $(v_n^i)_n$ the sequence of empirical distribution of moves of agent $i$, i.e.
\[v^i_n = \frac{1}{n} \sum_{m=1}^n \delta_{s_m^i},\] 
and $v_n = (v_n^i)_{i \in N} \in \Delta$.

\noindent  Note that, given the physical restrictions on the action set, one cannot expect convergence results on the mixed actions of players $\sigma_n^i$. Therefore, the empirical frequencies of play become the natural focus of our analysis.

\begin{definition}
\noindent We call \emph{Payoff-based Markovian procedure} the adaptive process where, for any $i \in N$, agent $i$ plays according to the choice rule \eqref{strategy}, and updates $R_n^i$ through the updating rule \eqref{eq:ur}.
\end{definition}

\subsection{Main result}\label{sec:mainresult}
%\noindent Let us denote by $D(G^i)= \max_{s_1,s_2} G^i(s^1,s^2) - \min_{s_1,s_2} G^i(s^1,s^2)$.
\noindent In the case of a 2-player game, we introduce our major assumption on the positive sequence $(\beta_n^i)_n$. Let us assume that, for $i \in \{1,2\}$,
\begin{equation}
\begin{aligned}
\text{(i)  } & \beta_n^i \longrightarrow +\infty, \\
\text{(ii)  }&\beta_n^i \leq  A_n^i \ln(n),  \textup{ where }  A_n^i \longrightarrow 0.  \\
\end{aligned} \label{hyp_param} \tag{$H$}
\end{equation}

\noindent Let us denote by $\overline{g}_n^i$ the average payoff obtained by player $i$, i.e.
\begin{equation} \label{p_moyen}
\overline{g}_n^i = \frac{1}{n} \sum_{m=1}^n G^i(s_m^1,s_m^2),
\end{equation}
and $\overline{g}_n = (\overline{g}_n^1, \overline{g}_n^2)$.

\noindent For a sequence $(z_n)_n$, we call $\mathcal{L}((z_n)_n)$ its limit set , i.e.
\[\mathcal{L}((z_n)_n) = \big \{ z : \;  \mbox{ there exists a subsequence }  \,  (z_{n_k})_k \; \mbox{such that }  \, \lim \nolimits_{k \to +\infty} z_{n_k} = z\big\}.\]
We say that the sequence $(z_n)_n$ converges to a set $A$ if $\mathcal{L}((z_n)_n) \subseteq A$.

\noindent  Recall that $\mathcal G$ is a potential game with potential $\Phi$ if, for all $i=1,2$, and $s^{-i} \in S^{-i}$, we have $G^i(s^i,s^{-i}) - G^i(t^i,s^{-i}) = \Phi(s^i,s^{-i}) - \Phi(t^i,s^{-i})$,  for all $s^i,t^i \in S^i$.

\noindent Our main result is the following.
\begin{theorem} \label{th:main}
Under assumption \eqref{hyp_param}, the Payoff-based Markovian  procedure enjoys the following properties:
\begin{enumerate}
\item[\textup{(a)}] In a zero-sum game, $(v_n^1, v_n^2)_n$ converges almost surely to the set of Nash equilibria and the average payoff $(\overline g_n^1)_n$ converges almost surely to the value of the game. 
\item[\textup{(b)}]  In a potential game with potential $\Phi$,  $(v_n^1, v_n^2)_n$ converges almost surely to a connected subset of the set of Nash equilibria on which $\Phi$ is constant, and $\frac{1}{n} \sum_{m=1}^n \Phi(s_m^1,s_m^2)$ converges to this constant.  

In the particular case $G^1 = G^2$, then $(v_n^1, v_n^2)_n$ converges almost surely to a connected subset of the set of Nash equilibria on which $G^1$ is constant; moreover $(\overline g_n^1)_n$ converges almost surely to this constant. 
\item[\textup{(c)}]  If either $|S^1|=2$ or $|S^2| =2$, then $(v_n^1, v_n^2)_n$ converges almost surely to the set of Nash equilibria.
\end{enumerate}
\end{theorem}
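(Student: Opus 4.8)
The plan is to identify the procedure as a discrete stochastic approximation of the best-response differential inclusion $\dot v \in \BR(v) - v$, and then to invoke, for each of the three classes of games, the known convergence properties of that dynamics. The linchpin is an auxiliary estimate asserting that, despite seeing only its realized payoffs, player $i$ asymptotically reconstructs its entire payoff vector against the opponent's empirical play. Writing $U^i(v^{-i}) = \big(G^i(\delta_s, v^{-i})\big)_{s \in S^i}$, I would first establish that
\[
\norm{R_n^i - U^i(v_n^{-i})} \longrightarrow 0 \qquad \text{almost surely.}
\]
The driving mechanism is the engineered step size $\gamma_{n+1}^i(s) = 1/\big((n+1)\pi_n^i(s)\big)$: component $R_n^i(s)$ is refreshed only at the random times when $s$ is actually selected, and dividing by the invariant mass $\pi_n^i(s)$ offsets the long-run frequency of those visits, so the cumulative weight placed on each component grows like $1/n$, as in a Robbins--Monro scheme. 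Conditionally on $\mathcal F_n$ the two players draw independently, whence $\EE\big(g_{n+1}^i \ind_{\{s_{n+1}^i = s\}} \mid \mathcal F_n\big) = \sigma_n^i(s)\, G^i(\delta_s, \sigma_n^{-i})$, and \eqref{eq:ur} becomes a stochastic approximation whose averaged drift pushes $R_n^i(s)$ toward $G^i(\delta_s, \pi_n^{-i})$.

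Second, I would close the loop between the invariant measures, the strategies actually used, and the empirical moves. Two averaging facts are needed: that the empirical distribution $v_n^i$ tracks the invariant measure $\pi_n^i$ (ergodic averaging of the slowly varying exploration chain), and that, because $\beta_n^i \to +\infty$, the Gibbs weights \eqref{mesinv} concentrate on $\Argmax_s R_n^i(s)$. Combining these with the tracking estimate gives $\pi_n^i \to \BR^i(v_n^{-i})$ in the Hausdorff sense, and the increment
\[
\EE\big(v_{n+1}^i - v_n^i \mid \mathcal F_n\big) = \tfrac{1}{n+1}\big(\sigma_n^i - v_n^i\big)
\]
then exhibits $(v_n)$ as a Robbins--Monro discretization of $\dot v \in \BR(v) - v$. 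By the theory of stochastic approximation for differential inclusions, $\mathcal L\big((v_n)\big)$ is almost surely an internally chain-transitive set of this dynamics; this is precisely the limiting object already attached to the \textup{(MFP)} procedure, so the remaining work is purely dynamical.

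Third, I would dispatch the three cases by the appropriate convergence results for the best-response dynamics. For (a), in a zero-sum game the duality gap is a strict Lyapunov function along the flow, so every internally chain-transitive set lies in the set of Nash equilibria; convergence of $(\overline g_n^1)$ to the value then follows by feeding the tracking estimate for $R_n$ back into the payoff average. For (b), the potential $\Phi$ (respectively $G^1$ when $G^1 = G^2$) is a strict Lyapunov function, so any internally chain-transitive set is a connected subset of the equilibrium set on which $\Phi$ is constant, and the average of $\Phi(s_m^1, s_m^2)$ converges to that constant. For (c), when one player has only two actions no single Lyapunov function is available, but Berger's analysis shows that the best-response dynamics still converges to the set of Nash equilibria, which settles the claim.

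The hard part will be the first step, and within it the control of the Markovian noise. Unlike a textbook Robbins--Monro scheme, the innovations here are neither independent nor identically distributed: each increment is filtered through the exploration chains, and the step sizes $\gamma_{n+1}^i(s)$ are themselves random and can be large precisely on the actions that are rarely played. The crucial cancellation is that, averaged against its own invariant measure, $M_n^i(s_n^i,s)/\pi_n^i(s) \to 1$, which is what turns the effective drift coefficient into $1/(n+1)$; making this rigorous is a two-time-scale problem. Assumption \eqref{hyp_param}(ii), $\beta_n^i \le A_n^i \ln n$ with $A_n^i \to 0$, is what tames both difficulties: it keeps $\pi_n^i(s) \gtrsim n^{-A_n^i C}$, guaranteeing both that $\gamma_{n+1}^i(s)$ eventually drops below $1$ (so the truncation in \eqref{eq:ur} is inactive) and that the spectral gap of $M_n^i$ shrinks slowly enough that the chain mixes on a faster time scale than the $1/n$ evolution of $v_n$. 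Making this separation quantitative, and showing that the resulting martingale and averaging errors are summable, is where the main effort lies.
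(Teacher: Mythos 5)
Your overall architecture is the same as the paper's: the tracking lemma ($R_n^i - G^i(\cdot,v_n^{-i}) \to 0$, driven by the engineered step size $1/((n+1)\pi_n^i(s))$), Gibbs concentration as $\beta_n^i \to \infty$ to show that limit points of $(\pi_n^i)$ lie in $\BR^i$ of the limit of $(v_n^{-i})$, the conclusion that $\mathcal{L}((v_n)_n)$ is an ICT set of \eqref{BR} (the paper's Theorem~\ref{main2}), and then Hofbauer--Sorin, Bena\"im--Hofbauer--Sorin and Berger for the three classes of games. The technical obstacle you flag at the end --- that $\delta_{s_{n+1}^i}-\pi_n^i$ is not a martingale difference because the conditional law of $s_{n+1}^i$ is the row $M_n^i(s_n^i,\cdot)=\sigma_n^i$, not $\pi_n^i$ --- is precisely what the paper resolves in the Appendix via the pseudo-inverse $Q_n$ of $M_n$ (Poisson's equation, Proposition~\ref{prop:bruit}) together with the Holley--Stroock spectral gap bound under \eqref{hyp_param}; so your displayed increment with drift $\sigma_n^i - v_n^i$ is not the decomposition actually used, but you correctly identify where the work lies.

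There is, however, one genuine gap: the convergence of the average realized payoffs, which is part of statements (a) and (b). You dispose of it with ``feeding the tracking estimate for $R_n$ back into the payoff average,'' but this does not work as stated, for two reasons. First, $\overline{g}_n^1 = \frac{1}{n}\sum_m G^1(s_m^1,s_m^2)$ depends on the \emph{joint} empirical distribution of play, not on the marginals $(v_n^1,v_n^2)$: even if both marginals converge to equilibrium strategies, correlation between the players' moves could make $\overline{g}_n^1$ differ from $G^1(v_n^1,v_n^2)$ (think of matching pennies played in lockstep). Second, $R_n^i(s)$ is not the empirical average of payoffs received when playing $s$ --- its weights are $1/((n+1)\pi_n^i(s))$ rather than the reciprocal of the visit count $\theta_n^i(s)$ --- so no algebraic identity links $\overline{g}_n^i$ to a $v_n^i$-weighted average of $R_n^i$. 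The paper closes this gap by augmenting the state to $z_n = (v_n^1,v_n^2,\overline{g}_n^1,\overline{g}_n^2)$: conditionally on $\mathcal F_n$ the two players draw independently, so $\overline{g}_n^i$ is itself a stochastic approximation with drift $G^i(\pi_n^i,\pi_n^{-i}) - \overline{g}_n^i$ and Markovian noise $W_{n+1}^{i,4}$ controlled by Proposition~\ref{bruit2}; one then shows the augmented map $\mathbf{C}$ (whose third component forces $\gamma = (G^1(\alpha^1,\alpha^2),G^2(\alpha^1,\alpha^2))$ with $\alpha^i \in \BR^i$) is adapted, and reads off the payoff convergence from the attractor of the augmented inclusion. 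Some argument of this type --- establishing asymptotic decorrelation of simultaneous moves, not just marginal convergence --- is needed and is absent from your sketch.
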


\noindent In fact, we prove a more general result. We establish a relationship between the limit set of the sequence $(v_n^1, v_n^2)_n$ and the attractors of the well-known Best-Response dynamics \cite{gm91}

\begin{equation} \label{BR}
 \dot v \in  -v + \BR(v). \tag{BRD}
\end{equation}
See Section~\ref{proofs} (Theorem \ref{main2}) for details.

\paragraph{Comments on the result} For potential games, in the general case, the payoff of a given player is not necessarily constant on the limit set of $(v_n)_n$. However, the potential almost surely is.

\noindent Consider the game $\mathcal{G}$, with payoff function $G$ and potential $\Phi$:
\newcommand{\mc}[3]{\multicolumn{#1}{#2}{#3}}
\begin{center}
  \begin{equation} \tag{$\mathcal{G}$}
    \hfill
G=
\begin{tabular}{cccc}
 & $a$ & $b$ & $c$ \\\cline{2-4}
\mc{1}{l|}{$A$} & \mc{1}{l|}{1,1} & \mc{1}{l|}{9,0} & \mc{1}{l|}{1,0}\\\cline{2-4}
\mc{1}{l|}{$B$} & \mc{1}{l|}{0,9} & \mc{1}{l|}{6,6} & \mc{1}{l|}{0,8}\\\cline{2-4}
\mc{1}{l|}{$C$} & \mc{1}{l|}{0,1} & \mc{1}{l|}{9,0} & \mc{1}{l|}{2,2}\\\cline{2-4}
\end{tabular}\qquad\text{       and     } \qquad 
\Phi= \begin{tabular}{cccc}
 & $a$ & $b$ & $c$ \\\cline{2-4}
\mc{1}{l|}{$A$} & \mc{1}{l|}{4} & \mc{1}{l|}{3} & \mc{1}{l|}{3}\\\cline{2-4}
\mc{1}{l|}{$B$} & \mc{1}{l|}{3} & \mc{1}{l|}{0} & \mc{1}{l|}{2}\\\cline{2-4}
\mc{1}{l|}{$C$} & \mc{1}{l|}{3} & \mc{1}{l|}{2} & \mc{1}{l|}{4}\\\cline{2-4}
\end{tabular}\hfill
\end{equation}
\end{center}
There is a mixed Nash equilibrium, and two strict Nash equilibria $(A,a)$ and $(C,c)$, with same potential value (equal to $4$). However, $\mathbb{P} \left[ \mathcal{L}((v_n)_n) = \{(A,a),(C,c)\}\right] = 0$, because this set is not connected.

\noindent Now consider the following  modified version $\mathcal{G}'$:
\begin{center}
  \begin{equation}\label{p_game} \tag{$\mathcal{G}'$}
    \hfill
G'=
\begin{tabular}{cccc}
 & $a$ & $b$ & $c$ \\\cline{2-4}
\mc{1}{l|}{$A$} & \mc{1}{l|}{1,1} & \mc{1}{l|}{9,0} & \mc{1}{l|}{1,0}\\\cline{2-4}
\mc{1}{l|}{$B$} & \mc{1}{l|}{0,9} & \mc{1}{l|}{6,6} & \mc{1}{l|}{0,8}\\\cline{2-4}
\mc{1}{l|}{$C$} & \mc{1}{l|}{1,2} & \mc{1}{l|}{8,0} & \mc{1}{l|}{2,2}\\\cline{2-4}
\end{tabular}\qquad\text{       and     } \qquad 
\Phi'= \begin{tabular}{cccc}
 & $a$ & $b$ & $c$ \\\cline{2-4}
\mc{1}{l|}{$A$} & \mc{1}{l|}{4} & \mc{1}{l|}{3} & \mc{1}{l|}{3}\\\cline{2-4}
\mc{1}{l|}{$B$} & \mc{1}{l|}{3} & \mc{1}{l|}{0} & \mc{1}{l|}{2}\\\cline{2-4}
\mc{1}{l|}{$C$} & \mc{1}{l|}{4} & \mc{1}{l|}{2} & \mc{1}{l|}{4}\\\cline{2-4}
\end{tabular}\hfill
\end{equation}
\end{center}
Here we see that the set of Nash equilibria is connected and equal to 
\begin{equation}
  \label{eq:NE}
NE = \left\{((x,0,1-x),a), \; x \in [0,1] \right\} \cup \left\{(C,(y,0,1-y)), \; y \in [0,1] \right\}.
\end{equation}
\noindent Consequently, there is no reason to rule out the possibility that the limit set of $(v_n)_n$ is equal to the whole set of Nash equilibria. Therefore the payoff is not necessarily constant on $\mathcal{L}((v_n)_n)$.
\paragraph{Comments on the assumptions.}
 Condition~\eqref{hyp_param} assumes that the sequence $\beta_n^i$ increases to infinity as $o(\ln(n))$. This assumption is necessary due to the informational constraints on players. For instance, it is not possible to know {\em a priori} how far the variables $R_0^i$ are from the set of feasible payoffs.

\noindent As we will see later on, in the Markovian fictitious play procedure,  sequence $\beta_n^i$ is supposed to grow more slowly than $A^i \ln(n)$, where $A^i$ is smaller than a quantity which is related to the \emph{energy barrier} of the payoff matrix $G^i$ (see \cite{br10} for details). This quantity is in turn related to the {\em freezing schedule} of the simulated annealing algorithm (see for example  \cite{HolleyStroock88, hajek88} and references therein).

\noindent We believe it is worth reformulating our result in this spirit. However, this requires players to have more information about the game. For each $i \in \{1,2\}$,  suppose that the initial state variable $R_0^i$ belongs to the set of feasible payoffs. Also, let us define the quantity
\begin{equation*}
  \omega^i=\max_{s \in S^i}\max_{s^{-i}, r^{-i} \in S^{-i}}|G^i(s, s^{-i}) -G^i(s, r^{-i})|,
\end{equation*}
and let us assume that player $i$ can choose a positive constant $A^i$ such that
\begin{equation}
\begin{aligned}
\text{(i)} & \beta_n^i \longrightarrow +\infty, \\
\text{(ii)}&\beta_n^i \leq  A^i \ln(n),  \textup{ where } 2A^i\omega^i < 1.  \\
\end{aligned} \label{hyp_param'} \tag{$H'$}
\end{equation}
Then, we have the following version of our main result.
\begin{theorem}
  Under Assumption~\eqref{hyp_param'}, conclusions of Theorem~\ref{th:main} hold.
\end{theorem}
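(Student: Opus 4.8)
The plan is to re-run the proof of Theorem~\ref{th:main} essentially verbatim, after isolating the precise way in which assumption \eqref{hyp_param} is used. That proof identifies $(v_n)_n$ as a perturbed solution (asymptotic pseudo-trajectory) of the Best-Response dynamics \eqref{BR}, and then reads off its limit set from the attractors of \eqref{BR} in the zero-sum, potential and $2\times n$ cases. The growth rate of $\beta_n^i$ enters this argument only through a short list of quantitative estimates: (a) the effective step sizes must vanish, which reduces to $n\pi_n^i(s)\to+\infty$; (b) the martingale noise produced by \eqref{eq:ur} must be summable; and (c) the chains $M_n^i$ defined by \eqref{strategy} must relax to $\pi_n^i$ on a timescale short compared with the $1/n$ learning horizon, so that the interpolated process genuinely tracks \eqref{BR}. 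All three are governed by a single object, a lower bound on the spectral gap of $M_n^i$, and everything downstream of them is insensitive to how fast $\beta_n^i$ diverges. The whole task is thus to recover (a)--(c) from \eqref{hyp_param'}.

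First I would use that $R_0^i$ is a feasible payoff vector and that \eqref{eq:ur} replaces $R_n^i(s)$ by a convex combination of its previous value and a realized payoff; hence every component $R_n^i(s)$ stays in the feasible payoff range for all $n$, and the Metropolis landscape of $M_n^i$ is confined once and for all. The crucial claim is then that the energy barrier $H^i$ governing the spectral gap of $M_n^i$ is bounded by $2\omega^i$ — the maximal swing of a payoff difference $R_n^i(s)-R_n^i(r)$ as the opponent's empirical play $v_n^{-i}$ drifts. Granting this, the simulated-annealing estimate together with $\beta_n^i\le A^i\ln(n)$ and $H^i\le 2\omega^i$ gives
\begin{equation*}
\mathrm{gap}(M_n^i)\;\gtrsim\; e^{-\beta_n^i H^i}\;\ge\; n^{-A^i H^i}\;\ge\; n^{-2A^i\omega^i},
\end{equation*}
so that the relaxation time is $\lesssim n^{2A^i\omega^i}$; combined with the explicit formula \eqref{mesinv} and the boundedness of $R_n^i$, this controls the step-size decay (a) and the noise summability (b). Since \eqref{hyp_param'} imposes $2A^i\omega^i<1$, the relaxation time is $o(n)$, which is exactly the adiabatic regime required for (c). Note that under \eqref{hyp_param} these conclusions hold a fortiori: $\beta_n^i=o(\ln n)$ makes the relaxation time $n^{o(1)}$ for \emph{any} bounded barrier, so feasibility of $R_0^i$ is not even needed there. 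In this sense \eqref{hyp_param'} trades the vanishing of $A_n^i$ for the explicit margin $2A^i\omega^i<1$ unlocked by the feasibility of $R_0^i$.

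With (a)--(c) re-established, the remainder of the proof of Theorem~\ref{th:main} transfers word for word and the three conclusions follow. I expect the crux to be the barrier claim $H^i\le 2\omega^i$. The delicate point is that a \emph{static} landscape $R_n^i=G^i(\cdot,v_n^{-i})$ may well possess spurious local maxima whose depth exceeds $2\omega^i$, so a naive Hajek-type bound on a frozen landscape would force a much smaller admissible $A^i$. One must instead exploit the \emph{co-evolution} of the landscape with $v_n^{-i}$: as $v_n^{-i}$ moves along \eqref{BR}, each difference $R_n^i(s)-R_n^i(r)$ stays within a window of width at most $2\omega^i$, and it is this switching barrier — rather than the static well depths — that controls the coupled learning dynamics. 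Making this precise, uniformly in $n$ and in the presence of the slow drift of $R_n^i$, is where the real work concentrates, and it is exactly the freezing-schedule estimate alluded to after \eqref{hyp_param'}.
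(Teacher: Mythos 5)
Your high-level plan is the same as the paper's: the paper's proof of this theorem is, verbatim, ``rerun the proof of Theorem~\ref{th:main} under \eqref{hyp_param'}'', tracking the explicit constants through the appendix. The genuine gap is the step you yourself flag as the crux and then leave open: the claim that the energy barrier of the chains $M_n^i$ is bounded by $2\omega^i$. That claim is false, and the ``co-evolution'' heuristic cannot repair it. Take player 1 with actions $A,B,C$, exploration matrix as in \eqref{eq:exploration} (a line graph), and payoffs $G^1(A,\cdot)\equiv 10$, $G^1(B,\cdot)\equiv 0$, $G^1(C,\cdot)\equiv 11$, while $G^2\equiv 0$; this is a potential game in which every Nash equilibrium has player 1 playing $C$. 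Here $\omega^1=0$, so \eqref{hyp_param'} places no constraint on $A^1$; take $\beta_n^1=\ln(n)$. Feasibility of $R_0^1$ forces $R_n^1\equiv(10,0,11)$ for all $n$: the landscape never moves, every difference $R_n^1(s)-R_n^1(r)$ stays in a window of width $2\omega^1=0$, and yet the barrier separating the local maximum $A$ from the global maximum $C$ equals $10$. By \eqref{strategy}, the probability of leaving $A$ at stage $n$ is $\tfrac12 e^{-10\beta_n^1}=\tfrac12 n^{-10}$, so the spectral gap is of order $n^{-10}$, not $n^{-2A^1\omega^1}=n^{0}$, and the relaxation time is of order $n^{10}$, not $o(n)$. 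Worse, since $\sum_n n^{-10}<\infty$, with positive probability player 1 reaches $A$ and never leaves it, so $v_n^1\to\delta_A$, which is neither a Nash equilibrium nor contained in any ICT set of \eqref{BR}. Hence your items (a)--(c) fail, and the failure is not a technical lacuna you could fill later: the conclusion cannot be derived from $2A^i\omega^i<1$ plus feasibility alone. What actually governs both estimates in the paper's machinery is the oscillation of $R_n^i$ across player $i$'s \emph{own} actions --- the constant $2K^i$ of Lemma~\ref{lm:R}, which under feasibility is of the order of the total payoff spread (in this example $K^1=11$, and requiring $2A^1K^1<1$ would rule out the pathology, exactly as Bena\"im--Raimond's barrier condition does). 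The opponent-oscillation $\omega^i$ does not dominate that quantity, which is why your central claim cannot be proved.

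Two further points, independent of the counterexample. First, you reduce (a)--(c) to ``a single object, a lower bound on the spectral gap of $M_n^i$''. The paper needs two distinct estimates: the Holley--Stroock bound on the gap (Lemma~\ref{lemma:spectral}), which controls $|Q_n^i|$, and a lower bound on $\min_s\pi_n^i(s)$ obtained from the Gibbs formula \eqref{mesinv} (Lemma~\ref{stabilite}), which controls the step sizes $\gamma_n^i(s)$ and the prefactors $1/\pi_n^i(s)$ in the noise terms of Propositions~\ref{bruit} and~\ref{bruit2}. These are not interchangeable: with two actions and payoffs $G^1(A,\cdot)\equiv 0$, $G^1(B,\cdot)\equiv 100$ (again $\omega^1=0$), there is no barrier and the gap is bounded below, yet $\min_s\pi_n^1(s)\sim n^{-100A^1}$, which by itself ruins the vanishing of the effective step sizes and the noise bounds. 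Second, ``relaxation time $o(n)$'' is the wrong threshold: Proposition~\ref{prop:bruit} rests on an exponential martingale inequality and requires $|\varepsilon_n|\,|Q_n^i|\le n^{a}$ with $a<1/2$; exponents between $1/2$ and $1$ are not good enough, which is precisely why the appendix tracks the products of the various exponents coming from $K^iA_n^i$ (respectively $K^iA^i$ under \eqref{hyp_param'}) rather than relying on a mixing-time heuristic.
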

\noindent The proof of this result runs along the same lines as the proof of Theorem~\ref{th:main}, and is therefore omitted.

\subsection{Examples}
\noindent The following simple examples show the scope of our main result. In every case presented in this section, we performed a maximum of $5 \times 10^{5}$ iterations. 
% We used \text{[v1,v2,P1,P2,g]=UMFP(1000000,100);} $\beta=0.75$. presque partout
\paragraph{Blind-restricted RSP.}  Consider the  Rock-Scissor-Paper game defined by the payoff matrix $G^1$: 
\vspace{-.1cm}
  \begin{equation}
    \label{rsp} \tag{$RSP$}
% [inline block 0: 7 envs, 80140 chars -> data_tex | \begin{tabular}{cccc}  & $R$ & $S$ & $P$ \\\cline{2-4}...]

}
%%% Close g_RSP
\end{center}
\caption[lineheight]{\small{At the top, a realization of $v_n$ ($v_n^1$ on the left, $v_n^2$ on the right). At the bottom, $g_n^1$.}}
\label{rsp_figure}
\end{figure}
 
\paragraph{$3\times 3$ Potential game.} Consider the potential game with payoff matrix $G'$ and potential $\Phi'$ (see \eqref{p_game}).  We assume that players' exploration matrices are also given by \eqref{eq:exploration}. Therefore the graph representing the restriction of players is given by Figure~\ref{fig1}, if $R,S$ and $P$ are replaced by $A,B$ and $C$, respectively.

\noindent Figure~\ref{potential_figure} shows a realization of our procedure for the game~\eqref{p_game}. On the left, we plot the evolution of $v_n^1$. On the right, we present the corresponding trajectory of $\overline{\Phi}'_n = \frac{1}{n} \sum_{m=1}^n \Phi'(s_m^1,s_m^2)$, the average value of the potential $\Phi'$ along the realization of $(s_n^1,s_n^2)_n$. Note that our results do not stipulate that $(v_n)_n$ converges, and that our simulation tend towards non-convergence of $v_n^1$. 
%Since our simulations tends towards non-convergence of $v_n^1$, 
We choose not to display $v_n^2$ here (which seems to converge to the action $a$).

\begin{figure}[h!]
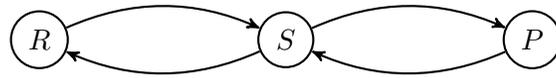

\begin{center}
   \subfloat{
% [inline block 1: 6 envs, 42042 chars -> data_tex | \begin{tikzpicture}[scale=0.5] \draw (10.830,0.368)--(6.204,8.381)--(1.577,0.368)--(6.204,0.368)--(7.746,0.368)%...]

\caption{\small{Graph representing players' restrictions for the game \eqref{coordination}.}}
\end{figure}

\noindent  Note that, even if the center action $C$ is bad for both players, the restrictions force them to play $C$ every time they switch to another action. 

\noindent In Figure~\ref{fig_coordination}, on the left, we present a realization where $(v_n^1,v_n^2)$ converges to the NE $(B,B)$. On the right, a trajectory where $(v_n^1,v_n^2)$ converges to the NE $(E,E)$ is displayed. Note that, in both cases, the average realized payoff $\overline g_n$ converges to the payoff of the corresponding equilibrium. For simplicity, we only plot the component that converges to one for the first player. This is consistent with the recent finding that the four strict Nash equilibria have a positive probability of being the limit of the random process $(v_n)_n$ (see \cite{FauRot10} for details).
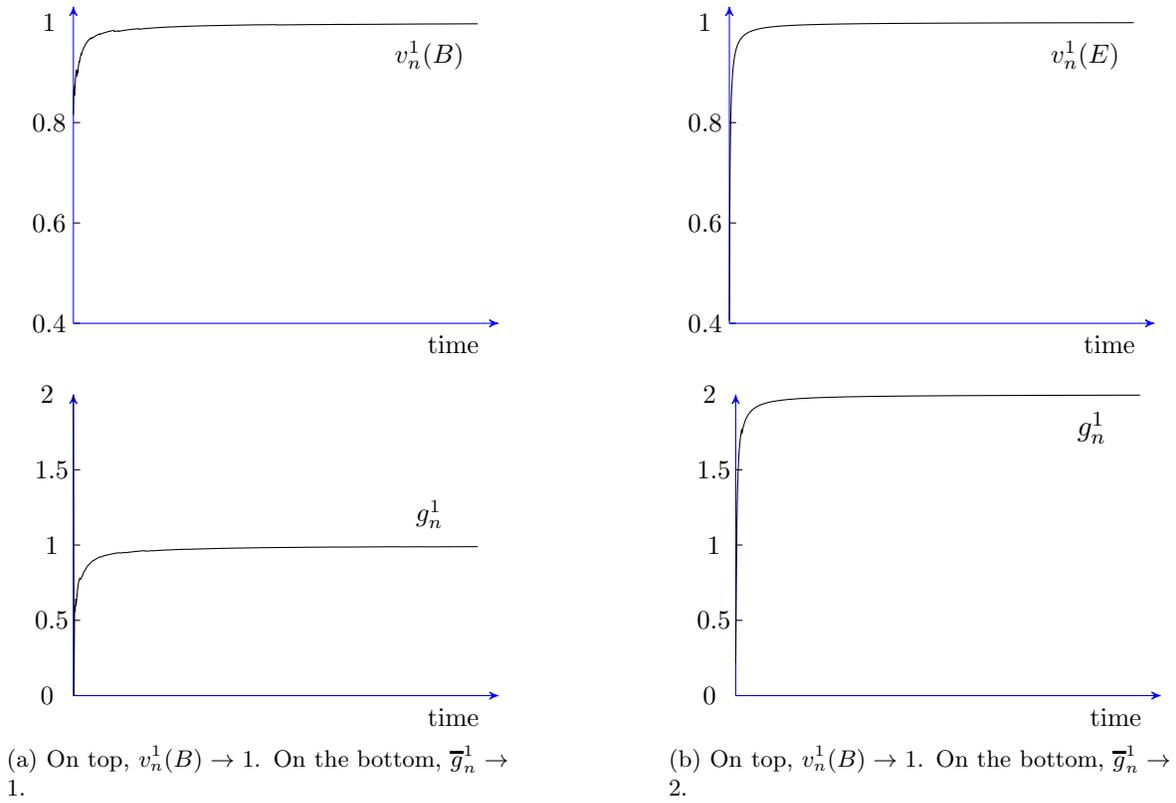
\begin{figure}[h!]
\begin{center}
   \subfloat{
\begin{tikzpicture}[scale=0.5]
\draw (1.196,0.368)--(1.376,0.368);
\node at (0.55,0.368) {\small 0.4};
\draw (1.196,3.039)--(1.376,3.039);
\node at (0.55,3.039) {\small 0.6};
\draw (1.196,5.710)--(1.376,5.710);
\node at (0.55,5.710) {\small 0.8};
\node at (0.55,8.381) {\small 1};

\node at (11.3,-0.2) {\small{time}};
\draw[->,>=stealth', color=blue](1.196,0.368)-- (1.196,8.8);
\draw[->,>=stealth', color=blue](1.196,0.368)--(12.5,0.368);
\node at (10.663,7.5) {\small {$v_n^1(B)$}};
\draw (1.196,5.910)--(1.205,6.556)--(1.214,6.611)--(1.223,6.431)--(1.232,6.578)%
  --(1.241,6.664)--(1.250,6.879)--(1.259,6.956)--(1.268,7.099)--(1.277,7.094)--(1.286,6.945)%
  --(1.295,7.056)--(1.304,7.084)--(1.313,7.054)--(1.322,7.137)--(1.330,7.210)--(1.339,7.276)%
  --(1.348,7.334)--(1.357,7.326)--(1.366,7.376)--(1.375,7.422)--(1.384,7.464)--(1.393,7.502)%
  --(1.402,7.537)--(1.411,7.523)--(1.420,7.555)--(1.429,7.584)--(1.438,7.612)--(1.447,7.638)%
  --(1.456,7.662)--(1.465,7.684)--(1.474,7.705)--(1.483,7.725)--(1.492,7.740)--(1.501,7.758)%
  --(1.510,7.775)--(1.519,7.791)--(1.528,7.806)--(1.537,7.820)--(1.546,7.834)--(1.555,7.847)%
  --(1.564,7.859)--(1.573,7.871)--(1.582,7.882)--(1.591,7.893)--(1.599,7.904)--(1.608,7.914)%
  --(1.617,7.923)--(1.626,7.932)--(1.635,7.941)--(1.644,7.950)--(1.653,7.938)--(1.662,7.946)%
  --(1.671,7.954)--(1.680,7.961)--(1.689,7.969)--(1.698,7.957)--(1.707,7.965)--(1.716,7.971)%
  --(1.725,7.978)--(1.734,7.985)--(1.743,7.991)--(1.752,7.997)--(1.761,8.003)--(1.770,8.009)%
  --(1.779,8.014)--(1.788,8.018)--(1.797,8.023)--(1.806,8.028)--(1.815,8.033)--(1.824,8.038)%
  --(1.833,8.043)--(1.842,8.047)--(1.851,8.052)--(1.860,8.056)--(1.868,8.060)--(1.877,8.064)%
  --(1.886,8.068)--(1.895,8.072)--(1.904,8.058)--(1.913,8.062)--(1.922,8.066)--(1.931,8.069)%
  --(1.940,8.073)--(1.949,8.077)--(1.958,8.080)--(1.967,8.084)--(1.976,8.087)--(1.985,8.090)%
  --(1.994,8.093)--(2.003,8.096)--(2.012,8.100)--(2.021,8.103)--(2.030,8.105)--(2.039,8.107)%
  --(2.048,8.110)--(2.057,8.113)--(2.066,8.115)--(2.075,8.118)--(2.084,8.121)--(2.093,8.123)%
  --(2.102,8.126)--(2.111,8.128)--(2.120,8.130)--(2.129,8.133)--(2.137,8.135)--(2.146,8.137)%
  --(2.155,8.140)--(2.164,8.142)--(2.173,8.144)--(2.182,8.146)--(2.191,8.148)--(2.200,8.150)%
  --(2.209,8.152)--(2.218,8.154)--(2.227,8.156)--(2.236,8.158)--(2.245,8.160)--(2.254,8.162)%
  --(2.263,8.164)--(2.272,8.165)--(2.281,8.136)--(2.290,8.132)--(2.299,8.134)--(2.308,8.136)%
  --(2.317,8.138)--(2.326,8.140)--(2.335,8.142)--(2.344,8.144)--(2.353,8.146)--(2.362,8.147)%
  --(2.371,8.149)--(2.380,8.151)--(2.389,8.152)--(2.398,8.137)--(2.406,8.139)--(2.415,8.141)%
  --(2.424,8.143)--(2.433,8.144)--(2.442,8.146)--(2.451,8.148)--(2.460,8.149)--(2.469,8.151)%
  --(2.478,8.153)--(2.487,8.154)--(2.496,8.156)--(2.505,8.157)--(2.514,8.159)--(2.523,8.160)%
  --(2.532,8.162)--(2.541,8.163)--(2.550,8.165)--(2.559,8.166)--(2.568,8.167)--(2.577,8.169)%
  --(2.586,8.170)--(2.595,8.171)--(2.604,8.173)--(2.613,8.174)--(2.622,8.175)--(2.631,8.177)%
  --(2.640,8.178)--(2.649,8.179)--(2.658,8.180)--(2.667,8.181)--(2.675,8.183)--(2.684,8.184)%
  --(2.693,8.185)--(2.702,8.186)--(2.711,8.187)--(2.720,8.188)--(2.729,8.190)--(2.738,8.191)%
  --(2.747,8.191)--(2.756,8.192)--(2.765,8.193)--(2.774,8.194)--(2.783,8.195)--(2.792,8.196)%
  --(2.801,8.197)--(2.810,8.198)--(2.819,8.199)--(2.828,8.200)--(2.837,8.201)--(2.846,8.202)%
  --(2.855,8.203)--(2.864,8.204)--(2.873,8.205)--(2.882,8.206)--(2.891,8.207)--(2.900,8.208)%
  --(2.909,8.209)--(2.918,8.210)--(2.927,8.194)--(2.936,8.195)--(2.944,8.196)--(2.953,8.197)%
  --(2.962,8.198)--(2.971,8.199)--(2.980,8.200)--(2.989,8.201)--(2.998,8.201)--(3.007,8.202)%
  --(3.016,8.203)--(3.025,8.204)--(3.034,8.205)--(3.043,8.206)--(3.052,8.207)--(3.061,8.207)%
  --(3.070,8.208)--(3.079,8.209)--(3.088,8.210)--(3.097,8.211)--(3.106,8.211)--(3.115,8.212)%
  --(3.124,8.213)--(3.133,8.214)--(3.142,8.215)--(3.151,8.215)--(3.160,8.216)--(3.169,8.217)%
  --(3.178,8.218)--(3.187,8.218)--(3.196,8.219)--(3.205,8.220)--(3.213,8.220)--(3.222,8.221)%
  --(3.231,8.222)--(3.240,8.222)--(3.249,8.223)--(3.258,8.224)--(3.267,8.225)--(3.276,8.225)%
  --(3.285,8.226)--(3.294,8.227)--(3.303,8.227)--(3.312,8.228)--(3.321,8.228)--(3.330,8.229)%
  --(3.339,8.230)--(3.348,8.230)--(3.357,8.231)--(3.366,8.232)--(3.375,8.232)--(3.384,8.233)%
  --(3.393,8.233)--(3.402,8.234)--(3.411,8.235)--(3.420,8.235)--(3.429,8.236)--(3.438,8.236)%
  --(3.447,8.237)--(3.456,8.237)--(3.465,8.238)--(3.474,8.239)--(3.482,8.239)--(3.491,8.240)%
  --(3.500,8.240)--(3.509,8.241)--(3.518,8.241)--(3.527,8.242)--(3.536,8.242)--(3.545,8.243)%
  --(3.554,8.243)--(3.563,8.244)--(3.572,8.244)--(3.581,8.245)--(3.590,8.245)--(3.599,8.246)%
  --(3.608,8.246)--(3.617,8.247)--(3.626,8.247)--(3.635,8.248)--(3.644,8.248)--(3.653,8.249)%
  --(3.662,8.249)--(3.671,8.250)--(3.680,8.250)--(3.689,8.251)--(3.698,8.251)--(3.707,8.252)%
  --(3.716,8.252)--(3.725,8.253)--(3.734,8.253)--(3.743,8.254)--(3.751,8.252)--(3.760,8.253)%
  --(3.769,8.253)--(3.778,8.253)--(3.787,8.254)--(3.796,8.254)--(3.805,8.255)--(3.814,8.255)%
  --(3.823,8.256)--(3.832,8.256)--(3.841,8.256)--(3.850,8.257)--(3.859,8.257)--(3.868,8.258)%
  --(3.877,8.258)--(3.886,8.259)--(3.895,8.259)--(3.904,8.259)--(3.913,8.260)--(3.922,8.260)%
  --(3.931,8.261)--(3.940,8.261)--(3.949,8.261)--(3.958,8.262)--(3.967,8.262)--(3.976,8.262)%
  --(3.985,8.263)--(3.994,8.263)--(4.003,8.264)--(4.012,8.264)--(4.020,8.264)--(4.029,8.265)%
  --(4.038,8.265)--(4.047,8.265)--(4.056,8.266)--(4.065,8.266)--(4.074,8.266)--(4.083,8.267)%
  --(4.092,8.267)--(4.101,8.268)--(4.110,8.268)--(4.119,8.268)--(4.128,8.269)--(4.137,8.269)%
  --(4.146,8.269)--(4.155,8.270)--(4.164,8.270)--(4.173,8.270)--(4.182,8.271)--(4.191,8.271)%
  --(4.200,8.271)--(4.209,8.272)--(4.218,8.272)--(4.227,8.272)--(4.236,8.273)--(4.245,8.273)%
  --(4.254,8.273)--(4.263,8.273)--(4.272,8.274)--(4.281,8.274)--(4.289,8.274)--(4.298,8.275)%
  --(4.307,8.275)--(4.316,8.275)--(4.325,8.276)--(4.334,8.276)--(4.343,8.276)--(4.352,8.276)%
  --(4.361,8.277)--(4.370,8.277)--(4.379,8.277)--(4.388,8.278)--(4.397,8.277)--(4.406,8.277)%
  --(4.415,8.278)--(4.424,8.278)--(4.433,8.278)--(4.442,8.279)--(4.451,8.279)--(4.460,8.279)%
  --(4.469,8.279)--(4.478,8.280)--(4.487,8.280)--(4.496,8.280)--(4.505,8.281)--(4.514,8.281)%
  --(4.523,8.281)--(4.532,8.281)--(4.541,8.282)--(4.550,8.282)--(4.558,8.282)--(4.567,8.282)%
  --(4.576,8.283)--(4.585,8.283)--(4.594,8.283)--(4.603,8.283)--(4.612,8.284)--(4.621,8.284)%
  --(4.630,8.284)--(4.639,8.284)--(4.648,8.285)--(4.657,8.285)--(4.666,8.285)--(4.675,8.285)%
  --(4.684,8.286)--(4.693,8.286)--(4.702,8.286)--(4.711,8.286)--(4.720,8.287)--(4.729,8.287)%
  --(4.738,8.287)--(4.747,8.287)--(4.756,8.288)--(4.765,8.288)--(4.774,8.288)--(4.783,8.288)%
  --(4.792,8.289)--(4.801,8.289)--(4.810,8.289)--(4.819,8.289)--(4.827,8.289)--(4.836,8.290)%
  --(4.845,8.290)--(4.854,8.290)--(4.863,8.290)--(4.872,8.291)--(4.881,8.291)--(4.890,8.291)%
  --(4.899,8.291)--(4.908,8.291)--(4.917,8.290)--(4.926,8.290)--(4.935,8.290)--(4.944,8.290)%
  --(4.953,8.291)--(4.962,8.291)--(4.971,8.291)--(4.980,8.291)--(4.989,8.291)--(4.998,8.292)%
  --(5.007,8.292)--(5.016,8.292)--(5.025,8.292)--(5.034,8.292)--(5.043,8.293)--(5.052,8.293)%
  --(5.061,8.293)--(5.070,8.293)--(5.079,8.294)--(5.088,8.294)--(5.096,8.294)--(5.105,8.294)%
  --(5.114,8.294)--(5.123,8.294)--(5.132,8.295)--(5.141,8.295)--(5.150,8.295)--(5.159,8.295)%
  --(5.168,8.295)--(5.177,8.296)--(5.186,8.296)--(5.195,8.296)--(5.204,8.296)--(5.213,8.296)%
  --(5.222,8.297)--(5.231,8.297)--(5.240,8.297)--(5.249,8.297)--(5.258,8.297)--(5.267,8.298)%
  --(5.276,8.298)--(5.285,8.298)--(5.294,8.298)--(5.303,8.298)--(5.312,8.298)--(5.321,8.299)%
  --(5.330,8.299)--(5.339,8.299)--(5.348,8.299)--(5.357,8.299)--(5.365,8.299)--(5.374,8.300)%
  --(5.383,8.300)--(5.392,8.300)--(5.401,8.300)--(5.410,8.300)--(5.419,8.301)--(5.428,8.301)%
  --(5.437,8.301)--(5.446,8.301)--(5.455,8.301)--(5.464,8.301)--(5.473,8.302)--(5.482,8.302)%
  --(5.491,8.302)--(5.500,8.302)--(5.509,8.302)--(5.518,8.302)--(5.527,8.303)--(5.536,8.303)%
  --(5.545,8.303)--(5.554,8.303)--(5.563,8.303)--(5.572,8.303)--(5.581,8.303)--(5.590,8.304)%
  --(5.599,8.304)--(5.608,8.304)--(5.617,8.304)--(5.626,8.304)--(5.634,8.304)--(5.643,8.305)%
  --(5.652,8.305)--(5.661,8.305)--(5.670,8.305)--(5.679,8.305)--(5.688,8.305)--(5.697,8.305)%
  --(5.706,8.306)--(5.715,8.306)--(5.724,8.306)--(5.733,8.306)--(5.742,8.306)--(5.751,8.306)%
  --(5.760,8.307)--(5.769,8.307)--(5.778,8.307)--(5.787,8.307)--(5.796,8.307)--(5.805,8.307)%
  --(5.814,8.307)--(5.823,8.308)--(5.832,8.308)--(5.841,8.308)--(5.850,8.308)--(5.859,8.308)%
  --(5.868,8.308)--(5.877,8.308)--(5.886,8.309)--(5.895,8.309)--(5.903,8.309)--(5.912,8.309)%
  --(5.921,8.309)--(5.930,8.309)--(5.939,8.309)--(5.948,8.309)--(5.957,8.310)--(5.966,8.310)%
  --(5.975,8.310)--(5.984,8.310)--(5.993,8.310)--(6.002,8.310)--(6.011,8.310)--(6.020,8.311)%
  --(6.029,8.311)--(6.038,8.311)--(6.047,8.311)--(6.056,8.311)--(6.065,8.311)--(6.074,8.311)%
  --(6.083,8.311)--(6.092,8.312)--(6.101,8.312)--(6.110,8.312)--(6.119,8.312)--(6.128,8.312)%
  --(6.137,8.312)--(6.146,8.312)--(6.155,8.312)--(6.164,8.313)--(6.172,8.313)--(6.181,8.313)%
  --(6.190,8.313)--(6.199,8.313)--(6.208,8.311)--(6.217,8.311)--(6.226,8.311)--(6.235,8.312)%
  --(6.244,8.312)--(6.253,8.312)--(6.262,8.312)--(6.271,8.312)--(6.280,8.312)--(6.289,8.312)%
  --(6.298,8.312)--(6.307,8.313)--(6.316,8.313)--(6.325,8.313)--(6.334,8.312)--(6.343,8.312)%
  --(6.352,8.312)--(6.361,8.313)--(6.370,8.313)--(6.379,8.313)--(6.388,8.313)--(6.397,8.313)%
  --(6.406,8.313)--(6.415,8.313)--(6.424,8.313)--(6.433,8.314)--(6.441,8.314)--(6.450,8.314)%
  --(6.459,8.314)--(6.468,8.314)--(6.477,8.314)--(6.486,8.314)--(6.495,8.314)--(6.504,8.314)%
  --(6.513,8.315)--(6.522,8.315)--(6.531,8.315)--(6.540,8.315)--(6.549,8.315)--(6.558,8.315)%
  --(6.567,8.313)--(6.576,8.313)--(6.585,8.313)--(6.594,8.313)--(6.603,8.313)--(6.612,8.313)%
  --(6.621,8.313)--(6.630,8.313)--(6.639,8.308)--(6.648,8.308)--(6.657,8.308)--(6.666,8.308)%
  --(6.675,8.308)--(6.684,8.309)--(6.693,8.309)--(6.702,8.309)--(6.710,8.309)--(6.719,8.309)%
  --(6.728,8.309)--(6.737,8.309)--(6.746,8.309)--(6.755,8.310)--(6.764,8.310)--(6.773,8.310)%
  --(6.782,8.310)--(6.791,8.310)--(6.800,8.310)--(6.809,8.310)--(6.818,8.310)--(6.827,8.310)%
  --(6.836,8.311)--(6.845,8.311)--(6.854,8.311)--(6.863,8.311)--(6.872,8.311)--(6.881,8.311)%
  --(6.890,8.311)--(6.899,8.311)--(6.908,8.311)--(6.917,8.312)--(6.926,8.312)--(6.935,8.312)%
  --(6.944,8.312)--(6.953,8.312)--(6.962,8.312)--(6.971,8.312)--(6.979,8.312)--(6.988,8.312)%
  --(6.997,8.312)--(7.006,8.312)--(7.015,8.312)--(7.024,8.313)--(7.033,8.313)--(7.042,8.313)%
  --(7.051,8.313)--(7.060,8.313)--(7.069,8.313)--(7.078,8.313)--(7.087,8.313)--(7.096,8.313)%
  --(7.105,8.314)--(7.114,8.314)--(7.123,8.314)--(7.132,8.314)--(7.141,8.314)--(7.150,8.314)%
  --(7.159,8.314)--(7.168,8.314)--(7.177,8.314)--(7.186,8.314)--(7.195,8.315)--(7.204,8.315)%
  --(7.213,8.315)--(7.222,8.315)--(7.231,8.315)--(7.240,8.315)--(7.248,8.315)--(7.257,8.315)%
  --(7.266,8.315)--(7.275,8.315)--(7.284,8.315)--(7.293,8.316)--(7.302,8.316)--(7.311,8.316)%
  --(7.320,8.316)--(7.329,8.316)--(7.338,8.316)--(7.347,8.316)--(7.356,8.316)--(7.365,8.316)%
  --(7.374,8.316)--(7.383,8.317)--(7.392,8.317)--(7.401,8.317)--(7.410,8.317)--(7.419,8.317)%
  --(7.428,8.317)--(7.437,8.317)--(7.446,8.317)--(7.455,8.317)--(7.464,8.317)--(7.473,8.317)%
  --(7.482,8.317)--(7.491,8.317)--(7.500,8.317)--(7.509,8.317)--(7.517,8.318)--(7.526,8.318)%
  --(7.535,8.318)--(7.544,8.318)--(7.553,8.318)--(7.562,8.318)--(7.571,8.318)--(7.580,8.318)%
  --(7.589,8.318)--(7.598,8.318)--(7.607,8.318)--(7.616,8.318)--(7.625,8.318)--(7.634,8.318)%
  --(7.643,8.319)--(7.652,8.319)--(7.661,8.319)--(7.670,8.317)--(7.679,8.317)--(7.688,8.317)%
  --(7.697,8.317)--(7.706,8.317)--(7.715,8.317)--(7.724,8.318)--(7.733,8.318)--(7.742,8.318)%
  --(7.751,8.318)--(7.760,8.318)--(7.769,8.318)--(7.778,8.318)--(7.786,8.318)--(7.795,8.318)%
  --(7.804,8.318)--(7.813,8.318)--(7.822,8.318)--(7.831,8.319)--(7.840,8.319)--(7.849,8.319)%
  --(7.858,8.319)--(7.867,8.319)--(7.876,8.319)--(7.885,8.319)--(7.894,8.319)--(7.903,8.319)%
  --(7.912,8.319)--(7.921,8.319)--(7.930,8.319)--(7.939,8.320)--(7.948,8.320)--(7.957,8.320)%
  --(7.966,8.320)--(7.975,8.320)--(7.984,8.320)--(7.993,8.320)--(8.002,8.320)--(8.011,8.320)%
  --(8.020,8.320)--(8.029,8.320)--(8.038,8.320)--(8.047,8.321)--(8.055,8.321)--(8.064,8.321)%
  --(8.073,8.321)--(8.082,8.321)--(8.091,8.321)--(8.100,8.321)--(8.109,8.321)--(8.118,8.321)%
  --(8.127,8.321)--(8.136,8.321)--(8.145,8.321)--(8.154,8.321)--(8.163,8.322)--(8.172,8.322)%
  --(8.181,8.322)--(8.190,8.322)--(8.199,8.322)--(8.208,8.322)--(8.217,8.322)--(8.226,8.322)%
  --(8.235,8.322)--(8.244,8.322)--(8.253,8.322)--(8.262,8.322)--(8.271,8.322)--(8.280,8.322)%
  --(8.289,8.323)--(8.298,8.323)--(8.307,8.323)--(8.316,8.323)--(8.324,8.323)--(8.333,8.323)%
  --(8.342,8.323)--(8.351,8.323)--(8.360,8.323)--(8.369,8.323)--(8.378,8.323)--(8.387,8.323)%
  --(8.396,8.323)--(8.405,8.324)--(8.414,8.324)--(8.423,8.324)--(8.432,8.321)--(8.441,8.321)%
  --(8.450,8.321)--(8.459,8.321)--(8.468,8.321)--(8.477,8.321)--(8.486,8.322)--(8.495,8.322)%
  --(8.504,8.322)--(8.513,8.322)--(8.522,8.322)--(8.531,8.322)--(8.540,8.322)--(8.549,8.322)%
  --(8.558,8.322)--(8.567,8.322)--(8.576,8.322)--(8.585,8.322)--(8.593,8.322)--(8.602,8.322)%
  --(8.611,8.323)--(8.620,8.323)--(8.629,8.323)--(8.638,8.323)--(8.647,8.323)--(8.656,8.323)%
  --(8.665,8.323)--(8.674,8.323)--(8.683,8.323)--(8.692,8.323)--(8.701,8.323)--(8.710,8.323)%
  --(8.719,8.323)--(8.728,8.323)--(8.737,8.323)--(8.746,8.324)--(8.755,8.324)--(8.764,8.324)%
  --(8.773,8.324)--(8.782,8.324)--(8.791,8.324)--(8.800,8.324)--(8.809,8.324)--(8.818,8.324)%
  --(8.827,8.324)--(8.836,8.324)--(8.845,8.324)--(8.854,8.324)--(8.862,8.324)--(8.871,8.324)%
  --(8.880,8.324)--(8.889,8.324)--(8.898,8.324)--(8.907,8.324)--(8.916,8.324)--(8.925,8.324)%
  --(8.934,8.324)--(8.943,8.324)--(8.952,8.324)--(8.961,8.324)--(8.970,8.324)--(8.979,8.325)%
  --(8.988,8.325)--(8.997,8.325)--(9.006,8.325)--(9.015,8.325)--(9.024,8.325)--(9.033,8.325)%
  --(9.042,8.325)--(9.051,8.325)--(9.060,8.325)--(9.069,8.325)--(9.078,8.325)--(9.087,8.325)%
  --(9.096,8.325)--(9.105,8.325)--(9.114,8.325)--(9.123,8.325)--(9.131,8.325)--(9.140,8.325)%
  --(9.149,8.325)--(9.158,8.325)--(9.167,8.325)--(9.176,8.325)--(9.185,8.326)--(9.194,8.326)%
  --(9.203,8.326)--(9.212,8.326)--(9.221,8.326)--(9.230,8.326)--(9.239,8.326)--(9.248,8.326)%
  --(9.257,8.326)--(9.266,8.326)--(9.275,8.326)--(9.284,8.326)--(9.293,8.326)--(9.302,8.326)%
  --(9.311,8.326)--(9.320,8.326)--(9.329,8.326)--(9.338,8.327)--(9.347,8.327)--(9.356,8.327)%
  --(9.365,8.327)--(9.374,8.327)--(9.383,8.327)--(9.392,8.327)--(9.400,8.327)--(9.409,8.327)%
  --(9.418,8.327)--(9.427,8.327)--(9.436,8.327)--(9.445,8.327)--(9.454,8.327)--(9.463,8.327)%
  --(9.472,8.327)--(9.481,8.327)--(9.490,8.328)--(9.499,8.328)--(9.508,8.328)--(9.517,8.328)%
  --(9.526,8.328)--(9.535,8.328)--(9.544,8.328)--(9.553,8.328)--(9.562,8.328)--(9.571,8.328)%
  --(9.580,8.328)--(9.589,8.328)--(9.598,8.328)--(9.607,8.328)--(9.616,8.328)--(9.625,8.328)%
  --(9.634,8.328)--(9.643,8.329)--(9.652,8.329)--(9.661,8.329)--(9.669,8.329)--(9.678,8.329)%
  --(9.687,8.329)--(9.696,8.329)--(9.705,8.329)--(9.714,8.329)--(9.723,8.329)--(9.732,8.329)%
  --(9.741,8.329)--(9.750,8.329)--(9.759,8.329)--(9.768,8.329)--(9.777,8.329)--(9.786,8.329)%
  --(9.795,8.329)--(9.804,8.329)--(9.813,8.330)--(9.822,8.330)--(9.831,8.330)--(9.840,8.330)%
  --(9.849,8.330)--(9.858,8.330)--(9.867,8.330)--(9.876,8.330)--(9.885,8.330)--(9.894,8.330)%
  --(9.903,8.330)--(9.912,8.330)--(9.921,8.330)--(9.930,8.330)--(9.938,8.330)--(9.947,8.330)%
  --(9.956,8.330)--(9.965,8.330)--(9.974,8.330)--(9.983,8.331)--(9.992,8.331)--(10.001,8.331)%
  --(10.010,8.331)--(10.019,8.331)--(10.028,8.331)--(10.037,8.331)--(10.046,8.331)--(10.055,8.331)%
  --(10.064,8.331)--(10.073,8.331)--(10.082,8.331)--(10.091,8.331)--(10.100,8.331)--(10.109,8.331)%
  --(10.118,8.331)--(10.127,8.331)--(10.136,8.331)--(10.145,8.331)--(10.154,8.332)--(10.163,8.332)%
  --(10.172,8.332)--(10.181,8.332)--(10.190,8.332)--(10.199,8.332)--(10.207,8.332)--(10.216,8.332)%
  --(10.225,8.332)--(10.234,8.332)--(10.243,8.332)--(10.252,8.332)--(10.261,8.332)--(10.270,8.332)%
  --(10.279,8.332)--(10.288,8.332)--(10.297,8.332)--(10.306,8.332)--(10.315,8.332)--(10.324,8.332)%
  --(10.333,8.332)--(10.342,8.333)--(10.351,8.333)--(10.360,8.333)--(10.369,8.333)--(10.378,8.333)%
  --(10.387,8.333)--(10.396,8.333)--(10.405,8.333)--(10.414,8.333)--(10.423,8.333)--(10.432,8.333)%
  --(10.441,8.333)--(10.450,8.333)--(10.459,8.333)--(10.468,8.333)--(10.476,8.333)--(10.485,8.333)%
  --(10.494,8.333)--(10.503,8.333)--(10.512,8.333)--(10.521,8.333)--(10.530,8.333)--(10.539,8.334)%
  --(10.548,8.334)--(10.557,8.334)--(10.566,8.334)--(10.575,8.334)--(10.584,8.334)--(10.593,8.334)%
  --(10.602,8.334)--(10.611,8.334)--(10.620,8.334)--(10.629,8.334)--(10.638,8.334)--(10.647,8.334)%
  --(10.656,8.334)--(10.665,8.334)--(10.674,8.334)--(10.683,8.334)--(10.692,8.334)--(10.701,8.334)%
  --(10.710,8.334)--(10.719,8.334)--(10.728,8.334)--(10.737,8.335)--(10.745,8.335)--(10.754,8.335)%
  --(10.763,8.335)--(10.772,8.335)--(10.781,8.335)--(10.790,8.335)--(10.799,8.335)--(10.808,8.334)%
  --(10.817,8.334)--(10.826,8.334)--(10.835,8.334)--(10.844,8.334)--(10.853,8.334)--(10.862,8.334)%
  --(10.871,8.334)--(10.880,8.334)--(10.889,8.334)--(10.898,8.334)--(10.907,8.334)--(10.916,8.334)%
  --(10.925,8.334)--(10.934,8.334)--(10.943,8.335)--(10.952,8.335)--(10.961,8.335)--(10.970,8.335)%
  --(10.979,8.335)--(10.988,8.335)--(10.997,8.335)--(11.006,8.335)--(11.014,8.335)--(11.023,8.335)%
  --(11.032,8.335)--(11.041,8.335)--(11.050,8.335)--(11.059,8.335)--(11.068,8.335)--(11.077,8.335)%
  --(11.086,8.335)--(11.095,8.335)--(11.104,8.335)--(11.113,8.335)--(11.122,8.335)--(11.131,8.335)%
  --(11.140,8.335)--(11.149,8.335)--(11.158,8.336)--(11.167,8.336)--(11.176,8.336)--(11.185,8.336)%
  --(11.194,8.336)--(11.203,8.336)--(11.212,8.336)--(11.221,8.336)--(11.230,8.336)--(11.239,8.336)%
  --(11.248,8.336)--(11.257,8.336)--(11.266,8.336)--(11.275,8.336)--(11.283,8.336)--(11.292,8.336)%
  --(11.301,8.336)--(11.310,8.336)--(11.319,8.336)--(11.328,8.336)--(11.337,8.336)--(11.346,8.336)%
  --(11.355,8.336)--(11.364,8.336)--(11.373,8.336)--(11.382,8.337)--(11.391,8.337)--(11.400,8.337)%
  --(11.409,8.337)--(11.418,8.337)--(11.427,8.337)--(11.436,8.337)--(11.445,8.337)--(11.454,8.337)%
  --(11.463,8.337)--(11.472,8.337)--(11.481,8.337)--(11.490,8.337)--(11.499,8.337)--(11.508,8.337)%
  --(11.517,8.337)--(11.526,8.337)--(11.535,8.337)--(11.544,8.337)--(11.552,8.337)--(11.561,8.337)%
  --(11.570,8.337)--(11.579,8.337)--(11.588,8.337)--(11.597,8.337)--(11.606,8.337)--(11.615,8.338)%
  --(11.624,8.338)--(11.633,8.338)--(11.642,8.338)--(11.651,8.338)--(11.660,8.338)--(11.669,8.338)%
  --(11.678,8.338)--(11.687,8.338)--(11.696,8.338)--(11.705,8.338)--(11.714,8.338)--(11.723,8.338)%
  --(11.732,8.338)--(11.741,8.338)--(11.750,8.338)--(11.759,8.338)--(11.768,8.338)--(11.777,8.338)%
  --(11.786,8.338)--(11.795,8.338)--(11.804,8.338)--(11.813,8.338)--(11.821,8.338)--(11.830,8.338)%
  --(11.839,8.338)--(11.848,8.338)--(11.857,8.338)--(11.866,8.338)--(11.875,8.338)--(11.884,8.338)%
  --(11.893,8.339)--(11.902,8.339)--(11.911,8.339)--(11.920,8.339)--(11.929,8.339)--(11.938,8.339)%
  --(11.947,8.339);
\end{tikzpicture}
}
%% Close v1_to_B....
\hspace{2cm}
\subfloat{  
\begin{tikzpicture}[scale=0.5]
\draw (1.196,0.368)--(1.376,0.368);
\node at (0.55,0.368) {\small 0.4};
\draw (1.196,3.039)--(1.376,3.039);
\node at (0.55,3.039) {\small 0.6};
\draw (1.196,5.710)--(1.376,5.710);
\node at (0.55,5.710) {\small 0.8};
\node at (0.55,8.381) {\small 1};
\node at (11.3,-0.2) {\small{time}};
\draw[->,>=stealth', color=blue](1.196,0.368)-- (1.196,8.8);
\draw[->,>=stealth', color=blue](1.196,0.368)--(12.5,0.368);
\node at (10.663,7.5) {\small{$v_n^1(E)$}};
\draw (1.196,0.435)--(1.205,3.084)--(1.214,4.408)--(1.223,5.203)--(1.232,5.732)%
  --(1.241,6.111)--(1.250,6.394)--(1.259,6.615)--(1.268,6.792)--(1.277,6.936)--(1.286,7.057)%
  --(1.295,7.159)--(1.304,7.246)--(1.313,7.322)--(1.322,7.388)--(1.330,7.446)--(1.339,7.498)%
  --(1.348,7.545)--(1.357,7.586)--(1.366,7.624)--(1.375,7.659)--(1.384,7.690)--(1.393,7.719)%
  --(1.402,7.745)--(1.411,7.770)--(1.420,7.792)--(1.429,7.813)--(1.438,7.833)--(1.447,7.851)%
  --(1.456,7.868)--(1.465,7.884)--(1.474,7.899)--(1.483,7.914)--(1.492,7.927)--(1.501,7.940)%
  --(1.510,7.951)--(1.519,7.963)--(1.528,7.974)--(1.537,7.984)--(1.546,7.993)--(1.555,8.003)%
  --(1.564,8.011)--(1.573,8.020)--(1.582,8.028)--(1.591,8.036)--(1.599,8.043)--(1.608,8.050)%
  --(1.617,8.057)--(1.626,8.063)--(1.635,8.069)--(1.644,8.075)--(1.653,8.081)--(1.662,8.087)%
  --(1.671,8.092)--(1.680,8.097)--(1.689,8.102)--(1.698,8.107)--(1.707,8.112)--(1.716,8.116)%
  --(1.725,8.120)--(1.734,8.125)--(1.743,8.129)--(1.752,8.133)--(1.761,8.137)--(1.770,8.140)%
  --(1.779,8.144)--(1.788,8.147)--(1.797,8.151)--(1.806,8.154)--(1.815,8.157)--(1.824,8.160)%
  --(1.833,8.163)--(1.842,8.166)--(1.851,8.169)--(1.860,8.172)--(1.868,8.175)--(1.877,8.177)%
  --(1.886,8.180)--(1.895,8.182)--(1.904,8.185)--(1.913,8.187)--(1.922,8.190)--(1.931,8.192)%
  --(1.940,8.194)--(1.949,8.196)--(1.958,8.198)--(1.967,8.200)--(1.976,8.202)--(1.985,8.204)%
  --(1.994,8.206)--(2.003,8.208)--(2.012,8.210)--(2.021,8.212)--(2.030,8.214)--(2.039,8.215)%
  --(2.048,8.217)--(2.057,8.219)--(2.066,8.220)--(2.075,8.222)--(2.084,8.224)--(2.093,8.225)%
  --(2.102,8.227)--(2.111,8.228)--(2.120,8.230)--(2.129,8.231)--(2.137,8.232)--(2.146,8.234)%
  --(2.155,8.235)--(2.164,8.237)--(2.173,8.238)--(2.182,8.239)--(2.191,8.240)--(2.200,8.242)%
  --(2.209,8.243)--(2.218,8.244)--(2.227,8.245)--(2.236,8.246)--(2.245,8.247)--(2.254,8.249)%
  --(2.263,8.250)--(2.272,8.251)--(2.281,8.252)--(2.290,8.253)--(2.299,8.254)--(2.308,8.255)%
  --(2.317,8.256)--(2.326,8.257)--(2.335,8.258)--(2.344,8.259)--(2.353,8.260)--(2.362,8.261)%
  --(2.371,8.262)--(2.380,8.262)--(2.389,8.263)--(2.398,8.264)--(2.406,8.265)--(2.415,8.266)%
  --(2.424,8.267)--(2.433,8.267)--(2.442,8.268)--(2.451,8.269)--(2.460,8.270)--(2.469,8.271)%
  --(2.478,8.271)--(2.487,8.272)--(2.496,8.273)--(2.505,8.274)--(2.514,8.274)--(2.523,8.275)%
  --(2.532,8.276)--(2.541,8.276)--(2.550,8.277)--(2.559,8.278)--(2.568,8.278)--(2.577,8.279)%
  --(2.586,8.280)--(2.595,8.280)--(2.604,8.281)--(2.613,8.282)--(2.622,8.282)--(2.631,8.283)%
  --(2.640,8.284)--(2.649,8.284)--(2.658,8.285)--(2.667,8.285)--(2.675,8.286)--(2.684,8.286)%
  --(2.693,8.287)--(2.702,8.288)--(2.711,8.288)--(2.720,8.289)--(2.729,8.289)--(2.738,8.290)%
  --(2.747,8.290)--(2.756,8.291)--(2.765,8.291)--(2.774,8.292)--(2.783,8.292)--(2.792,8.293)%
  --(2.801,8.293)--(2.810,8.294)--(2.819,8.294)--(2.828,8.295)--(2.837,8.295)--(2.846,8.296)%
  --(2.855,8.296)--(2.864,8.296)--(2.873,8.297)--(2.882,8.297)--(2.891,8.298)--(2.900,8.298)%
  --(2.909,8.299)--(2.918,8.299)--(2.927,8.300)--(2.936,8.300)--(2.944,8.300)--(2.953,8.301)%
  --(2.962,8.301)--(2.971,8.302)--(2.980,8.302)--(2.989,8.302)--(2.998,8.303)--(3.007,8.303)%
  --(3.016,8.303)--(3.025,8.304)--(3.034,8.304)--(3.043,8.305)--(3.052,8.305)--(3.061,8.305)%
  --(3.070,8.306)--(3.079,8.306)--(3.088,8.306)--(3.097,8.307)--(3.106,8.307)--(3.115,8.307)%
  --(3.124,8.308)--(3.133,8.308)--(3.142,8.308)--(3.151,8.309)--(3.160,8.309)--(3.169,8.309)%
  --(3.178,8.310)--(3.187,8.310)--(3.196,8.310)--(3.205,8.311)--(3.213,8.311)--(3.222,8.311)%
  --(3.231,8.312)--(3.240,8.312)--(3.249,8.312)--(3.258,8.312)--(3.267,8.313)--(3.276,8.313)%
  --(3.285,8.313)--(3.294,8.314)--(3.303,8.314)--(3.312,8.314)--(3.321,8.315)--(3.330,8.315)%
  --(3.339,8.315)--(3.348,8.315)--(3.357,8.316)--(3.366,8.316)--(3.375,8.316)--(3.384,8.316)%
  --(3.393,8.317)--(3.402,8.317)--(3.411,8.317)--(3.420,8.317)--(3.429,8.318)--(3.438,8.318)%
  --(3.447,8.318)--(3.456,8.318)--(3.465,8.319)--(3.474,8.319)--(3.482,8.319)--(3.491,8.319)%
  --(3.500,8.320)--(3.509,8.320)--(3.518,8.320)--(3.527,8.320)--(3.536,8.321)--(3.545,8.321)%
  --(3.554,8.321)--(3.563,8.321)--(3.572,8.321)--(3.581,8.322)--(3.590,8.322)--(3.599,8.322)%
  --(3.608,8.322)--(3.617,8.323)--(3.626,8.323)--(3.635,8.323)--(3.644,8.323)--(3.653,8.323)%
  --(3.662,8.324)--(3.671,8.324)--(3.680,8.324)--(3.689,8.324)--(3.698,8.324)--(3.707,8.325)%
  --(3.716,8.325)--(3.725,8.325)--(3.734,8.325)--(3.743,8.325)--(3.751,8.326)--(3.760,8.326)%
  --(3.769,8.326)--(3.778,8.326)--(3.787,8.326)--(3.796,8.327)--(3.805,8.327)--(3.814,8.327)%
  --(3.823,8.327)--(3.832,8.327)--(3.841,8.327)--(3.850,8.328)--(3.859,8.328)--(3.868,8.328)%
  --(3.877,8.328)--(3.886,8.328)--(3.895,8.329)--(3.904,8.329)--(3.913,8.329)--(3.922,8.329)%
  --(3.931,8.329)--(3.940,8.329)--(3.949,8.330)--(3.958,8.330)--(3.967,8.330)--(3.976,8.330)%
  --(3.985,8.330)--(3.994,8.330)--(4.003,8.331)--(4.012,8.331)--(4.020,8.331)--(4.029,8.331)%
  --(4.038,8.331)--(4.047,8.331)--(4.056,8.331)--(4.065,8.332)--(4.074,8.332)--(4.083,8.332)%
  --(4.092,8.332)--(4.101,8.332)--(4.110,8.332)--(4.119,8.333)--(4.128,8.333)--(4.137,8.333)%
  --(4.146,8.333)--(4.155,8.333)--(4.164,8.333)--(4.173,8.333)--(4.182,8.334)--(4.191,8.334)%
  --(4.200,8.334)--(4.209,8.334)--(4.218,8.334)--(4.227,8.334)--(4.236,8.334)--(4.245,8.335)%
  --(4.254,8.335)--(4.263,8.335)--(4.272,8.335)--(4.281,8.335)--(4.289,8.335)--(4.298,8.335)%
  --(4.307,8.335)--(4.316,8.336)--(4.325,8.336)--(4.334,8.336)--(4.343,8.336)--(4.352,8.336)%
  --(4.361,8.336)--(4.370,8.336)--(4.379,8.336)--(4.388,8.337)--(4.397,8.337)--(4.406,8.337)%
  --(4.415,8.337)--(4.424,8.337)--(4.433,8.337)--(4.442,8.337)--(4.451,8.337)--(4.460,8.338)%
  --(4.469,8.338)--(4.478,8.338)--(4.487,8.338)--(4.496,8.338)--(4.505,8.338)--(4.514,8.338)%
  --(4.523,8.338)--(4.532,8.339)--(4.541,8.339)--(4.550,8.339)--(4.558,8.339)--(4.567,8.339)%
  --(4.576,8.339)--(4.585,8.339)--(4.594,8.339)--(4.603,8.339)--(4.612,8.340)--(4.621,8.340)%
  --(4.630,8.340)--(4.639,8.340)--(4.648,8.340)--(4.657,8.340)--(4.666,8.340)--(4.675,8.340)%
  --(4.684,8.340)--(4.693,8.340)--(4.702,8.341)--(4.711,8.341)--(4.720,8.341)--(4.729,8.341)%
  --(4.738,8.341)--(4.747,8.341)--(4.756,8.341)--(4.765,8.341)--(4.774,8.341)--(4.783,8.341)%
  --(4.792,8.342)--(4.801,8.342)--(4.810,8.342)--(4.819,8.342)--(4.827,8.342)--(4.836,8.342)%
  --(4.845,8.342)--(4.854,8.342)--(4.863,8.342)--(4.872,8.342)--(4.881,8.343)--(4.890,8.343)%
  --(4.899,8.343)--(4.908,8.343)--(4.917,8.343)--(4.926,8.343)--(4.935,8.343)--(4.944,8.343)%
  --(4.953,8.343)--(4.962,8.343)--(4.971,8.343)--(4.980,8.344)--(4.989,8.344)--(4.998,8.344)%
  --(5.007,8.344)--(5.016,8.344)--(5.025,8.344)--(5.034,8.344)--(5.043,8.344)--(5.052,8.344)%
  --(5.061,8.344)--(5.070,8.344)--(5.079,8.344)--(5.088,8.345)--(5.096,8.345)--(5.105,8.345)%
  --(5.114,8.345)--(5.123,8.345)--(5.132,8.345)--(5.141,8.345)--(5.150,8.345)--(5.159,8.345)%
  --(5.168,8.345)--(5.177,8.345)--(5.186,8.345)--(5.195,8.346)--(5.204,8.346)--(5.213,8.346)%
  --(5.222,8.346)--(5.231,8.346)--(5.240,8.346)--(5.249,8.346)--(5.258,8.346)--(5.267,8.346)%
  --(5.276,8.346)--(5.285,8.346)--(5.294,8.346)--(5.303,8.346)--(5.312,8.347)--(5.321,8.347)%
  --(5.330,8.347)--(5.339,8.347)--(5.348,8.347)--(5.357,8.347)--(5.365,8.347)--(5.374,8.347)%
  --(5.383,8.347)--(5.392,8.347)--(5.401,8.347)--(5.410,8.347)--(5.419,8.347)--(5.428,8.347)%
  --(5.437,8.348)--(5.446,8.348)--(5.455,8.348)--(5.464,8.348)--(5.473,8.348)--(5.482,8.348)%
  --(5.491,8.348)--(5.500,8.348)--(5.509,8.348)--(5.518,8.348)--(5.527,8.348)--(5.536,8.348)%
  --(5.545,8.348)--(5.554,8.348)--(5.563,8.349)--(5.572,8.349)--(5.581,8.349)--(5.590,8.349)%
  --(5.599,8.349)--(5.608,8.349)--(5.617,8.349)--(5.626,8.349)--(5.634,8.349)--(5.643,8.349)%
  --(5.652,8.349)--(5.661,8.349)--(5.670,8.349)--(5.679,8.349)--(5.688,8.349)--(5.697,8.349)%
  --(5.706,8.350)--(5.715,8.350)--(5.724,8.350)--(5.733,8.350)--(5.742,8.350)--(5.751,8.350)%
  --(5.760,8.350)--(5.769,8.350)--(5.778,8.350)--(5.787,8.350)--(5.796,8.350)--(5.805,8.350)%
  --(5.814,8.350)--(5.823,8.350)--(5.832,8.350)--(5.841,8.350)--(5.850,8.350)--(5.859,8.351)%
  --(5.868,8.351)--(5.877,8.351)--(5.886,8.351)--(5.895,8.351)--(5.903,8.351)--(5.912,8.351)%
  --(5.921,8.351)--(5.930,8.351)--(5.939,8.351)--(5.948,8.351)--(5.957,8.351)--(5.966,8.351)%
  --(5.975,8.351)--(5.984,8.351)--(5.993,8.351)--(6.002,8.351)--(6.011,8.352)--(6.020,8.352)%
  --(6.029,8.352)--(6.038,8.352)--(6.047,8.352)--(6.056,8.352)--(6.065,8.352)--(6.074,8.352)%
  --(6.083,8.352)--(6.092,8.352)--(6.101,8.352)--(6.110,8.352)--(6.119,8.352)--(6.128,8.352)%
  --(6.137,8.352)--(6.146,8.352)--(6.155,8.352)--(6.164,8.352)--(6.172,8.352)--(6.181,8.353)%
  --(6.190,8.353)--(6.199,8.353)--(6.208,8.353)--(6.217,8.353)--(6.226,8.353)--(6.235,8.353)%
  --(6.244,8.353)--(6.253,8.353)--(6.262,8.353)--(6.271,8.353)--(6.280,8.353)--(6.289,8.353)%
  --(6.298,8.353)--(6.307,8.353)--(6.316,8.353)--(6.325,8.353)--(6.334,8.353)--(6.343,8.353)%
  --(6.352,8.353)--(6.361,8.354)--(6.370,8.354)--(6.379,8.354)--(6.388,8.354)--(6.397,8.354)%
  --(6.406,8.354)--(6.415,8.354)--(6.424,8.354)--(6.433,8.354)--(6.441,8.354)--(6.450,8.354)%
  --(6.459,8.354)--(6.468,8.354)--(6.477,8.354)--(6.486,8.354)--(6.495,8.354)--(6.504,8.354)%
  --(6.513,8.354)--(6.522,8.354)--(6.531,8.354)--(6.540,8.354)--(6.549,8.354)--(6.558,8.355)%
  --(6.567,8.355)--(6.576,8.355)--(6.585,8.355)--(6.594,8.355)--(6.603,8.355)--(6.612,8.355)%
  --(6.621,8.355)--(6.630,8.355)--(6.639,8.355)--(6.648,8.355)--(6.657,8.355)--(6.666,8.355)%
  --(6.675,8.355)--(6.684,8.355)--(6.693,8.355)--(6.702,8.355)--(6.710,8.355)--(6.719,8.355)%
  --(6.728,8.355)--(6.737,8.355)--(6.746,8.355)--(6.755,8.355)--(6.764,8.355)--(6.773,8.356)%
  --(6.782,8.356)--(6.791,8.356)--(6.800,8.356)--(6.809,8.356)--(6.818,8.356)--(6.827,8.356)%
  --(6.836,8.356)--(6.845,8.356)--(6.854,8.356)--(6.863,8.356)--(6.872,8.356)--(6.881,8.356)%
  --(6.890,8.356)--(6.899,8.356)--(6.908,8.356)--(6.917,8.356)--(6.926,8.356)--(6.935,8.356)%
  --(6.944,8.356)--(6.953,8.356)--(6.962,8.356)--(6.971,8.356)--(6.979,8.356)--(6.988,8.356)%
  --(6.997,8.357)--(7.006,8.357)--(7.015,8.357)--(7.024,8.357)--(7.033,8.357)--(7.042,8.357)%
  --(7.051,8.357)--(7.060,8.357)--(7.069,8.357)--(7.078,8.357)--(7.087,8.357)--(7.096,8.357)%
  --(7.105,8.357)--(7.114,8.357)--(7.123,8.357)--(7.132,8.357)--(7.141,8.357)--(7.150,8.357)%
  --(7.159,8.357)--(7.168,8.357)--(7.177,8.357)--(7.186,8.357)--(7.195,8.357)--(7.204,8.357)%
  --(7.213,8.357)--(7.222,8.357)--(7.231,8.357)--(7.240,8.357)--(7.248,8.358)--(7.257,8.358)%
  --(7.266,8.358)--(7.275,8.358)--(7.284,8.358)--(7.293,8.358)--(7.302,8.358)--(7.311,8.358)%
  --(7.320,8.358)--(7.329,8.358)--(7.338,8.358)--(7.347,8.358)--(7.356,8.358)--(7.365,8.358)%
  --(7.374,8.358)--(7.383,8.358)--(7.392,8.358)--(7.401,8.358)--(7.410,8.358)--(7.419,8.358)%
  --(7.428,8.358)--(7.437,8.358)--(7.446,8.358)--(7.455,8.358)--(7.464,8.358)--(7.473,8.358)%
  --(7.482,8.358)--(7.491,8.358)--(7.500,8.358)--(7.509,8.358)--(7.517,8.359)--(7.526,8.359)%
  --(7.535,8.359)--(7.544,8.359)--(7.553,8.359)--(7.562,8.359)--(7.571,8.359)--(7.580,8.359)%
  --(7.589,8.359)--(7.598,8.359)--(7.607,8.359)--(7.616,8.359)--(7.625,8.359)--(7.634,8.359)%
  --(7.643,8.359)--(7.652,8.359)--(7.661,8.359)--(7.670,8.359)--(7.679,8.359)--(7.688,8.359)%
  --(7.697,8.359)--(7.706,8.359)--(7.715,8.359)--(7.724,8.359)--(7.733,8.359)--(7.742,8.359)%
  --(7.751,8.359)--(7.760,8.359)--(7.769,8.359)--(7.778,8.359)--(7.786,8.359)--(7.795,8.359)%
  --(7.804,8.359)--(7.813,8.360)--(7.822,8.360)--(7.831,8.360)--(7.840,8.360)--(7.849,8.360)%
  --(7.858,8.360)--(7.867,8.360)--(7.876,8.360)--(7.885,8.360)--(7.894,8.360)--(7.903,8.360)%
  --(7.912,8.360)--(7.921,8.360)--(7.930,8.360)--(7.939,8.360)--(7.948,8.360)--(7.957,8.360)%
  --(7.966,8.360)--(7.975,8.360)--(7.984,8.360)--(7.993,8.360)--(8.002,8.360)--(8.011,8.360)%
  --(8.020,8.360)--(8.029,8.360)--(8.038,8.360)--(8.047,8.360)--(8.055,8.360)--(8.064,8.360)%
  --(8.073,8.360)--(8.082,8.360)--(8.091,8.360)--(8.100,8.360)--(8.109,8.360)--(8.118,8.360)%
  --(8.127,8.360)--(8.136,8.361)--(8.145,8.361)--(8.154,8.361)--(8.163,8.361)--(8.172,8.361)%
  --(8.181,8.361)--(8.190,8.361)--(8.199,8.361)--(8.208,8.361)--(8.217,8.361)--(8.226,8.361)%
  --(8.235,8.361)--(8.244,8.361)--(8.253,8.361)--(8.262,8.361)--(8.271,8.361)--(8.280,8.361)%
  --(8.289,8.361)--(8.298,8.361)--(8.307,8.361)--(8.316,8.361)--(8.324,8.361)--(8.333,8.361)%
  --(8.342,8.361)--(8.351,8.361)--(8.360,8.361)--(8.369,8.361)--(8.378,8.361)--(8.387,8.361)%
  --(8.396,8.361)--(8.405,8.361)--(8.414,8.361)--(8.423,8.361)--(8.432,8.361)--(8.441,8.361)%
  --(8.450,8.361)--(8.459,8.361)--(8.468,8.361)--(8.477,8.361)--(8.486,8.362)--(8.495,8.362)%
  --(8.504,8.362)--(8.513,8.362)--(8.522,8.362)--(8.531,8.362)--(8.540,8.362)--(8.549,8.362)%
  --(8.558,8.362)--(8.567,8.362)--(8.576,8.362)--(8.585,8.362)--(8.593,8.362)--(8.602,8.362)%
  --(8.611,8.362)--(8.620,8.362)--(8.629,8.362)--(8.638,8.362)--(8.647,8.362)--(8.656,8.362)%
  --(8.665,8.362)--(8.674,8.362)--(8.683,8.362)--(8.692,8.362)--(8.701,8.362)--(8.710,8.362)%
  --(8.719,8.362)--(8.728,8.362)--(8.737,8.362)--(8.746,8.362)--(8.755,8.362)--(8.764,8.362)%
  --(8.773,8.362)--(8.782,8.362)--(8.791,8.362)--(8.800,8.362)--(8.809,8.362)--(8.818,8.362)%
  --(8.827,8.362)--(8.836,8.362)--(8.845,8.362)--(8.854,8.362)--(8.862,8.362)--(8.871,8.362)%
  --(8.880,8.362)--(8.889,8.363)--(8.898,8.363)--(8.907,8.363)--(8.916,8.363)--(8.925,8.363)%
  --(8.934,8.363)--(8.943,8.363)--(8.952,8.363)--(8.961,8.363)--(8.970,8.363)--(8.979,8.363)%
  --(8.988,8.363)--(8.997,8.363)--(9.006,8.363)--(9.015,8.363)--(9.024,8.363)--(9.033,8.363)%
  --(9.042,8.363)--(9.051,8.363)--(9.060,8.363)--(9.069,8.363)--(9.078,8.363)--(9.087,8.363)%
  --(9.096,8.363)--(9.105,8.363)--(9.114,8.363)--(9.123,8.363)--(9.131,8.363)--(9.140,8.363)%
  --(9.149,8.363)--(9.158,8.363)--(9.167,8.363)--(9.176,8.363)--(9.185,8.363)--(9.194,8.363)%
  --(9.203,8.363)--(9.212,8.363)--(9.221,8.363)--(9.230,8.363)--(9.239,8.363)--(9.248,8.363)%
  --(9.257,8.363)--(9.266,8.363)--(9.275,8.363)--(9.284,8.363)--(9.293,8.363)--(9.302,8.363)%
  --(9.311,8.363)--(9.320,8.363)--(9.329,8.364)--(9.338,8.364)--(9.347,8.364)--(9.356,8.364)%
  --(9.365,8.364)--(9.374,8.364)--(9.383,8.364)--(9.392,8.364)--(9.400,8.364)--(9.409,8.364)%
  --(9.418,8.364)--(9.427,8.364)--(9.436,8.364)--(9.445,8.364)--(9.454,8.364)--(9.463,8.364)%
  --(9.472,8.364)--(9.481,8.364)--(9.490,8.364)--(9.499,8.364)--(9.508,8.364)--(9.517,8.364)%
  --(9.526,8.364)--(9.535,8.364)--(9.544,8.364)--(9.553,8.364)--(9.562,8.364)--(9.571,8.364)%
  --(9.580,8.364)--(9.589,8.364)--(9.598,8.364)--(9.607,8.364)--(9.616,8.364)--(9.625,8.364)%
  --(9.634,8.364)--(9.643,8.364)--(9.652,8.364)--(9.661,8.364)--(9.669,8.364)--(9.678,8.364)%
  --(9.687,8.364)--(9.696,8.364)--(9.705,8.364)--(9.714,8.364)--(9.723,8.364)--(9.732,8.364)%
  --(9.741,8.364)--(9.750,8.364)--(9.759,8.364)--(9.768,8.364)--(9.777,8.364)--(9.786,8.364)%
  --(9.795,8.364)--(9.804,8.364)--(9.813,8.364)--(9.822,8.365)--(9.831,8.365)--(9.840,8.365)%
  --(9.849,8.365)--(9.858,8.365)--(9.867,8.365)--(9.876,8.365)--(9.885,8.365)--(9.894,8.365)%
  --(9.903,8.365)--(9.912,8.365)--(9.921,8.365)--(9.930,8.365)--(9.938,8.365)--(9.947,8.365)%
  --(9.956,8.365)--(9.965,8.365)--(9.974,8.365)--(9.983,8.365)--(9.992,8.365)--(10.001,8.365)%
  --(10.010,8.365)--(10.019,8.365)--(10.028,8.365)--(10.037,8.365)--(10.046,8.365)--(10.055,8.365)%
  --(10.064,8.365)--(10.073,8.365)--(10.082,8.365)--(10.091,8.365)--(10.100,8.365)--(10.109,8.365)%
  --(10.118,8.365)--(10.127,8.365)--(10.136,8.365)--(10.145,8.365)--(10.154,8.365)--(10.163,8.365)%
  --(10.172,8.365)--(10.181,8.365)--(10.190,8.365)--(10.199,8.365)--(10.207,8.365)--(10.216,8.365)%
  --(10.225,8.365)--(10.234,8.365)--(10.243,8.365)--(10.252,8.365)--(10.261,8.365)--(10.270,8.365)%
  --(10.279,8.365)--(10.288,8.365)--(10.297,8.365)--(10.306,8.365)--(10.315,8.365)--(10.324,8.365)%
  --(10.333,8.365)--(10.342,8.365)--(10.351,8.365)--(10.360,8.365)--(10.369,8.365)--(10.378,8.366)%
  --(10.387,8.366)--(10.396,8.366)--(10.405,8.366)--(10.414,8.366)--(10.423,8.366)--(10.432,8.366)%
  --(10.441,8.366)--(10.450,8.366)--(10.459,8.366)--(10.468,8.366)--(10.476,8.366)--(10.485,8.366)%
  --(10.494,8.366)--(10.503,8.366)--(10.512,8.366)--(10.521,8.366)--(10.530,8.366)--(10.539,8.366)%
  --(10.548,8.366)--(10.557,8.366)--(10.566,8.366)--(10.575,8.366)--(10.584,8.366)--(10.593,8.366)%
  --(10.602,8.366)--(10.611,8.366)--(10.620,8.366)--(10.629,8.366)--(10.638,8.366)--(10.647,8.366)%
  --(10.656,8.366)--(10.665,8.366)--(10.674,8.366)--(10.683,8.366)--(10.692,8.366)--(10.701,8.366)%
  --(10.710,8.366)--(10.719,8.366)--(10.728,8.366)--(10.737,8.366)--(10.745,8.366)--(10.754,8.366)%
  --(10.763,8.366)--(10.772,8.366)--(10.781,8.366)--(10.790,8.366)--(10.799,8.366)--(10.808,8.366)%
  --(10.817,8.366)--(10.826,8.366)--(10.835,8.366)--(10.844,8.366)--(10.853,8.366)--(10.862,8.366)%
  --(10.871,8.366)--(10.880,8.366)--(10.889,8.366)--(10.898,8.366)--(10.907,8.366)--(10.916,8.366)%
  --(10.925,8.366)--(10.934,8.366)--(10.943,8.366)--(10.952,8.366)--(10.961,8.366)--(10.970,8.366)%
  --(10.979,8.366)--(10.988,8.366)--(10.997,8.366)--(11.006,8.366)--(11.014,8.367)--(11.023,8.367)%
  --(11.032,8.367)--(11.041,8.367)--(11.050,8.367)--(11.059,8.367)--(11.068,8.367)--(11.077,8.367)%
  --(11.086,8.367)--(11.095,8.367)--(11.104,8.367)--(11.113,8.367)--(11.122,8.367)--(11.131,8.367)%
  --(11.140,8.367)--(11.149,8.367)--(11.158,8.367)--(11.167,8.367)--(11.176,8.367)--(11.185,8.367)%
  --(11.194,8.367)--(11.203,8.367)--(11.212,8.367)--(11.221,8.367)--(11.230,8.367)--(11.239,8.367)%
  --(11.248,8.367)--(11.257,8.367)--(11.266,8.367)--(11.275,8.367)--(11.283,8.367)--(11.292,8.367)%
  --(11.301,8.367)--(11.310,8.367)--(11.319,8.367)--(11.328,8.367)--(11.337,8.367)--(11.346,8.367)%
  --(11.355,8.367)--(11.364,8.367)--(11.373,8.367)--(11.382,8.367)--(11.391,8.367)--(11.400,8.367)%
  --(11.409,8.367)--(11.418,8.367)--(11.427,8.367)--(11.436,8.367)--(11.445,8.367)--(11.454,8.367)%
  --(11.463,8.367)--(11.472,8.367)--(11.481,8.367)--(11.490,8.367)--(11.499,8.367)--(11.508,8.367)%
  --(11.517,8.367)--(11.526,8.367)--(11.535,8.367)--(11.544,8.367)--(11.552,8.367)--(11.561,8.367)%
  --(11.570,8.367)--(11.579,8.367)--(11.588,8.367)--(11.597,8.367)--(11.606,8.367)--(11.615,8.367)%
  --(11.624,8.367)--(11.633,8.367)--(11.642,8.367)--(11.651,8.367)--(11.660,8.367)--(11.669,8.367)%
  --(11.678,8.367)--(11.687,8.367)--(11.696,8.367)--(11.705,8.367)--(11.714,8.367)--(11.723,8.367)%
  --(11.732,8.367)--(11.741,8.368)--(11.750,8.368)--(11.759,8.368)--(11.768,8.368)--(11.777,8.368)%
  --(11.786,8.368)--(11.795,8.368)--(11.804,8.368)--(11.813,8.368)--(11.821,8.368)--(11.830,8.368)%
  --(11.839,8.368)--(11.848,8.368)--(11.857,8.368)--(11.866,8.368)--(11.875,8.368)--(11.884,8.368)%
  --(11.893,8.368)--(11.902,8.368)--(11.911,8.368)--(11.920,8.368)--(11.929,8.368)--(11.938,8.368)%
  --(11.947,8.368);
\end{tikzpicture}  
 } 
%Close v1_to_E
\end{center}  
\vspace{-0.7cm}
\setcounter{subfigure}{0}
\begin{center}
   \subfloat[On top, $v_n^1(B) \to 1$. On the bottom, $\overline g_n^1 \to 1$.]{
\begin{tikzpicture}[scale=0.5]
\draw (1.196,0.368)--(1.376,0.368);
\node at (0.5,0.368) {\small{0}};
\draw (1.196,2.371)--(1.376,2.371);
\node at (0.5,2.371) {\small{ 0.5}};
\draw (1.196,4.375)--(1.376,4.375);
\node at (0.5,4.375) {\small{ 1}};
\draw (1.196,6.378)--(1.376,6.378);
\node at (0.5,6.378) {\small{ 1.5}};
\node at (0.5,8.381) {\small{2}};
\draw[->,>=stealth', color=blue](1.196,0.368)-- (1.196,8.381);
\draw[->,>=stealth', color=blue](1.196,0.368)--(12.5,0.368);
\node at (11.3,-0.2) {\small{time}};
\node at (10.663,5.2) {\small{$g_n^1$}};
\draw (1.196,8.381)--(1.211,0.368)--(1.227,2.748)--(1.242,2.600)--(1.258,2.937)%
  --(1.273,2.836)--(1.288,2.983)--(1.304,3.161)--(1.319,3.283)--(1.334,3.389)--(1.350,3.441)%
  --(1.365,3.490)--(1.381,3.472)--(1.396,3.470)--(1.411,3.509)--(1.427,3.551)--(1.442,3.571)%
  --(1.457,3.621)--(1.473,3.666)--(1.488,3.672)--(1.504,3.709)--(1.519,3.724)--(1.534,3.755)%
  --(1.550,3.783)--(1.565,3.809)--(1.581,3.832)--(1.596,3.846)--(1.611,3.852)--(1.627,3.872)%
  --(1.642,3.890)--(1.657,3.906)--(1.673,3.922)--(1.688,3.937)--(1.704,3.950)--(1.719,3.963)%
  --(1.734,3.968)--(1.750,3.971)--(1.765,3.982)--(1.780,3.992)--(1.796,4.002)--(1.811,4.012)%
  --(1.827,4.021)--(1.842,4.030)--(1.857,4.038)--(1.873,4.046)--(1.888,4.053)--(1.904,4.060)%
  --(1.919,4.055)--(1.934,4.062)--(1.950,4.068)--(1.965,4.074)--(1.980,4.080)--(1.996,4.086)%
  --(2.011,4.092)--(2.027,4.091)--(2.042,4.096)--(2.057,4.101)--(2.073,4.106)--(2.088,4.110)%
  --(2.103,4.109)--(2.119,4.114)--(2.134,4.118)--(2.150,4.122)--(2.165,4.126)--(2.180,4.130)%
  --(2.196,4.134)--(2.211,4.138)--(2.226,4.141)--(2.242,4.145)--(2.257,4.144)--(2.273,4.147)%
  --(2.288,4.151)--(2.303,4.154)--(2.319,4.157)--(2.334,4.160)--(2.350,4.163)--(2.365,4.166)%
  --(2.380,4.168)--(2.396,4.171)--(2.411,4.162)--(2.426,4.165)--(2.442,4.167)--(2.457,4.164)%
  --(2.473,4.167)--(2.488,4.169)--(2.503,4.172)--(2.519,4.174)--(2.534,4.177)--(2.549,4.179)%
  --(2.565,4.168)--(2.580,4.171)--(2.596,4.173)--(2.611,4.175)--(2.626,4.177)--(2.642,4.179)%
  --(2.657,4.181)--(2.673,4.183)--(2.688,4.185)--(2.703,4.187)--(2.719,4.189)--(2.734,4.191)%
  --(2.749,4.193)--(2.765,4.194)--(2.780,4.196)--(2.796,4.198)--(2.811,4.200)--(2.826,4.201)%
  --(2.842,4.203)--(2.857,4.205)--(2.872,4.206)--(2.888,4.208)--(2.903,4.209)--(2.919,4.211)%
  --(2.934,4.212)--(2.949,4.214)--(2.965,4.215)--(2.980,4.216)--(2.996,4.218)--(3.011,4.219)%
  --(3.026,4.220)--(3.042,4.222)--(3.057,4.223)--(3.072,4.224)--(3.088,4.225)--(3.103,4.216)%
  --(3.119,4.211)--(3.134,4.213)--(3.149,4.214)--(3.165,4.215)--(3.180,4.216)--(3.195,4.218)%
  --(3.211,4.219)--(3.226,4.220)--(3.242,4.221)--(3.257,4.222)--(3.272,4.224)--(3.288,4.225)%
  --(3.303,4.221)--(3.319,4.222)--(3.334,4.223)--(3.349,4.224)--(3.365,4.225)--(3.380,4.226)%
  --(3.395,4.227)--(3.411,4.228)--(3.426,4.229)--(3.442,4.230)--(3.457,4.231)--(3.472,4.232)%
  --(3.488,4.233)--(3.503,4.234)--(3.518,4.235)--(3.534,4.236)--(3.549,4.237)--(3.565,4.238)%
  --(3.580,4.239)--(3.595,4.240)--(3.611,4.240)--(3.626,4.241)--(3.642,4.242)--(3.657,4.243)%
  --(3.672,4.244)--(3.688,4.245)--(3.703,4.245)--(3.718,4.244)--(3.734,4.245)--(3.749,4.245)%
  --(3.765,4.246)--(3.780,4.247)--(3.795,4.248)--(3.811,4.248)--(3.826,4.249)--(3.841,4.250)%
  --(3.857,4.251)--(3.872,4.251)--(3.888,4.252)--(3.903,4.252)--(3.918,4.253)--(3.934,4.254)%
  --(3.949,4.255)--(3.964,4.255)--(3.980,4.256)--(3.995,4.257)--(4.011,4.257)--(4.026,4.258)%
  --(4.041,4.258)--(4.057,4.259)--(4.072,4.260)--(4.088,4.260)--(4.103,4.261)--(4.118,4.262)%
  --(4.134,4.262)--(4.149,4.263)--(4.164,4.263)--(4.180,4.264)--(4.195,4.264)--(4.211,4.260)%
  --(4.226,4.261)--(4.241,4.261)--(4.257,4.262)--(4.272,4.262)--(4.287,4.263)--(4.303,4.263)%
  --(4.318,4.264)--(4.334,4.265)--(4.349,4.265)--(4.364,4.266)--(4.380,4.266)--(4.395,4.267)%
  --(4.411,4.267)--(4.426,4.268)--(4.441,4.268)--(4.457,4.269)--(4.472,4.269)--(4.487,4.270)%
  --(4.503,4.270)--(4.518,4.271)--(4.534,4.270)--(4.549,4.271)--(4.564,4.271)--(4.580,4.272)%
  --(4.595,4.272)--(4.610,4.272)--(4.626,4.273)--(4.641,4.273)--(4.657,4.274)--(4.672,4.274)%
  --(4.687,4.275)--(4.703,4.275)--(4.718,4.275)--(4.734,4.276)--(4.749,4.276)--(4.764,4.277)%
  --(4.780,4.277)--(4.795,4.278)--(4.810,4.278)--(4.826,4.278)--(4.841,4.279)--(4.857,4.279)%
  --(4.872,4.280)--(4.887,4.280)--(4.903,4.280)--(4.918,4.281)--(4.933,4.281)--(4.949,4.282)%
  --(4.964,4.282)--(4.980,4.282)--(4.995,4.283)--(5.010,4.283)--(5.026,4.283)--(5.041,4.284)%
  --(5.057,4.284)--(5.072,4.284)--(5.087,4.285)--(5.103,4.285)--(5.118,4.286)--(5.133,4.286)%
  --(5.149,4.286)--(5.164,4.287)--(5.180,4.287)--(5.195,4.287)--(5.210,4.288)--(5.226,4.288)%
  --(5.241,4.288)--(5.256,4.289)--(5.272,4.289)--(5.287,4.289)--(5.303,4.290)--(5.318,4.290)%
  --(5.333,4.290)--(5.349,4.291)--(5.364,4.291)--(5.380,4.291)--(5.395,4.291)--(5.410,4.292)%
  --(5.426,4.292)--(5.441,4.292)--(5.456,4.292)--(5.472,4.292)--(5.487,4.293)--(5.503,4.293)%
  --(5.518,4.293)--(5.533,4.294)--(5.549,4.294)--(5.564,4.294)--(5.579,4.295)--(5.595,4.295)%
  --(5.610,4.295)--(5.626,4.295)--(5.641,4.295)--(5.656,4.295)--(5.672,4.296)--(5.687,4.296)%
  --(5.702,4.296)--(5.718,4.296)--(5.733,4.297)--(5.749,4.297)--(5.764,4.297)--(5.779,4.297)%
  --(5.795,4.298)--(5.810,4.298)--(5.826,4.298)--(5.841,4.298)--(5.856,4.299)--(5.872,4.299)%
  --(5.887,4.299)--(5.902,4.299)--(5.918,4.300)--(5.933,4.300)--(5.949,4.300)--(5.964,4.300)%
  --(5.979,4.301)--(5.995,4.301)--(6.010,4.301)--(6.025,4.301)--(6.041,4.302)--(6.056,4.302)%
  --(6.072,4.302)--(6.087,4.302)--(6.102,4.303)--(6.118,4.303)--(6.133,4.303)--(6.149,4.303)%
  --(6.164,4.303)--(6.179,4.304)--(6.195,4.304)--(6.210,4.304)--(6.225,4.304)--(6.241,4.305)%
  --(6.256,4.305)--(6.272,4.305)--(6.287,4.305)--(6.302,4.305)--(6.318,4.305)--(6.333,4.305)%
  --(6.348,4.305)--(6.364,4.305)--(6.379,4.306)--(6.395,4.306)--(6.410,4.306)--(6.425,4.306)%
  --(6.441,4.307)--(6.456,4.307)--(6.472,4.307)--(6.487,4.307)--(6.502,4.307)--(6.518,4.307)%
  --(6.533,4.308)--(6.548,4.308)--(6.564,4.308)--(6.579,4.308)--(6.595,4.308)--(6.610,4.309)%
  --(6.625,4.309)--(6.641,4.309)--(6.656,4.309)--(6.671,4.309)--(6.687,4.310)--(6.702,4.310)%
  --(6.718,4.309)--(6.733,4.309)--(6.748,4.309)--(6.764,4.309)--(6.779,4.310)--(6.795,4.310)%
  --(6.810,4.310)--(6.825,4.310)--(6.841,4.310)--(6.856,4.310)--(6.871,4.311)--(6.887,4.311)%
  --(6.902,4.311)--(6.918,4.311)--(6.933,4.311)--(6.948,4.311)--(6.964,4.312)--(6.979,4.312)%
  --(6.994,4.312)--(7.010,4.312)--(7.025,4.312)--(7.041,4.312)--(7.056,4.313)--(7.071,4.313)%
  --(7.087,4.313)--(7.102,4.313)--(7.118,4.313)--(7.133,4.313)--(7.148,4.314)--(7.164,4.314)%
  --(7.179,4.314)--(7.194,4.314)--(7.210,4.314)--(7.225,4.314)--(7.241,4.315)--(7.256,4.315)%
  --(7.271,4.315)--(7.287,4.315)--(7.302,4.315)--(7.317,4.315)--(7.333,4.315)--(7.348,4.316)%
  --(7.364,4.316)--(7.379,4.316)--(7.394,4.316)--(7.410,4.316)--(7.425,4.316)--(7.441,4.316)%
  --(7.456,4.317)--(7.471,4.317)--(7.487,4.317)--(7.502,4.317)--(7.517,4.317)--(7.533,4.317)%
  --(7.548,4.317)--(7.564,4.318)--(7.579,4.318)--(7.594,4.318)--(7.610,4.318)--(7.625,4.318)%
  --(7.640,4.318)--(7.656,4.318)--(7.671,4.318)--(7.687,4.318)--(7.702,4.318)--(7.717,4.318)%
  --(7.733,4.318)--(7.748,4.319)--(7.763,4.319)--(7.779,4.319)--(7.794,4.319)--(7.810,4.319)%
  --(7.825,4.319)--(7.840,4.319)--(7.856,4.320)--(7.871,4.320)--(7.887,4.320)--(7.902,4.320)%
  --(7.917,4.320)--(7.933,4.320)--(7.948,4.320)--(7.963,4.320)--(7.979,4.321)--(7.994,4.321)%
  --(8.010,4.321)--(8.025,4.321)--(8.040,4.321)--(8.056,4.321)--(8.071,4.321)--(8.086,4.321)%
  --(8.102,4.321)--(8.117,4.322)--(8.133,4.322)--(8.148,4.322)--(8.163,4.322)--(8.179,4.322)%
  --(8.194,4.322)--(8.210,4.322)--(8.225,4.322)--(8.240,4.321)--(8.256,4.321)--(8.271,4.322)%
  --(8.286,4.320)--(8.302,4.320)--(8.317,4.320)--(8.333,4.320)--(8.348,4.321)--(8.363,4.321)%
  --(8.379,4.321)--(8.394,4.321)--(8.409,4.321)--(8.425,4.321)--(8.440,4.321)--(8.456,4.321)%
  --(8.471,4.321)--(8.486,4.321)--(8.502,4.321)--(8.517,4.321)--(8.533,4.322)--(8.548,4.322)%
  --(8.563,4.322)--(8.579,4.322)--(8.594,4.322)--(8.609,4.322)--(8.625,4.322)--(8.640,4.322)%
  --(8.656,4.322)--(8.671,4.323)--(8.686,4.323)--(8.702,4.323)--(8.717,4.323)--(8.732,4.323)%
  --(8.748,4.323)--(8.763,4.323)--(8.779,4.323)--(8.794,4.323)--(8.809,4.323)--(8.825,4.324)%
  --(8.840,4.324)--(8.856,4.324)--(8.871,4.324)--(8.886,4.324)--(8.902,4.324)--(8.917,4.324)%
  --(8.932,4.324)--(8.948,4.324)--(8.963,4.324)--(8.979,4.325)--(8.994,4.325)--(9.009,4.325)%
  --(9.025,4.325)--(9.040,4.325)--(9.055,4.325)--(9.071,4.325)--(9.086,4.325)--(9.102,4.325)%
  --(9.117,4.325)--(9.132,4.326)--(9.148,4.326)--(9.163,4.326)--(9.179,4.326)--(9.194,4.326)%
  --(9.209,4.326)--(9.225,4.326)--(9.240,4.326)--(9.255,4.326)--(9.271,4.326)--(9.286,4.326)%
  --(9.302,4.327)--(9.317,4.327)--(9.332,4.327)--(9.348,4.327)--(9.363,4.327)--(9.378,4.327)%
  --(9.394,4.327)--(9.409,4.327)--(9.425,4.327)--(9.440,4.327)--(9.455,4.327)--(9.471,4.328)%
  --(9.486,4.328)--(9.501,4.328)--(9.517,4.328)--(9.532,4.328)--(9.548,4.328)--(9.563,4.328)%
  --(9.578,4.328)--(9.594,4.328)--(9.609,4.328)--(9.625,4.328)--(9.640,4.329)--(9.655,4.329)%
  --(9.671,4.329)--(9.686,4.329)--(9.701,4.329)--(9.717,4.329)--(9.732,4.329)--(9.748,4.329)%
  --(9.763,4.329)--(9.778,4.329)--(9.794,4.329)--(9.809,4.329)--(9.824,4.329)--(9.840,4.329)%
  --(9.855,4.329)--(9.871,4.329)--(9.886,4.329)--(9.901,4.329)--(9.917,4.329)--(9.932,4.327)%
  --(9.948,4.327)--(9.963,4.327)--(9.978,4.327)--(9.994,4.327)--(10.009,4.327)--(10.024,4.327)%
  --(10.040,4.327)--(10.055,4.327)--(10.071,4.327)--(10.086,4.327)--(10.101,4.328)--(10.117,4.328)%
  --(10.132,4.328)--(10.147,4.328)--(10.163,4.328)--(10.178,4.328)--(10.194,4.328)--(10.209,4.328)%
  --(10.224,4.328)--(10.240,4.328)--(10.255,4.328)--(10.271,4.328)--(10.286,4.328)--(10.301,4.329)%
  --(10.317,4.329)--(10.332,4.329)--(10.347,4.329)--(10.363,4.329)--(10.378,4.329)--(10.394,4.329)%
  --(10.409,4.329)--(10.424,4.329)--(10.440,4.329)--(10.455,4.329)--(10.470,4.329)--(10.486,4.329)%
  --(10.501,4.329)--(10.517,4.329)--(10.532,4.329)--(10.547,4.329)--(10.563,4.329)--(10.578,4.327)%
  --(10.594,4.328)--(10.609,4.328)--(10.624,4.328)--(10.640,4.327)--(10.655,4.327)--(10.670,4.327)%
  --(10.686,4.327)--(10.701,4.327)--(10.717,4.327)--(10.732,4.327)--(10.747,4.327)--(10.763,4.327)%
  --(10.778,4.327)--(10.793,4.327)--(10.809,4.328)--(10.824,4.328)--(10.840,4.328)--(10.855,4.328)%
  --(10.870,4.328)--(10.886,4.328)--(10.901,4.328)--(10.917,4.328)--(10.932,4.328)--(10.947,4.328)%
  --(10.963,4.328)--(10.978,4.328)--(10.993,4.328)--(11.009,4.328)--(11.024,4.329)--(11.040,4.329)%
  --(11.055,4.329)--(11.070,4.329)--(11.086,4.329)--(11.101,4.329)--(11.116,4.329)--(11.132,4.329)%
  --(11.147,4.329)--(11.163,4.329)--(11.178,4.329)--(11.193,4.329)--(11.209,4.329)--(11.224,4.329)%
  --(11.239,4.329)--(11.255,4.330)--(11.270,4.330)--(11.286,4.330)--(11.301,4.330)--(11.316,4.330)%
  --(11.332,4.330)--(11.347,4.330)--(11.363,4.330)--(11.378,4.330)--(11.393,4.330)--(11.409,4.330)%
  --(11.424,4.330)--(11.439,4.330)--(11.455,4.330)--(11.470,4.330)--(11.486,4.331)--(11.501,4.331)%
  --(11.516,4.331)--(11.532,4.331)--(11.547,4.331)--(11.562,4.331)--(11.578,4.331)--(11.593,4.331)%
  --(11.609,4.331)--(11.624,4.331)--(11.639,4.331)--(11.655,4.331)--(11.670,4.331)--(11.686,4.331)%
  --(11.701,4.331)--(11.716,4.332)--(11.732,4.332)--(11.747,4.332)--(11.762,4.332)--(11.778,4.332)%
  --(11.793,4.332)--(11.809,4.332)--(11.824,4.332)--(11.839,4.332)--(11.855,4.332)--(11.870,4.332)%
  --(11.885,4.332)--(11.901,4.332)--(11.916,4.332)--(11.932,4.332)--(11.947,4.332);
\end{tikzpicture}
  }
%% Close g_to_1
\hspace{2cm}
 \subfloat[On top, $v_n^1(B) \to 1$. On the bottom, $\overline g_n^1 \to 2$.]{  
\begin{tikzpicture}[scale=0.5]
\draw (1.196,0.368)--(1.376,0.368);
\node at (0.5,0.368) {\small{0}};
\draw (1.196,2.371)--(1.376,2.371);
\node at (0.5,2.371) {\small{ 0.5}};
\draw (1.196,4.375)--(1.376,4.375);
\node at (0.5,4.375) {\small{ 1}};
\draw (1.196,6.378)--(1.376,6.378);
\node at (0.5,6.378) {\small{ 1.5}};
\node at (0.5,8.381) {\small{2}};
\draw[->,>=stealth', color=blue](1.196,0.368)-- (1.196,8.381);
\draw[->,>=stealth', color=blue](1.196,0.368)--(12.5,0.368);
\node at (11.3,-0.2) {\small{time}};
\node at (10.663,7.5) {$g_n^1$};
\draw (1.196,1.241)--(1.211,3.198)--(1.227,4.920)--(1.242,5.783)--(1.258,6.302)%
  --(1.273,6.648)--(1.288,6.895)--(1.304,7.081)--(1.319,7.225)--(1.334,7.340)--(1.350,7.428)%
  --(1.365,7.367)--(1.381,7.445)--(1.396,7.512)--(1.411,7.570)--(1.427,7.620)--(1.442,7.665)%
  --(1.457,7.705)--(1.473,7.740)--(1.488,7.772)--(1.504,7.801)--(1.519,7.828)--(1.534,7.852)%
  --(1.550,7.874)--(1.565,7.894)--(1.581,7.913)--(1.596,7.930)--(1.611,7.946)--(1.627,7.961)%
  --(1.642,7.975)--(1.657,7.988)--(1.673,8.001)--(1.688,8.012)--(1.704,8.023)--(1.719,8.033)%
  --(1.734,8.043)--(1.750,8.052)--(1.765,8.061)--(1.780,8.069)--(1.796,8.077)--(1.811,8.084)%
  --(1.827,8.091)--(1.842,8.098)--(1.857,8.104)--(1.873,8.110)--(1.888,8.116)--(1.904,8.122)%
  --(1.919,8.127)--(1.934,8.132)--(1.950,8.137)--(1.965,8.142)--(1.980,8.147)--(1.996,8.151)%
  --(2.011,8.155)--(2.027,8.160)--(2.042,8.164)--(2.057,8.167)--(2.073,8.171)--(2.088,8.175)%
  --(2.103,8.178)--(2.119,8.181)--(2.134,8.185)--(2.150,8.188)--(2.165,8.191)--(2.180,8.194)%
  --(2.196,8.196)--(2.211,8.199)--(2.226,8.202)--(2.242,8.205)--(2.257,8.207)--(2.273,8.209)%
  --(2.288,8.212)--(2.303,8.214)--(2.319,8.216)--(2.334,8.219)--(2.350,8.221)--(2.365,8.223)%
  --(2.380,8.225)--(2.396,8.227)--(2.411,8.229)--(2.426,8.231)--(2.442,8.232)--(2.457,8.234)%
  --(2.473,8.236)--(2.488,8.238)--(2.503,8.239)--(2.519,8.241)--(2.534,8.243)--(2.549,8.244)%
  --(2.565,8.246)--(2.580,8.247)--(2.596,8.249)--(2.611,8.250)--(2.626,8.251)--(2.642,8.253)%
  --(2.657,8.254)--(2.673,8.255)--(2.688,8.257)--(2.703,8.258)--(2.719,8.259)--(2.734,8.260)%
  --(2.749,8.262)--(2.765,8.263)--(2.780,8.264)--(2.796,8.265)--(2.811,8.266)--(2.826,8.267)%
  --(2.842,8.268)--(2.857,8.269)--(2.872,8.270)--(2.888,8.271)--(2.903,8.272)--(2.919,8.273)%
  --(2.934,8.274)--(2.949,8.275)--(2.965,8.276)--(2.980,8.277)--(2.996,8.278)--(3.011,8.279)%
  --(3.026,8.280)--(3.042,8.280)--(3.057,8.281)--(3.072,8.282)--(3.088,8.283)--(3.103,8.284)%
  --(3.119,8.284)--(3.134,8.285)--(3.149,8.286)--(3.165,8.287)--(3.180,8.287)--(3.195,8.288)%
  --(3.211,8.289)--(3.226,8.289)--(3.242,8.290)--(3.257,8.291)--(3.272,8.291)--(3.288,8.292)%
  --(3.303,8.293)--(3.319,8.293)--(3.334,8.294)--(3.349,8.295)--(3.365,8.295)--(3.380,8.296)%
  --(3.395,8.296)--(3.411,8.297)--(3.426,8.298)--(3.442,8.298)--(3.457,8.299)--(3.472,8.299)%
  --(3.488,8.300)--(3.503,8.300)--(3.518,8.301)--(3.534,8.301)--(3.549,8.302)--(3.565,8.302)%
  --(3.580,8.303)--(3.595,8.303)--(3.611,8.304)--(3.626,8.304)--(3.642,8.305)--(3.657,8.305)%
  --(3.672,8.306)--(3.688,8.306)--(3.703,8.307)--(3.718,8.307)--(3.734,8.308)--(3.749,8.308)%
  --(3.765,8.309)--(3.780,8.309)--(3.795,8.309)--(3.811,8.310)--(3.826,8.310)--(3.841,8.311)%
  --(3.857,8.311)--(3.872,8.311)--(3.888,8.312)--(3.903,8.312)--(3.918,8.313)--(3.934,8.313)%
  --(3.949,8.313)--(3.964,8.314)--(3.980,8.314)--(3.995,8.314)--(4.011,8.315)--(4.026,8.315)%
  --(4.041,8.316)--(4.057,8.316)--(4.072,8.316)--(4.088,8.317)--(4.103,8.317)--(4.118,8.317)%
  --(4.134,8.318)--(4.149,8.318)--(4.164,8.318)--(4.180,8.319)--(4.195,8.319)--(4.211,8.319)%
  --(4.226,8.319)--(4.241,8.320)--(4.257,8.320)--(4.272,8.320)--(4.287,8.321)--(4.303,8.321)%
  --(4.318,8.321)--(4.334,8.322)--(4.349,8.322)--(4.364,8.322)--(4.380,8.322)--(4.395,8.323)%
  --(4.411,8.323)--(4.426,8.323)--(4.441,8.324)--(4.457,8.324)--(4.472,8.324)--(4.487,8.324)%
  --(4.503,8.325)--(4.518,8.325)--(4.534,8.325)--(4.549,8.325)--(4.564,8.326)--(4.580,8.326)%
  --(4.595,8.326)--(4.610,8.326)--(4.626,8.327)--(4.641,8.327)--(4.657,8.327)--(4.672,8.327)%
  --(4.687,8.328)--(4.703,8.328)--(4.718,8.328)--(4.734,8.328)--(4.749,8.329)--(4.764,8.329)%
  --(4.780,8.329)--(4.795,8.329)--(4.810,8.329)--(4.826,8.330)--(4.841,8.330)--(4.857,8.330)%
  --(4.872,8.330)--(4.887,8.330)--(4.903,8.331)--(4.918,8.331)--(4.933,8.331)--(4.949,8.331)%
  --(4.964,8.331)--(4.980,8.332)--(4.995,8.332)--(5.010,8.332)--(5.026,8.332)--(5.041,8.332)%
  --(5.057,8.333)--(5.072,8.333)--(5.087,8.333)--(5.103,8.333)--(5.118,8.333)--(5.133,8.334)%
  --(5.149,8.334)--(5.164,8.334)--(5.180,8.334)--(5.195,8.334)--(5.210,8.335)--(5.226,8.335)%
  --(5.241,8.335)--(5.256,8.335)--(5.272,8.335)--(5.287,8.335)--(5.303,8.336)--(5.318,8.336)%
  --(5.333,8.336)--(5.349,8.336)--(5.364,8.336)--(5.380,8.336)--(5.395,8.337)--(5.410,8.337)%
  --(5.426,8.337)--(5.441,8.337)--(5.456,8.337)--(5.472,8.337)--(5.487,8.338)--(5.503,8.338)%
  --(5.518,8.338)--(5.533,8.338)--(5.549,8.338)--(5.564,8.338)--(5.579,8.338)--(5.595,8.339)%
  --(5.610,8.339)--(5.626,8.339)--(5.641,8.339)--(5.656,8.339)--(5.672,8.339)--(5.687,8.339)%
  --(5.702,8.340)--(5.718,8.340)--(5.733,8.340)--(5.749,8.340)--(5.764,8.340)--(5.779,8.340)%
  --(5.795,8.340)--(5.810,8.341)--(5.826,8.341)--(5.841,8.341)--(5.856,8.341)--(5.872,8.341)%
  --(5.887,8.341)--(5.902,8.341)--(5.918,8.341)--(5.933,8.342)--(5.949,8.342)--(5.964,8.342)%
  --(5.979,8.342)--(5.995,8.342)--(6.010,8.342)--(6.025,8.342)--(6.041,8.342)--(6.056,8.343)%
  --(6.072,8.343)--(6.087,8.343)--(6.102,8.343)--(6.118,8.343)--(6.133,8.343)--(6.149,8.343)%
  --(6.164,8.343)--(6.179,8.344)--(6.195,8.344)--(6.210,8.344)--(6.225,8.344)--(6.241,8.344)%
  --(6.256,8.344)--(6.272,8.344)--(6.287,8.344)--(6.302,8.344)--(6.318,8.345)--(6.333,8.345)%
  --(6.348,8.345)--(6.364,8.345)--(6.379,8.345)--(6.395,8.345)--(6.410,8.345)--(6.425,8.345)%
  --(6.441,8.345)--(6.456,8.345)--(6.472,8.346)--(6.487,8.346)--(6.502,8.346)--(6.518,8.346)%
  --(6.533,8.346)--(6.548,8.346)--(6.564,8.346)--(6.579,8.346)--(6.595,8.346)--(6.610,8.346)%
  --(6.625,8.347)--(6.641,8.347)--(6.656,8.347)--(6.671,8.347)--(6.687,8.347)--(6.702,8.347)%
  --(6.718,8.347)--(6.733,8.347)--(6.748,8.347)--(6.764,8.347)--(6.779,8.348)--(6.795,8.348)%
  --(6.810,8.348)--(6.825,8.348)--(6.841,8.348)--(6.856,8.348)--(6.871,8.348)--(6.887,8.348)%
  --(6.902,8.348)--(6.918,8.348)--(6.933,8.348)--(6.948,8.349)--(6.964,8.349)--(6.979,8.349)%
  --(6.994,8.349)--(7.010,8.349)--(7.025,8.349)--(7.041,8.349)--(7.056,8.349)--(7.071,8.349)%
  --(7.087,8.349)--(7.102,8.349)--(7.118,8.349)--(7.133,8.350)--(7.148,8.350)--(7.164,8.350)%
  --(7.179,8.350)--(7.194,8.350)--(7.210,8.350)--(7.225,8.350)--(7.241,8.350)--(7.256,8.350)%
  --(7.271,8.350)--(7.287,8.350)--(7.302,8.350)--(7.317,8.350)--(7.333,8.351)--(7.348,8.351)%
  --(7.364,8.351)--(7.379,8.351)--(7.394,8.351)--(7.410,8.351)--(7.425,8.351)--(7.441,8.351)%
  --(7.456,8.351)--(7.471,8.351)--(7.487,8.351)--(7.502,8.351)--(7.517,8.351)--(7.533,8.352)%
  --(7.548,8.352)--(7.564,8.352)--(7.579,8.352)--(7.594,8.352)--(7.610,8.352)--(7.625,8.352)%
  --(7.640,8.352)--(7.656,8.352)--(7.671,8.352)--(7.687,8.352)--(7.702,8.352)--(7.717,8.352)%
  --(7.733,8.352)--(7.748,8.352)--(7.763,8.353)--(7.779,8.353)--(7.794,8.353)--(7.810,8.353)%
  --(7.825,8.353)--(7.840,8.353)--(7.856,8.353)--(7.871,8.353)--(7.887,8.353)--(7.902,8.353)%
  --(7.917,8.353)--(7.933,8.353)--(7.948,8.353)--(7.963,8.353)--(7.979,8.353)--(7.994,8.354)%
  --(8.010,8.354)--(8.025,8.354)--(8.040,8.354)--(8.056,8.354)--(8.071,8.354)--(8.086,8.354)%
  --(8.102,8.354)--(8.117,8.354)--(8.133,8.354)--(8.148,8.354)--(8.163,8.354)--(8.179,8.354)%
  --(8.194,8.354)--(8.210,8.354)--(8.225,8.354)--(8.240,8.354)--(8.256,8.355)--(8.271,8.355)%
  --(8.286,8.355)--(8.302,8.355)--(8.317,8.355)--(8.333,8.355)--(8.348,8.355)--(8.363,8.355)%
  --(8.379,8.355)--(8.394,8.355)--(8.409,8.355)--(8.425,8.355)--(8.440,8.355)--(8.456,8.355)%
  --(8.471,8.355)--(8.486,8.355)--(8.502,8.355)--(8.517,8.355)--(8.533,8.356)--(8.548,8.356)%
  --(8.563,8.356)--(8.579,8.356)--(8.594,8.356)--(8.609,8.356)--(8.625,8.356)--(8.640,8.356)%
  --(8.656,8.356)--(8.671,8.356)--(8.686,8.356)--(8.702,8.356)--(8.717,8.356)--(8.732,8.356)%
  --(8.748,8.356)--(8.763,8.356)--(8.779,8.356)--(8.794,8.356)--(8.809,8.356)--(8.825,8.356)%
  --(8.840,8.357)--(8.856,8.357)--(8.871,8.357)--(8.886,8.357)--(8.902,8.357)--(8.917,8.357)%
  --(8.932,8.357)--(8.948,8.357)--(8.963,8.357)--(8.979,8.357)--(8.994,8.357)--(9.009,8.357)%
  --(9.025,8.357)--(9.040,8.357)--(9.055,8.357)--(9.071,8.357)--(9.086,8.357)--(9.102,8.357)%
  --(9.117,8.357)--(9.132,8.357)--(9.148,8.357)--(9.163,8.358)--(9.179,8.358)--(9.194,8.358)%
  --(9.209,8.358)--(9.225,8.358)--(9.240,8.358)--(9.255,8.358)--(9.271,8.358)--(9.286,8.358)%
  --(9.302,8.358)--(9.317,8.358)--(9.332,8.358)--(9.348,8.358)--(9.363,8.358)--(9.378,8.358)%
  --(9.394,8.358)--(9.409,8.358)--(9.425,8.358)--(9.440,8.358)--(9.455,8.358)--(9.471,8.358)%
  --(9.486,8.358)--(9.501,8.358)--(9.517,8.359)--(9.532,8.359)--(9.548,8.359)--(9.563,8.359)%
  --(9.578,8.359)--(9.594,8.359)--(9.609,8.359)--(9.625,8.359)--(9.640,8.359)--(9.655,8.359)%
  --(9.671,8.359)--(9.686,8.359)--(9.701,8.359)--(9.717,8.359)--(9.732,8.359)--(9.748,8.359)%
  --(9.763,8.359)--(9.778,8.359)--(9.794,8.359)--(9.809,8.359)--(9.824,8.359)--(9.840,8.359)%
  --(9.855,8.359)--(9.871,8.359)--(9.886,8.359)--(9.901,8.360)--(9.917,8.360)--(9.932,8.360)%
  --(9.948,8.360)--(9.963,8.360)--(9.978,8.360)--(9.994,8.360)--(10.009,8.360)--(10.024,8.360)%
  --(10.040,8.360)--(10.055,8.360)--(10.071,8.360)--(10.086,8.360)--(10.101,8.360)--(10.117,8.360)%
  --(10.132,8.360)--(10.147,8.360)--(10.163,8.360)--(10.178,8.360)--(10.194,8.360)--(10.209,8.360)%
  --(10.224,8.360)--(10.240,8.360)--(10.255,8.360)--(10.271,8.360)--(10.286,8.360)--(10.301,8.360)%
  --(10.317,8.360)--(10.332,8.361)--(10.347,8.361)--(10.363,8.361)--(10.378,8.361)--(10.394,8.361)%
  --(10.409,8.361)--(10.424,8.361)--(10.440,8.361)--(10.455,8.361)--(10.470,8.361)--(10.486,8.361)%
  --(10.501,8.361)--(10.517,8.361)--(10.532,8.361)--(10.547,8.361)--(10.563,8.361)--(10.578,8.361)%
  --(10.594,8.361)--(10.609,8.361)--(10.624,8.361)--(10.640,8.361)--(10.655,8.361)--(10.670,8.361)%
  --(10.686,8.361)--(10.701,8.361)--(10.717,8.361)--(10.732,8.361)--(10.747,8.361)--(10.763,8.361)%
  --(10.778,8.361)--(10.793,8.362)--(10.809,8.362)--(10.824,8.362)--(10.840,8.362)--(10.855,8.362)%
  --(10.870,8.362)--(10.886,8.362)--(10.901,8.362)--(10.917,8.362)--(10.932,8.362)--(10.947,8.362)%
  --(10.963,8.362)--(10.978,8.362)--(10.993,8.362)--(11.009,8.362)--(11.024,8.362)--(11.040,8.362)%
  --(11.055,8.362)--(11.070,8.362)--(11.086,8.362)--(11.101,8.362)--(11.116,8.362)--(11.132,8.362)%
  --(11.147,8.362)--(11.163,8.362)--(11.178,8.362)--(11.193,8.362)--(11.209,8.362)--(11.224,8.362)%
  --(11.239,8.362)--(11.255,8.362)--(11.270,8.362)--(11.286,8.362)--(11.301,8.362)--(11.316,8.363)%
  --(11.332,8.363)--(11.347,8.363)--(11.363,8.363)--(11.378,8.363)--(11.393,8.363)--(11.409,8.363)%
  --(11.424,8.363)--(11.439,8.363)--(11.455,8.363)--(11.470,8.363)--(11.486,8.363)--(11.501,8.363)%
  --(11.516,8.363)--(11.532,8.363)--(11.547,8.363)--(11.562,8.363)--(11.578,8.363)--(11.593,8.363)%
  --(11.609,8.363)--(11.624,8.363)--(11.639,8.363)--(11.655,8.363)--(11.670,8.363)--(11.686,8.363)%
  --(11.701,8.363)--(11.716,8.363)--(11.732,8.363)--(11.747,8.363)--(11.762,8.363)--(11.778,8.363)%
  --(11.793,8.363)--(11.809,8.363)--(11.824,8.363)--(11.839,8.363)--(11.855,8.363)--(11.870,8.363)%
  --(11.885,8.364)--(11.901,8.364)--(11.916,8.364)--(11.932,8.364)--(11.947,8.364);
\end{tikzpicture}
 }
%% Close g_to_2
\end{center}
\caption[lineheight]{\small{Two realizations of the procedure for the game \eqref{coordination}. }}
\label{fig_coordination}
\end{figure}

\section{Preliminaries to the proof, related work}\label{sec:preliminaries}
The aim of this section is twofold: we introduce the general framework in which we analyze our procedure, and we  present the related \textit{Markovian fictitious play} procedure, where the idea of restrictions on the action set was first introduced.

%%%%%%%%%%%%%%%%%%%%%%%%%%%%%%%%%%%%%%%%%%%%%%%%%%%%%
%%%%%%%%%%%%%%%%%%%%%%%%%%%%%%%%%%%%%%%%%%%%%%%%%%%%% 
\subsection{A general framework}  \label{ss:general}

\noindent Let $S$ be a finite set and let $\mathcal M(S)$ be the set of Markov matrices over $S$. We consider a discrete time stochastic process $(s_n,M_n)_n$ defined on the probability space $(\Omega, \mathcal F, \mathbb P)$ and taking values in $S \times \mathcal M(S)$. The  space $(\Omega, \mathcal F, \mathbb P)$ is equipped with a non-decreasing sequence of $\sigma$-algebras $(\mathcal F_n)_n$.

\noindent Let us assume that the process  $(s_n, M_n)_n$ satisfies the following assumptions. 
\begin{equation}
 \begin{aligned}
 \text{(i) }& \text{The sequence } (s_n,M_n)_n \text{ is } \mathcal F_n\text{-measurable for all } n \in \NN.\\
 \text{(ii) } & \text{For all } s \in S, \, \pcon{s_{n+1} = s}{ \FF_n}= M_n(s_n,s). \\
  \text{(iii) } & \text{For all } n \in \NN, \text{ the matrix } M_n \text{ is irreducible with invariant measure } \, \pi_n \in \Delta(S). 
\end{aligned} \label{hip0}\tag{$H_0$}
\end{equation}

\noindent Let $\Sigma$ be a compact convex subset of an euclidean space and, for all $n \in \NN$, let $V_n=H(s_n) \in \Sigma$. We are interested in the asymptotic behavior of the random sequence $z_n= \frac{1}{n} \sum_{m=1}^n V_m$. Let us call 
\begin{equation}
\mu_n = \sum_{s \in S} \pi_n(s) H(s) \in \Sigma. \label{eq:mu}   
\end{equation}
\begin{remark}
 This setting is a simplification of that considered by Bena\"im and Raimond~\cite{br10}, where a of more general observation term $V_n$ is treated. For instance $V_n$ may depend on other non-observable variables or explicitly on time.
\end{remark}
\noindent In order to maintain the terminology employed in \cite{br10}, we introduce the following definition, which is stated in a slightly different form (see \cite[Definition~2.4]{br10}).  
\begin{definition}
A set-valued map  with nonempty convex values $C: \;  \Sigma \rightrightarrows \Sigma$ is adapted to the random sequence $(z_n,\mu_n)_n$ if 
\begin{itemize}
\item[i)] its graph
\[\Gr(C) = \left\{ (z,\mu): \; \, z \in \Sigma, \; \mu \in C(z) \right\}\]
is closed in $\Sigma \times \Sigma$.
\item[ii)]Almost surely, for any limit point $(z,\mu)$ of $(z_n,\mu_n)_n$, we have $(z,\mu) \in Gr(C)$.
\end{itemize}
 
\end{definition}

\noindent Given a set-valued map $C: \;  \Sigma \rightrightarrows \Sigma$ adapted to a random sequence $(z_n,\mu_n)_n$, we consider the differential inclusion
\begin{equation}
\dot{z} \in -z + C(z).  \label{general_continuo} \tag{DI}
\end{equation}

\noindent Under the assumptions above, it is well known (see, \emph{e.g.} Aubin and Cellina \cite{AubCel84}) that (\ref{general_continuo}) admits at least one solution (i.e. an absolutely continuous mapping $\mathbf{z} : \mathbb{R} \rightarrow \mathbb{R}^d$ such that $\dot{\mathbf{z}}(t) \in -\mathbf{z}(t) + C(\mathbf{z}(t))$ for almost every $t$) through any initial point.

\begin{definition}
A nonempty compact set $\mathcal A \subseteq \Sigma$ is called an \emph{attractor} for \eqref{general_continuo}, provided
\begin{itemize}
\item[$(i)$]  it is \emph{invariant}, i.e. for all $v\in \mathcal A$, there exists a solution $\mathbf{z}$ to \eqref{general_continuo} with $\mathbf{z}(0) =v$ and such that $\mathbf{z}(\mathbb{R}) \subseteq \mathcal A$,
\item[$(ii)$] there exists an open neighborhood $\mathcal U$ of $\mathcal A$ such that, for every $\epsilon >0$, there exists $t_{\epsilon}>0	$ such that
$$\mathbf{z}(t) \subseteq N^{\epsilon}(\mathcal A)$$
for any solution $\mathbf{z}$ starting in $\mathcal U$ and all $t>t_{\epsilon}$, where $N^{\varepsilon}(\mathcal A)$ is the $\varepsilon$-neighborhood of $\mathcal A$. An open set $\mathcal U$ with this property is called a \emph{fundamental neighborhood} of $\mathcal A$.
\end{itemize}
\end{definition}

\noindent A compact set $D \subseteq \Sigma$ is \emph{internally chain transitive} (ICT) if it is invariant, connected and has no proper attractors. See Bena\"im, Hofbauer and Sorin~\cite{bhs05} for a full discussion.

\noindent Let $m(t)=\sup \{ m \geq 0:\, t \geq \tau_m \}$, where $\tau_m= \sum_{j=1}^{m}\frac{1}{j}$.  For a sequence $(u_n)_n$ and a number $T>0$, we define $\epsilon(u_n,T)$ by
\begin{equation*}
\epsilon(u_n,T)= \sup  \left\{ \norm{\sum \limits_{j=n}^{l-1} u_{j+1}}; \; \, l \in \{ n+1, \dots, m(\tau_n + T) \} \right\}.
\end{equation*}

\noindent Let us denote by $(W_n)_n$ the random sequence defined by $W_{n+1} = H(s_{n+1}) - \mu_n$. The evolution of $z_n$ can be recast as
\begin{equation}
  \label{eq:v}
  z_{n+1} - z_n= \dfrac{1}{n+1}(\mu_n - z_n + W_{n+1} ).
\end{equation}

\noindent  A consequence of \cite[Theorem 2.6]{br10} in this particular framework is the following result.
\begin{theorem} \label{general}
 Assume that the set-valued map $C$ is adapted to $(z_n,\mu_n)_n$ and that for all $T>0$
\begin{equation} \label{noise_general}
\lim_{n \to +\infty} \epsilon \left(\frac{1}{n+1} W_{n+1},T \right) = 0,
 \end{equation}
almost surely. Then the limit set of $(z_n)_n$ is, almost surely, an \textup{ICT} set of the differential inclusion \eqref{general_continuo}. In particular, if $\mathcal A$ is a global attractor for \eqref{general_continuo} then the limit set of $(z_n)_n$ is almost surely contained in $\mathcal A$.
\end{theorem}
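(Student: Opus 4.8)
The plan is to recognize the recursion \eqref{eq:v} as a discrete-time stochastic approximation of the differential inclusion \eqref{general_continuo} with step sizes $\gamma_{n+1}=1/(n+1)$, and then to invoke the general convergence result \cite[Theorem 2.6]{br10} after verifying its hypotheses in the present (simplified) framework. Concretely, \eqref{eq:v} reads
\[
z_{n+1}-z_n=\frac{1}{n+1}\bigl((\mu_n-z_n)+W_{n+1}\bigr),
\]
so the drift is $\mu_n-z_n$ and the rescaled perturbation is $\frac{1}{n+1}W_{n+1}$. The inclusion \eqref{general_continuo} is driven by the set-valued map $F=-\Id+C$, and the argument reduces to three checks: that $F$ is a well-behaved (Marchaud) map, that the drift is asymptotically a selection of $F(z_n)$, and that the perturbation is negligible on bounded time windows.

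First I would establish the regularity of $F=-\Id+C$ and control the drift. By hypothesis $C$ has nonempty convex values and, being adapted, a closed graph; since $\Sigma$ is compact, a closed-graph map into a compact set is upper semicontinuous with compact values, and the linear-growth bound holds trivially on $\Sigma$. Hence $F$ is Marchaud and \eqref{general_continuo} admits solutions through every point, as noted after its statement. The delicate point is the drift: $\mu_n$ need not belong to $C(z_n)$ for any fixed $n$. What adaptedness provides is only that, almost surely, every limit point $(z,\mu)$ of $(z_n,\mu_n)_n$ lies in $\Gr(C)$. I expect this transfer — from ``limit points lie in $\Gr(C)$'' to ``the affine interpolation of $(z_n)_n$ is a perturbed solution (an asymptotic pseudotrajectory) of \eqref{general_continuo}'' — to be the main conceptual obstacle, since it requires that along any bounded rescaled time window the drift $\mu_n-z_n$ be eventually arbitrarily close to $-z+C(z)$. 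In the setting at hand this transfer is exactly what \cite[Theorem 2.6]{br10} packages, which is why the work is reduced to checking its assumptions.

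Next I would use the noise hypothesis. The quantity $\epsilon\bigl(\frac{1}{n+1}W_{n+1},T\bigr)$ is the supremum, over windows of rescaled length $T$ starting at time $\tau_n$, of the partial sums of the rescaled increments $\frac{1}{j+1}W_{j+1}$; the assumption \eqref{noise_general} that it vanishes almost surely for every $T>0$ is precisely the condition guaranteeing that the perturbation does not alter the asymptotics, i.e. that the interpolated process is indeed a perturbed solution of \eqref{general_continuo}. With \eqref{hip0}, the adaptedness of $C$, and \eqref{noise_general} all in force, \cite[Theorem 2.6]{br10} applies and gives that $\mathcal{L}((z_n)_n)$ is almost surely an internally chain transitive set of \eqref{general_continuo}. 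Here one also uses that $\mathcal{L}((z_n)_n)$ is nonempty (because $\Sigma$ is compact) and that the affine interpolation has vanishing step length, so its limit set coincides with $\mathcal{L}((z_n)_n)$.

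Finally, for the \emph{in particular} statement, I would invoke the standard attractor/\textup{ICT} dichotomy from \cite{bhs05}: a nonempty internally chain transitive set that meets the basin of an attractor is contained in that attractor. When $\mathcal{A}$ is a \emph{global} attractor its basin is all of $\Sigma$, hence the nonempty \textup{ICT} set $\mathcal{L}((z_n)_n)$ necessarily satisfies $\mathcal{L}((z_n)_n)\subseteq\mathcal{A}$, which is the desired conclusion.
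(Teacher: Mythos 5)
Your proposal is correct and takes essentially the same route as the paper: the paper states Theorem~\ref{general} as a direct consequence of \cite[Theorem 2.6]{br10} in this simplified framework, which is exactly the reduction you carry out, and its accompanying remark matches your reading of adaptedness and of condition~\eqref{noise_general} as the hypotheses making the interpolated process a perturbed solution of \eqref{general_continuo} whose limit set is internally chain transitive. Your final step, that a nonempty ICT set is contained in a global attractor via \cite{bhs05}, is also how the paper obtains the ``in particular'' clause.
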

\begin{remark} Roughly speaking, the fact that the set-valued map $C$ is adapted to $(z_n,\mu_n)$ means that  \eqref{eq:v} can be recast as
\begin{equation*}
  \label{eq:v'}
  z_{n+1} - z_n \in \dfrac{1}{n+1}( - z_n + C(z_n) + W_{n+1} ).
\end{equation*}
In turn, this recursive form can be seen as a Cauchy-Euler Scheme to approximate the solutions of the differential inclusion  \eqref{general_continuo} with decreasing step sizes and  added noise term $(W_n)_n$. Assumption~\eqref{noise_general} guarantees that, on any given  time horizon,  the noise term asymptotically vanishes. As a consequence, the limit set of $(z_n)_n$ can be described through the deterministic dynamics \eqref{general_continuo}, in the sense that it needs to be internally chain transitive. If the differential inclusion admits a global attractor, then any ICT set is contained in it. This implies the second point of the theorem (again, see \cite{bhs05} for more details about stochastic approximations for differential inclusions) .
\end{remark}

\subsection{Markovian fictitious play}\label{sec:mark-fict-play}

\noindent Bena\"im and Raimond~\cite{br10} introduce an adaptive process they call \textit{Markovian fictitious play} (MFP). As in Section~\ref{sec:model}, we consider that players have constraints on their action set, i.e. each player has an exploration matrix $M_0^i$ which is supposed to be irreducible and reversible with respect to its unique invariant measure $\pi_0^i$.

\noindent The crucial difference between (MFP) and the procedure introduced in Section~\ref{sec:MFP} is that players know their own payoff function. Also, at the end of each stage, each player is informed of the opponent's action. The (MFP) procedure is defined as follows. A player's i action at time $n+1$ is chosen accordingly to the non-homogeneous Markov matrix 
\begin{equation} \tag{MFP} \label{eq:M_n}
  \begin{aligned}
 \mathbb P( s_{n+1}^i=s \, \vert \, \mathcal F_n)&= M^i[\beta_n^i, U_n^i](s_n^i,s),\\
&= \begin{cases} M_0^i(s_n^1,s) \expo(-\beta_n^i  |U_n^i(s_n^i)- U_n^i(s)|_+) &s \neq s_n^1, \\ 1 - \sum \limits_{s' \neq s}M^i[\beta_n^i, U_n^i](s_n^i,s') & s =s_n^1, \\ \end{cases} 
  \end{aligned}
\end{equation}
 where $U_n^i$ is taken as the vector payoffs of player $i$, against the average moves of the opponent
\begin{equation}
  \label{eq:average_U}
  U_n^i =G^i(\cdot, v_n^{-i})= \dfrac{1}{n} \sum_{m=1}^n G^i(\cdot, s_m^{-i}),
\end{equation}
\noindent for all $s \in S^i$, and the function $M^i[\cdot, \cdot]$ is defined by \eqref{def_M}. Let $\tilde M_n^i=M^i[\beta_n^i, U_n^i]$.
\noindent Observe that again, from the irreducibility of $M_0^i$, the matrix $\tilde M_n^i$ is also irreducible. Also,  $\tilde \pi_n^i=\pi^i[\beta_n^i,G^i(\cdot, v^{-i}_n)]$ (where  $\pi^i[\cdot, \cdot]$ is defined in \eqref{mesinv}) is the unique invariant measure of $\tilde M_n^i$, i.e.
\begin{equation}\label{eq:inv}
\tilde \pi_n(s)= \frac{\pi_0^i(s)\expo(\beta_n^i U_n^i(s))}{\sum \limits_{s' \in S^i}\pi_0^i(s')\expo(\beta_n^i U_n^i(s'))},
\end{equation}
for every $s \in S^i$.

\noindent Bena\"im and Raimond \cite{br10} obtain the following result.

\begin{theorem} \label{th:BR10}
If both players follow the \textup{(MFP)} procedure, then the limit set of the sequence  $v_n=(v^1_n, v_n^2)$  is an {\em ICT} set of the Best-Response dynamics \eqref{BR}, provided for $i \in \{1,2\}$ the positive sequence $(\beta_n^i)_n$ satisfies
\begin{enumerate}
\item  $\beta_n^i \to +\infty$ as $n \to +\infty$.
\item  $\beta_n^i \leq A^i \log (n)$, for a sufficiently small positive constant $A^i$.
\end{enumerate}
 As a consequence, we have the following.
 \begin{enumerate}
 \item[\textup{(a)}] In a zero-sum game, $(v_n^1, v_n^2)_n$ converges almost surely to the set of Nash equilibria. %, and the average payoff $(\overline g_n^1)_n$ converges almost surely to the value of the game. 
\item[\textup{(b)}] If $ G^1=G^2$,  then  $(v_n^1, v_n^2)_n$ converges almost surely to a connected subset of the set of Nash equilibria on which $G^1$ is constant.
%If the set of equilibrium payoffs has  empty interior,  then $(\overline g_n^1)_n$ converges almost surely to this constant.  
 \end{enumerate}
\end{theorem}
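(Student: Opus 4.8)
The plan is to recognize the (MFP) process as a special instance of the general stochastic approximation framework of Section~\ref{ss:general} and then to invoke Theorem~\ref{general}. First I would take the underlying finite set to be the product of action sets, $S = S^1 \times S^2$, set $s_n = (s_n^1, s_n^2)$, and let $M_n = \tilde M_n^1 \otimes \tilde M_n^2$ be the product of the two players' transition matrices from \eqref{eq:M_n}. Since both $\tilde M_n^i$ are irreducible, so is $M_n$, its invariant measure factorizes as $\pi_n = \tilde\pi_n^1 \otimes \tilde\pi_n^2$, and assumptions \eqref{hip0} hold with $\FF_n$ the natural filtration. Choosing the observation map $H(s^1, s^2) = (\delta_{s^1}, \delta_{s^2}) \in \Delta$ then yields $z_n = v_n = (v_n^1, v_n^2)$ and, by \eqref{eq:mu} together with the product structure, $\mu_n = (\tilde\pi_n^1, \tilde\pi_n^2)$.

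The candidate adapted map is the Best-Response correspondence $C = \BR$, so that \eqref{general_continuo} becomes exactly \eqref{BR}. That $\BR$ has closed graph and nonempty convex values is standard for finite games. The substantive point is that $\mu_n$ is asymptotically a best response to $v_n$: writing $\tilde\pi_n^i$ in the Gibbs form \eqref{eq:inv} with $U_n^i = G^i(\cdot, v_n^{-i})$, the hypothesis $\beta_n^i \to \infty$ forces the mass of $\tilde\pi_n^i$ to concentrate on $\Argmax_{s \in S^i} G^i(s, v_n^{-i})$, whose convex hull is precisely $\BR^i(v_n^{-i})$. Hence $d(\mu_n^i, \BR^i(v_n^{-i})) \to 0$, and combined with the closed graph of $\BR$ this shows that every limit point $(z, \mu)$ of $(z_n, \mu_n)$ satisfies $\mu \in \BR(z)$; i.e. $\BR$ is adapted.

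The main obstacle is verifying the noise condition \eqref{noise_general} for $W_{n+1} = H(s_{n+1}) - \mu_n$. The difficulty is that $W_{n+1}$ is not a martingale increment: $\EE[H(s_{n+1}) \mid \FF_n] = \sum_s M_n(s_n, s) H(s)$ is the one-step law from the current state, not the invariant average $\mu_n = \sum_s \pi_n(s) H(s)$. To bridge this gap I would introduce, for each $n$, a solution $Q_n$ of the Poisson equation $(I - M_n) Q_n = H - \mu_n$ and use it to split $\frac{1}{n+1} W_{n+1}$ into a genuine martingale-difference part plus remainder terms measuring the variation of $Q_n$ between consecutive steps. The size of $Q_n$, hence of the remainders, is governed by the spectral gap (equivalently the mixing time) of $M_n$, which deteriorates as the temperature falls: for the Metropolis-type matrices \eqref{eq:M_n} the relaxation time grows at most like $\expo(c\,\beta_n^i)$, the exponent being tied to the energy barrier of $G^i$. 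This is exactly why one needs the second hypothesis $\beta_n^i \leq A^i \log n$ with $A^i$ small: it yields $\expo(c\,\beta_n^i) \leq n^{cA^i} = o(n)$, so the chain mixes faster than the step size shrinks and the accumulated noise on every finite horizon $T$ vanishes almost surely. This is the simulated-annealing-style freezing schedule, and carrying out the estimate quantitatively (which fixes the admissible range of $A^i$) is the heart of the argument; here I would rely on the corresponding estimates in \cite{br10}.

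With $\BR$ adapted and the noise controlled, Theorem~\ref{general} gives that $\mathcal L((v_n)_n)$ is almost surely an ICT set of \eqref{BR}, which is the first assertion. For the consequences I would exploit the Lyapunov structure of the Best-Response dynamics together with the characterization of ICT sets: an ICT set carrying a strict Lyapunov function is contained in the set of rest points, on which the function is constant. In the zero-sum case the duality gap $v \mapsto \max_{\sigma^1} G^1(\sigma^1, v^2) + \max_{\sigma^2} G^2(v^1, \sigma^2)$ is a strict Lyapunov function for \eqref{BR} whose zero set is the set of Nash equilibria, so any ICT set lies in it, giving (a). In the common-interest case $G^1 = G^2$, the common payoff $G^1$ is a strict Lyapunov function for \eqref{BR}, increasing along non-stationary solutions; hence any ICT set is a connected subset of the rest points, i.e. of the Nash equilibria, on which $G^1$ is constant, giving (b). Connectedness is automatic, since ICT sets are connected by definition.
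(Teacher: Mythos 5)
Your proposal follows essentially the same route as the paper's sketch: both instantiate Theorem~\ref{general} with $H(s)=(\delta_{s^1},\delta_{s^2})$ (so that $z_n=v_n$ and $\mu_n=(\tilde\pi_n^1,\tilde\pi_n^2)$), verify that $\BR$ is adapted via the concentration of the Gibbs measures \eqref{eq:inv} as $\beta_n^i\to\infty$, control the noise through the Poisson-equation/pseudo-inverse decomposition with the quantitative spectral-gap estimates deferred to \cite{br10}, and derive (a) and (b) from the Lyapunov structure of \eqref{BR} (the duality gap in the zero-sum case, the common payoff in the identical-interest case). One caveat on your heuristic: $|Q_n|\leq n^{cA^i}=o(n)$ is not enough — condition \eqref{H1} requires $|\tilde Q_n^i|^2\ln(n)/n\to 0$, i.e.\ roughly $|Q_n|=o\bigl((n/\ln n)^{1/2}\bigr)$, and this sharper threshold is exactly what dictates how small $A^i$ must be; since you explicitly delegate that estimate to \cite{br10}, as the paper itself does, the argument still stands.
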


\subsubsection*{Some insights on the proof of Theorem \ref{th:BR10}}
\noindent We believe it is interesting to sketch the proof of Theorem~\ref{th:BR10}. For that purpose, we need to introduce some notions that will be useful later on. 

\noindent Let $S$ be a finite set and $M$ an irreducible stochastic matrix over $S$ with invariant measure $\pi$. For a function $f : S \to \RR$, the variance and the energy of $f$ are defined, respectively, as
\begin{align*}
\var(f)&= \sum_{s \in S} \pi(s)f^2(s) - \left ( \sum_{s \in S} \pi(s)f(s)\right )^2,\\
\mathcal E(f,f) &= \frac{1}{2} \sum_{s,r \in S} (f(s) - f(r))^2M(s,r)\pi(s).
\end{align*}
 
\begin{definition}
Let $M$ be a stochastic irreducible matrix over the finite set $S$ and $\pi$ be its unique invariant measure.
\begin{enumerate}
\item The spectral gap of $M$ is defined by
\begin{equation}
\chi(M)= \min \biggl \{ \frac{\mathcal E(f,f)}{\var(f)} \, :\, \var(f)\neq 0 \biggr \}.
\end{equation}
\item The pseudo-inverse of $M$ is the unique matrix $Q \in \RR^{|S|\times|S|}$ such that $\sum_r Q(s,r) = 0$, for every $s \in S$, which  satisfies the Poisson's equation
\begin{equation}
 Q(I - M)= (I - M)Q= I- \Pi, \label{pseudo}
\end{equation}
where $\Pi$ is the matrix defined as $\Pi(s,r)=\pi(r)$ for every $s,r \in S$ and $I$ denotes the identity matrix.
\end{enumerate}
\end{definition}

\noindent For a matrix $Q \in \RR^{|S| \times |S|}$ and a vector $U \in \RR^{|S|}$, set $|Q|=\max_{s,r}|Q(s,r)|$ and $|U|= \max_s |U(s)|$.

\paragraph{Sketch of the proof of Theorem \ref{th:BR10}} We apply Theorem \ref{general} with $H(s) = (\delta_{s^1},\delta_{s^2})$. Recall that $v_n^i$ is the empirical frequency of play of player $i$. Thus, the random variable $z_n= v_n$ is given by
\[v_n = \frac{1}{n} \sum_{m=1}^n \left(\delta_{s_n^1}, \delta_{s_n^2} \right) = \left(v_n^1,v_n^2 \right).\]
Therefore, the evolution of $v_n$ is described by
\begin{equation}
    v_{n+1} - v_n= \dfrac{1}{n+1}(\mu_n - v_n + W_{n+1}),
\end{equation}
where $\mu_n = \sum_{s \in S}\tilde \pi_n(s) H(s) = \left(\tilde\pi_n^1,\tilde\pi_n^2 \right)$ and $\tilde W_{n+1}= (\delta_{s_n^1}, \delta_{s_n^2}) - \mu_n = \left(\delta_{s_n^1} - \tilde{\pi}_n^1, \delta_{s_n^2} - \tilde{\pi}_n^2 \right)$.

\noindent We first provide a sketch of the proof that \eqref{noise_general} holds for the sequence $(\tilde W_n)_n$. Afterwards, we will verify that the set-valued map $\BR$ is adapted to $(v_n,\mu_n)_n$ and conclude by applying Theorem~\ref{general}. 

\noindent Consequences $(a)$ and $(b)$ for games follow from the fact that the set of Nash equilibria is an attractor for the Best-Response dynamics in the relevant classes of games. We will omit this part of the proof, since the same argument will be used in Section~\ref{sec:proof-theor-refth:m}.

\noindent Let $\tilde Q_n^i$ be the pseudo-inverse of $\tilde M_n^i$. Bena\"im and Raimond prove that if , for $i \in \{1,2\}$,
\noindent 
\begin{equation}
 \begin{aligned}
   \lim \limits_{n \to +\infty} \frac{|\tilde Q^i_n|^2\ln(n)}{n}&=0, \\
    \lim \limits_{n \to +\infty} |\tilde Q^i_{n+1} - \tilde Q^i_n|&=0,\\
   \lim \limits_{n \to +\infty} |\tilde\pi^i_{n+1} - \tilde\pi^i_n|&=0, \ \
 \end{aligned} \label{H1}
\end{equation}
almost surely, then \eqref{noise_general} holds for $(\tilde W_n)_n$.

\noindent Proposition~3.4 in \cite{br10} shows that the norm of $\tilde Q_n^i$  can be controlled as a function of the spectral gap $\chi(\tilde M_n^i)$. If in addition the constants $A^i$ are sufficiently small, then \eqref{H1} holds. 

\noindent Finally, since $\beta_n^i \to +\infty$, we have that if $(v_n^1,v_n^2) \to (v^1,v^2)$, then $\tilde \pi_n^i \to \overline \pi^i[v^{-i}]$, where, for all $s \in S^i$
\begin{equation*}
 \overline \pi^i[v^{-i}](s)= \frac{\pi_0^i(s)\ind_{\{s \in \Argmax_r G^i(r, v^{-i}) \}}}{\sum \limits_{s' \in S^i}\pi_0^i(s')\ind_{\{s' \in \Argmax_r G^i(r, v^{-i}) \}}} \in \BR^i(v^{-i}).
\end{equation*}
\noindent This implies that map $\mathbf \BR$  is adapted to $(v_n, \mu_n)_n$ and the proof is finished.
\vspace{.3cm}
%%%%%%%%%%%%%%%%%%%%%%%%%%%%%%%%%%%%%%%%%%%%%%%%%%%%%%%%%%%%%%%%%%%%%%%%%%%%%%
%%%%%%%%% PROOF
%%%%%%%%%
%%%%%%%%%%%%%%%%%%%%%%%%%%%%%%%%%%%%%%%%%%%%%%%%%%%%%%%%%%%%%%%%%%%%%%%%%%%%%%
\section{Proof of the main result} \label{proofs}
As mentioned in Section~\ref{sec:mainresult}, we will prove a more general result. The following theorem implies that the conclusions of Theorem~\ref{th:BR10}  hold for our procedure.
\begin{theorem} \label{main2}
Under hypothesis \eqref{hyp_param}, assume that players follow the Payoff-based adaptive Markovian  procedure. Then the limit set of the sequence $(v_n)_n$ is an {\em ICT} set of the Best-Response dynamics \eqref{BR}. In particular if \eqref{BR} admits a global attractor $\mathcal A$, then $\mathcal L((v_n)_n) \subseteq \mathcal A$.
\end{theorem}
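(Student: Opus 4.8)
The plan is to deduce Theorem~\ref{main2} from the abstract result Theorem~\ref{general}, following the same route as the sketch of Theorem~\ref{th:BR10}, the only essential novelty being that the invariant measures are now driven by the payoff \emph{estimators} $R_n^i$ rather than by the exact average payoffs $U_n^i$. First I would instantiate the framework of Section~\ref{ss:general} with $S = S^1 \times S^2$, the process $(s_n,M_n)_n$ given by $s_n = (s_n^1,s_n^2)$ and the product matrix $M_n = M_n^1 \otimes M_n^2$, and the observation $H(s) = (\delta_{s^1},\delta_{s^2})$, so that $z_n = v_n = (v_n^1,v_n^2)$. Hypothesis~\eqref{hip0} holds because, given $\mathcal F_n$, the two players' draws are conditionally independent and each $M_n^i$ is irreducible (inherited from $M_0^i$). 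With this choice $\mu_n = (\pi_n^1,\pi_n^2)$, the noise is $W_{n+1} = (\delta_{s_{n+1}^1} - \pi_n^1, \delta_{s_{n+1}^2} - \pi_n^2)$, and the recursion~\eqref{eq:v} holds. It then remains to verify (i) the noise condition~\eqref{noise_general} for $(W_n)_n$, and (ii) that the correspondence $\BR$ is adapted to $(v_n,\mu_n)_n$.

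For the noise condition I would reproduce the argument used in the (MFP) case: it suffices to establish the three limits of~\eqref{H1} for the pseudo-inverse $Q_n^i$ of $M_n^i = M^i[\beta_n^i,R_n^i]$ and for $\pi_n^i$. The bound of \cite[Proposition~3.4]{br10} controls $|Q_n^i|$ through the spectral gap $\chi(M_n^i)$, which degrades at most like $\expo(C\beta_n^i)$ for a constant $C$ depending only on $M_0^i$ and the energy barrier. Here the decisive improvement over (MFP) is that hypothesis~\eqref{hyp_param}(ii) forces $\beta_n^i \le A_n^i\ln(n)$ with $A_n^i \to 0$, so $\expo(C\beta_n^i) = n^{C A_n^i} = n^{o(1)}$; consequently $|Q_n^i|^2\ln(n)/n \to 0$ \emph{automatically}, without any smallness assumption on a fixed constant. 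The continuity statements $|Q_{n+1}^i - Q_n^i|\to 0$ and $|\pi_{n+1}^i - \pi_n^i|\to 0$ reduce, via the explicit formula~\eqref{mesinv} and the smooth dependence of $Q$ on the matrix entries, to the vanishing of the increments of $\beta_n^i R_n^i$, which follows from $|R_{n+1}^i - R_n^i| = O(1/n)$ and $\beta_n^i = o(\ln n)$. The one new ingredient relative to (MFP) is that $R_n^i$ is now random, so these estimates must be shown to hold almost surely, which is where the boundedness and slow variation of $(R_n^i)_n$ (the companion lemmas of Section~\ref{lm:R}) are used.

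To prove that $\BR$ is adapted it suffices, exactly as at the end of Section~\ref{sec:mark-fict-play}, to show that along any subsequence with $v_{n_k}\to(v^1,v^2)$ one has $\mu_{n_k}^i = \pi_{n_k}^i \to \overline\pi^i[v^{-i}] \in \BR^i(v^{-i})$. Since $\beta_n^i\to+\infty$, the measure $\pi^i[\beta_n^i,R_n^i]$ of~\eqref{mesinv} concentrates on $\Argmax_s R_n^i(s)$, so the conclusion follows once one knows that the estimator asymptotically coincides with the true average payoff, namely
\begin{equation*}
R_n^i(s) - G^i(s,v_n^{-i}) \longrightarrow 0 \quad \text{a.s., for every } s \in S^i .
\end{equation*}
This convergence is the heart of the matter and the main obstacle: unlike in (MFP), where $U_n^i = G^i(\cdot,v_n^{-i})$ is computed exactly, here $R_n^i(s)$ is itself a stochastic-approximation quantity, refreshed only on the stages at which $s$ is actually played. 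I would establish it by writing the $R_n^i(s)$-recursion~\eqref{eq:ur} in Robbins--Monro form and computing the conditional mean increment
\begin{equation*}
\mathbb E\!\left[(g_{n+1}^i - R_n^i(s))\,\ind_{\{s_{n+1}^i = s\}} \,\middle|\, \mathcal F_n\right] = \sigma_n^i(s)\left(G^i(s,\sigma_n^{-i}) - R_n^i(s)\right),
\end{equation*}
using the conditional independence of the two players' actions given $\mathcal F_n$. The tailored step size $\gamma_n^i(s) = 1/(n\pi_{n-1}^i(s))$ is precisely designed so that, once combined with the visiting frequency $\sigma_n^i(s)$ and aggregated over time, the effective weights reproduce a genuine Cesàro average; an ergodic argument (the number of visits to $s$ up to time $n$ being asymptotically $n\pi_n^i(s)$) then identifies the limit of $R_n^i(s)$ with $G^i(s,v_n^{-i})$ up to a vanishing error. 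Once this tracking property is in hand, adaptedness of $\BR$ follows and Theorem~\ref{general} yields that $\mathcal L((v_n)_n)$ is almost surely an ICT set of~\eqref{BR}; the statement about a global attractor is then immediate, and the identification of the relevant attractors with the Nash sets in the zero-sum, potential, and $2\times k$ cases is carried out in the subsequent sections.
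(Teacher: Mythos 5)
Your skeleton matches the paper's proof: instantiate Theorem~\ref{general} with $\mu_n=(\pi_n^1,\pi_n^2)$, check the noise condition \eqref{noise_general} for $\delta_{s_{n+1}^i}-\pi_n^i$, and derive adaptedness of $\BR$ from the tracking property $R_n^i-G^i(\cdot,v_n^{-i})\to 0$ combined with the concentration of $\pi^i[\beta_n^i,R_n^i]$ as $\beta_n^i\to+\infty$ (this last step you do exactly as the paper does, and your remark that $A_n^i\to 0$ makes $|Q_n^i|=n^{o(1)}$ automatic, removing any smallness condition on constants, is correct). The genuine gap is in the tracking property, which you rightly call the heart of the matter but then settle by a Robbins--Monro conditional-mean computation plus an ``ergodic argument.'' The conditional mean you compute is $\sigma_n^i(s)\bigl(G^i(s,\sigma_n^{-i})-R_n^i(s)\bigr)$ with $\sigma_n^i(s)=M_n^i(s_n^i,s)$, so after multiplying by the step $\gamma_{n+1}^i(s)\approx 1/\bigl((n+1)\pi_n^i(s)\bigr)$ the per-step drift carries the random factor $M_n^i(s_n^i,s)/\pi_n^i(s)$, which is nowhere near $1$: because of the restrictions it vanishes whenever $M_0^i(s_n^i,s)=0$, and it fluctuates with the Markov state; moreover the payoff term is $G^i(s,\sigma_n^{-i})$, another state-dependent quantity, not $G^i(s,\pi_n^{-i})$. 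Your claim that the number of visits to $s$ is asymptotically $n\pi_n^i(s)$ is precisely what must be proved, and it is far from routine here: the chain is non-homogeneous, its transition matrices depend on the random estimators $R_n^i$ (which are themselves updated through $\pi_{n-1}^i$), and the invariant measures drift with entries that can vanish polynomially (only $\pi_n^i(s)\geq n^{-\alpha}$ is available, Lemma~\ref{stabilite}; note also that this makes $|R_{n+1}^i-R_n^i|$ of order $n^{-1+o(1)}$, not $O(1/n)$ as you assert). Controlling sums such as $\sum_n \frac{1}{n\,\pi_n^i(s)}\bigl(\ind_{\{s_{n+1}^i=s\}}-\pi_n^i(s)\bigr)(\cdots)$ requires a Poisson-equation decomposition for a time-varying chain with \emph{unbounded weights} $1/\pi_n^i(s)$; this is exactly the content of the paper's Proposition~\ref{prop:bruit} and of the rate estimates in Lemma~\ref{vitesse_matrices:2} (e.g. $|Q_n^i|/(n^a\pi_n^i(s)^b)\to 0$ and $|Q_{n+1}^i-Q_n^i|\,n^{1-\alpha}/\pi_n^i(s)\to 0$), which feed Propositions~\ref{bruit} and~\ref{bruit2}. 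Your noise discussion via \eqref{H1} covers only the unweighted noise of the $v_n$-recursion; the paper explicitly warns that \eqref{H1} cannot be used directly for this procedure, and indeed it says nothing about the weighted terms.

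For contrast, the paper's resolution is worth internalizing: it writes both $R_n^i$ (using \eqref{eq:ur} and Lemma~\ref{stabilite}) and $U_n^i=G^i(\cdot,v_n^{-i})$ as recursions with the \emph{same} drift $G^i(\cdot,\pi_n^{-i})$, centered at $\pi_n^i$ rather than at the conditional law $M_n^i(s_n^i,\cdot)$, pushing all discrepancies into weighted noise terms $W^{i,1},W^{i,2},W^{i,3}$; the difference $\zeta_n^i=R_n^i-U_n^i$ then satisfies $\zeta_{n+1}^i-\zeta_n^i=\frac{1}{n+1}\bigl(-\zeta_n^i+\mathcal W_{n+1}^i\bigr)$, and once the weighted noise is shown to satisfy \eqref{noise_general} via Proposition~\ref{prop:bruit}, $\zeta_n^i$ is a bounded perturbed solution of $\dot\zeta=-\zeta$ and converges to $0$. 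Without Proposition~\ref{prop:bruit} or an equivalent device handling Markovian, restricted, non-homogeneous sampling with the $1/\pi_n^i(s)$ weights, your proposal does not yet constitute a proof of Theorem~\ref{main2}.
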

\noindent There are two key aspects which highlight the difference between the proof of Theorem~\ref{main2} and the proof of Theorem~\ref{th:BR10}. First, to show that the noise sequence (defined in \eqref{bruit:final} below) satisfies condition \eqref{noise_general}, we cannot directly use condition \eqref{H1}. Second, the proof that $\BR$ is adapted to $(v_n, \mu_n)_n$  is considerably more involved. In contrast to the approach for the (MFP) procedure, the invariant measure $\pi_n^i$ of matrix $M_n^i$ depends on state variable $R_n^i$ which is updated, in turn, using $\pi_{n-1}^i$. To overcome these difficulties, we develop a more general approach, that is presented in the Appendix.

\noindent In what follows, we present an extended sketch of the proof of Theorem~\ref{main2}. The proof of Theorem~\ref{th:main} will follow as a corollary.
\subsection{Proof of Theorem \ref{main2}}
\begin{proof}
We aim to apply Theorem~\ref{general}. Let $\Sigma=\Delta(S^1) \times \Delta(S^2)$. We take $V_n = (\delta_{s_n^1},\delta_{s_n^2})$ and  $\mu_n = (\pi_n^1,\pi_n^2)$. As before, let $v_n=(v_n^1, v_n^2)$. Then we have 
\begin{equation}
v_{n+1} - v_n = \frac{1}{n+1} \left ( \mu_n - v_n + \overline W_{n+1} \right ),
\end{equation}
where
\begin{equation}\label{bruit:final}
 \overline W_{n+1} = \left(\overline W_{n+1}^1,\overline W_{n+1}^2 \right) = (\delta_{s_{n+1}^1}-\pi_{n}^1, \delta_{s_{n+1}^2} - \pi_n^2) .
\end{equation}
We need to verify that two conditions hold. First we need to prove that $\epsilon \left(\frac{1}{n+1} \overline W_{n+1},T \right)$ goes to zero almost surely for all $T>0$.  Proposition~\ref{bruit}~(ii) provides  proof of this.

\noindent Second, we need to verify that the Best-Response correspondence $\BR$ is adapted to $\left (v_n, \mu_n \right )_n$.  As we will see, this problem basically amounts to showing that vector $R_n^i$ becomes a good asymptotic estimator of vector $G^i(\cdot, v_n^{-i})$.

\noindent Fix $i \in \{1,2\}$ and $s \in S^i$. Lemma~\ref{stabilite} shows that for a sufficiently large $n$, $(\gamma_{n+1}^i(s))^{-1}= (n+1)\pi_n^i(s)$ for any $s \in S^i$. Therefore, from the definition of $R_n^i$ and without any loss of generality, we have
\begin{align} 
R_{n+1}^i(s) - R_n^i(s) &= \frac{1}{(n+1)\pi_n^i(s)} \left [ \ind_{\{s_{n+1}^i=s \} }G^i(s, s_{n+1}^{-i}) - \ind_{\{s_{n+1}^i=s\}}R_n^i(s) \right ],\nonumber \\
&= \frac{1}{(n+1)\pi_n^i(s)} \left [  \pi_n^i(s) \left ( G^i(s, \pi_{n}^{-i})  - R_n^i(s) \right ) \right.  +  \phantom{\left ( \ind_{\{s_{n+1}^i=s \} }G^i(s, s_{n+1}^{-i})  - \pi_{n}^i(s) G^i(s, \pi^{-i}_{n})  \right ) }\nonumber \\
 &+\big ( \ind_{\{s_{n+1}^i=s \} }G^i(s, s_{n+1}^{-i})  - \pi_{n}^i(s) G^i(s, \pi^{-i}_{n})  \big )  \left. +  R_n^i(s)\left (  \pi_{n}^i(s)  - \ind_{\{s_{n+1}^i=s \}} \right)\right], \nonumber \\
 &= \frac{1}{n+1}\left [  G^i(s, \pi_{n}^{-i})  - R_n^i(s) + W_{n+1}^i(s) \right ],  \label{R_final}
\end{align}
where for convenience we set $W_{n+1}^i(s) = W_{n+1}^{i,1}(s) + W_{n+1}^{i,2}(s)$ with 
\begin{align}
W^{i,1}_{n+1}(s) &=\frac{R_n^i(s)}{\pi_{n}^i(s)} \left (\pi_{n}^i(s)  - \ind_{\{s_{n+1}^i=s \}} \right ), \text{    and   } \label{w1}\\
W^{i,2}_{n+1}(s) &=\frac{1}{\pi^i_{n}(s)} \left ( \ind_{\{s_{n+1}^i=s \}} G^i( s_{n+1}^1, s_{n+1}^2) -\pi_{n}^i(s) G^i(s, \pi^{-i}_{n}) \right ). \label{w2}
\end{align}

\noindent Propositions \ref{bruit}$(i)$ and \ref{bruit2} prove that, almost surely and for any $T>0$, $\epsilon \left(\frac{1}{n+1}  W_{n+1}^{i,1}(s),T \right) \to 0$ and $\epsilon \left(\frac{1}{n+1}  W_{n+1}^{i,2}(s),T \right) \to 0$, respectively.

\noindent Recall that $U_n^i = G^i(\cdot, v_n^{-i})$. Naturally, the evolution of vector $U_n^i$ can be written as
\begin{equation} \label{average}
U_{n+1}^i - U_{n}^i = \frac{1}{n+1} \left (G^i(\cdot,\pi_{n}^{-i}) - U_{n}^i + W^{i,3}_{n+1} \right ),\\
\end{equation}
where 
\begin{equation*}
W^{i,3}_{n+1} = G^i(\cdot ,s_{n+1}^{-i}) -G^i(\cdot,\pi_{n}^{-i}).  \label{bruit_m}
\end{equation*}
\noindent Again, Proposition~\ref{bruit}~(iii) shows that for all $T>0$, $\epsilon \left(\frac{1}{n+1}  W^{i,3}_{n+1},T \right) \to 0$ almost surely.

\noindent We define $\zeta_n^i=R_{n}^i - G^i(\cdot, v_{n}^{-i}) = R_{n}^i - U_{n}^i $.  Equations \eqref{R_final} and \eqref{average}, show that the evolution of the sequence $(\zeta_n^i)_n$ can be recast as 
\begin{equation} \label{average2}
\zeta^i_{n+1} - \zeta^i_n = \frac{1}{n+1} \big [ - \zeta^i_n + \mathcal W^i_{n+1}  \big ],
\end{equation}
where
\begin{equation*}
\mathcal W^i_{n+1}= W^{i,1}_{n+1}+W^{i,2}_{n+1}-W^{i,3}_{n+1},
\end{equation*}
and each component of $W^{i,1}_{n+1}$ and $W^{i,2}_{n+1}$ defined by Equations \eqref{w1} and \eqref{w2}, respectively.

\noindent Collecting all the analysis above, we conclude that $\epsilon \left(\frac{1}{n+1} \mathcal W^{i}_{n+1},T \right) \to 0$ almost surely for all $T>0$.

\noindent Based on the fact that sequence $(\zeta_n^i)_n$ is bounded (see Lemma~\ref{stabilite}) and on standard results from stochastic approximation theory  (see, e.g. Bena\"im~\cite{benaim99}), the limit set of the sequence $(\zeta_n^i)_n$ is almost surely an ICT set of the ordinary differential equation $\dot \zeta = -\zeta$ which admits the set $\{ 0\}$ as a global attractor.

\noindent Therefore, for $i \in \{1,2 \}$, $R_{n}^i- G^i(\cdot,v_{n}^{-i}) \to 0$ as $n \to +\infty$, almost surely.

\noindent Now let us assume that 
\begin{equation*}
(v_{n_{k}}^1,v_{n_{k}}^2)  \to (v^1,v^2) \in \Sigma, \text{ and }  (\pi_{n_{k}}^1, \pi_{n_{k}}^2) \to (\pi^1, \pi^2) \in \Sigma, 
\end{equation*}
for a sub-sequence $(n_k)_k$.

\noindent For $i \in \{1,2 \}$, let $r \notin \Argmax_{s'} G^i(s',v^{-i})$ and take $\hat{s} \in S^i$ such that 
\[G^i(r,v^{-i}) < G^i(\hat{s},v^{-i}).\]
 Since  $R_{n}^i- G^i(\cdot,v_{n}^{-i}) \to 0$, there exists $\varepsilon>0$ and $k_0 \in \NN$ such that, for any $k \geq k_0$,
\[R^i_{n_{k}}(r) < R^i_{n_{k}}(\hat{s})- \varepsilon.\]
So that, for $k$ sufficiently large,
\[\pi^i_{n_{k}}(r) \leq\frac{\pi_o^i(r)}{\pi_o^i(\hat s)} \exp \left [ \beta_{n_{k}}^i (R^i_{n_{k}}(r) - R^i_{n_{k}}(\hat{s})) \right ] \leq \frac{\pi_o^i(r)}{\pi_o^i(\hat s)} \exp (-\beta_{n_{k}}^i \varepsilon).\]
Then $\pi^i(r) = 0$ and we have proved that $\pi^i \in \BR^i(v^{-i})$ which implies that set-valued map $\BR$ is adapted to $\bigl (v_n, \mu_n\bigr )_n$.
\end{proof}

\subsection{Proof of Theorem~\ref{th:main}} \label{sec:proof-theor-refth:m}
For all three points, the result follows from a direct application of Theorem~\ref{main2}. 

\noindent Consider the variable $z_n= (v_n^1, v_n^2,\overline g_n^1, \overline g_n^2)$, where  $\overline g_n^i= \frac{1}{n}\sum_{m=1}^ng_m^i$ is the average realized payoff for player $i \in \{1,2\}$. Recall that the evolution of  $\overline g_n^i$ can be written as
\begin{equation*}
  \begin{aligned}
    \overline g_{n+1}^i -\overline g_{n+1}^i&= \frac{1}{n+1}\left ( g_{n+1}^i - \overline g_n^i\right )\\
&= \frac{1}{n+1}\left ( G^i(\pi_{n}^i,\pi_{n}^{-i})  - \overline g_n^i + W_{n+1}^{i,4}   \right ) ,
  \end{aligned}
\end{equation*}
where $ W_{n+1}^{i,4}=G^i(s_{n+1}^i,s_{n+1}^{-i}) - G^i(\pi_{n}^i,\pi_{n}^{-i})$.

\noindent Let  $\mathbf G$ be the convex hull in $\RR^2$ of the set $\{(G^1(s,r),G^2(s,r)) \,:\, s \in S^1 \, , \, r \in S^2 \}$ and let $\Sigma= \Delta(S^1) \times \Delta(S^2) \times \mathbf G$. We define the set-valued map $\mathbf C : \Sigma \to \Sigma$ as
\begin{equation*}
  \mathbf C(z)=\left \{(\alpha^1, \alpha^2, \gamma) \in \Sigma \, :\, \alpha^1 \in \BR^1(v^2), \, \; \alpha^2 \in \BR^2(v^1), \;  \, \gamma= (G^1(\alpha^1, \alpha^2),G^2(\alpha^1, \alpha^2))  \right \},
\end{equation*}
for $z=(v^1, v^2, \overline g^1, \overline g^2) \in \Sigma$ and we consider the differential inclusion
\begin{equation}\label{inclusion_p}
 \dot z \in -z + \mathbf C(z).
\end{equation}

\noindent  From Theorem~\ref{main2}, the map $\mathbf C$  is adapted to $(z_n, \overline \mu_n)$ $\mathbf C$, where $\overline \mu_n= (\pi_n^1, \pi_n^2, (G^1(\pi_n^1, \pi_n^2),G^2(\pi_n^1, \pi_n^2))$.  

\noindent Proposition~\ref{bruit2}~(ii) shows that $\epsilon(W_{n+1}^{i,4}/(n+1), T)$ goes to zero almost surely for all fixed $T>0$. Therefore, by writing the evolution of $z_n$ in the same manner as for $v_n$ before,  we can conclude that the limit set of the sequence $(z_n)_n$ is an ICT set of the differential inclusion \eqref{inclusion_p}.

\paragraph{Zero-sum games}

\noindent  Hofbauer and Sorin~\cite{hs06}
(by exhibiting an explicit Lyapunov function) show that the set of Nash equilibria is a global attractor for $\BR$. Hence, if we denote by $g_*$  the value of the game, a direct consequence is that $$\{(v^1, v^2,g^1, g^2) \, : \, v^1 \ \in \BR^1(v^2) \, , \, v^2 \in \BR^2(v^1) \, , \, (G^1(v^1,v^2),G^2(v^1,v^2))=(g_*, - g_*))    \}$$  is a global attractor for \eqref{inclusion_p}. Therefore $(v_n)_n$ converges to the set of Nash equilibria and $\overline g_n^1$ converges to the value of the game.

\paragraph{Potential games} In the same spirit as above, $\Phi$ is a Lyapunov function for the differential inclusion \eqref{inclusion_p} (see Bena\"im, Hofbauer and Sorin~\cite[Theorem 5.5]{bhs05}). Since, in our case, the payoff functions are linear in all variables, Propositions 3.27 and 3.28 in \cite{bhs05} imply that $(v_n)_n$ converges almost surely to a connected component of Nash equilibria on which the potential $\Phi$ is constant. In particular, if $G^1 = G^2$, let $G^*$ be the value of $G$ on the limit set of $(v_n)_n$. Then $\lim_n G(v_n^1, v_n^2) = G^*$. Therefore, by definition of $\mathbf C$, we also have $\lim_n \overline{g}_n^1 = G^*$.

\paragraph{2 $\times N$ games} Our result follows from the fact that any trajectory of the Best-Response dynamics converges to the set of Nash equilibria in this case (see Berger~\cite{berger05}).

\paragraph{\bf Acknowledgements} This paper was initially motivated by a question from  Drew Fudenberg and Satoru Takahashi. The development of this project was partially funded by Fondecyt grant No. 3130732 and  by the Complex Engineering Systems Institute (ICM: P-05-004-F, CONICYT: FBO16). The authors would like to thank the Aix-Marseille School of Economics for inviting M. Bravo to work on this project. Both authors are indebted to Sylvain Sorin for useful discussions, and for inviting M. Faure to Jussieu (Paris 6) in the early stages of this work.

\appendix
\section{Appendix}

While Assumption~\eqref{hyp_param} is used here, in fact, the proofs are written in such a way that they can be easily extended to the case where the less stringent Assumption~\eqref{hyp_param'} is considered on the sequences $(\beta_n^i)_n$.

\subsection{A general result}

Returning to the framework of Section \ref{ss:general}, we consider a discrete time stochastic process $(s_n,M_n)_n$, defined on the probability space $(\Omega, \mathcal F, \mathbb P)$, taking values in $S \times \mathcal M(S)$ and satisfying \eqref{hip0}. The  space $(\Omega, \mathcal F, \mathbb P)$ is equipped with a non-decreasing sequence of $\sigma$-algebras $(\mathcal F_n)_n$. Let $\Sigma$ be a compact convex nonempty set which is assumed, for simplicity, to be contained in $\RR^{|S|}$. It will become clear that the argument extends to the case of arbitrary euclidean spaces.

\noindent As before, let $H: S \to \Sigma$, $V_n=H(s_n)$ and $\mu_n$ be defined by \eqref{eq:mu}.  The pseudo-inverse matrix of $M_n$ is denoted by $Q_n$ (see \eqref{pseudo}). The following technical proposition will be key to our main result.
\begin{proposition}\label{prop:bruit}
Let $(\varepsilon_n)_n$ be a real random process which is adapted to $(\mathcal F_n)_n$.  Let us assume that, almost surely,
\begin{enumerate}
\item $|\varepsilon_n| |Q_n| \leq n^{a}$ for $a <1/2$ and $n$ large,
\item  $ |Q_n| |\varepsilon_{n} -\varepsilon_{n-1}| \to 0$ , 
\item $|\varepsilon_n| \left ( |Q_{n+1} - Q_n| + |\pi_{n+1} - \pi_{n}| \right)  \to 0$.
\end{enumerate}
Let $$ W_{n+1} =\varepsilon_n \left (V_{n+1} -\mu_{n} \right ).$$
Then, for all $T>0$, 
\begin{equation*}
\epsilon \left(\frac{1}{n+1} W_{n+1},T\right) \to 0,
\end{equation*}
almost surely as $n$ goes to infinity.

\end{proposition}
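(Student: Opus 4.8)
The plan is to turn the Markovian noise $W_{n+1}=\varepsilon_n(V_{n+1}-\mu_n)$ into a martingale difference plus a telescoping coboundary plus a negligible drift, by means of the Poisson equation \eqref{pseudo}, and then to control each piece over the finite window $\{n,\dots,m(\tau_n+T)\}$ defining $\epsilon(\cdot,T)$.

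\emph{Step 1 (decomposition).} Write $g_n:=Q_nH$, a $\Sigma$-valued function on $S$, so that $|g_n|\le |Q_n|\,|H|$ with $|H|:=\max_{s\in S}\norm{H(s)}<\infty$. Applying \eqref{pseudo} to $H$ gives $(I-M_{n+1})g_{n+1}=H-\mu_{n+1}\mathbf 1$; evaluating at $s_{n+1}$ and using the Markov property in \eqref{hip0}, namely $(M_{n+1}g_{n+1})(s_{n+1})=\escon{g_{n+1}(s_{n+2})}{\FF_{n+1}}$, we obtain
\[
H(s_{n+1})-\mu_{n+1}=g_{n+1}(s_{n+1})-g_{n+1}(s_{n+2})+D_{n+2},
\]
where $D_{n+2}:=g_{n+1}(s_{n+2})-\escon{g_{n+1}(s_{n+2})}{\FF_{n+1}}$ is an $\FF_{n+2}$-martingale difference. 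Splitting $V_{n+1}-\mu_n=(H(s_{n+1})-\mu_{n+1})+(\mu_{n+1}-\mu_n)$ and multiplying by $\varepsilon_n$ yields
\[
W_{n+1}=\varepsilon_n\big(g_{n+1}(s_{n+1})-g_{n+1}(s_{n+2})\big)+\varepsilon_n D_{n+2}+\varepsilon_n(\mu_{n+1}-\mu_n).
\]
Summing $\tfrac1{j+1}W_{j+1}$ then produces three sums, estimated separately; throughout I use $\sum_{j=n}^{m(\tau_n+T)}\tfrac1{j+1}\le T+o(1)$.

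\emph{Step 2 (drift and coboundary).} Since $\norm{\mu_{n+1}-\mu_n}\le |H|\,|\pi_{n+1}-\pi_n|$, the third hypothesis gives $|\varepsilon_j|\norm{\mu_{j+1}-\mu_j}\to0$, so the drift sum over the window is bounded by $(T+o(1))\sup_{j\ge n}(\cdots)\to0$. For the coboundary I sum by parts: grouping $\varepsilon_j g_{j+1}(s_{j+2})/(j+1)$ with the next index, the sum reduces to two boundary terms of the form $\tfrac{\varepsilon_l}{l+1}g_{l+1}(s_{l+1})$, together with an interior sum of $\big(\tfrac{\varepsilon_j}{j+1}-\tfrac{\varepsilon_{j-1}}{j}\big)g_{j+1}(s_{j+1})$ and $\tfrac{\varepsilon_{j-1}}{j}\big(g_{j+1}-g_j\big)(s_{j+1})$. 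The boundary terms are $O(j^{a-1})$ because $|\varepsilon_j||Q_{j+1}|\le|\varepsilon_j||Q_j|+|\varepsilon_j||Q_{j+1}-Q_j|\le j^a+o(1)$ by the first and third hypotheses, hence vanish (here $a<1$ suffices). In the interior sum, the increment $\tfrac{\varepsilon_j}{j+1}-\tfrac{\varepsilon_{j-1}}{j}=\tfrac{\varepsilon_j-\varepsilon_{j-1}}{j+1}-\tfrac{\varepsilon_{j-1}}{j(j+1)}$ contributes a first part controlled by the second hypothesis (after the harmless shift $|Q_{j+1}|\le|Q_j|+|Q_{j+1}-Q_j|$) and a second part bounded by $C\,j^{a-2}$, which is summable; the term $(g_{j+1}-g_j)(s_{j+1})$ is bounded by $C\,|\varepsilon_{j-1}|\,|Q_{j+1}-Q_j|/j$ and handled by the third hypothesis.

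\emph{Step 3 (martingale) and obstacle.} The partial sums $M_N:=\sum_{j\le N}\tfrac{\varepsilon_j}{j+1}D_{j+2}$ form an $L^2$ martingale with $\escon{\norm{\tfrac{\varepsilon_j}{j+1}D_{j+2}}^2}{\FF_{j+1}}\le \tfrac{4|\varepsilon_j|^2|Q_{j+1}|^2|H|^2}{(j+1)^2}\le C\,j^{2a-2}$, which is summable \emph{precisely because} $a<1/2$; hence $(M_N)$ converges almost surely and its tail $\sup_{l\ge n}\norm{M_l-M_{n-1}}\to0$, dominating the martingale contribution to $\epsilon(\cdot,T)$. Adding the three estimates gives $\epsilon\big(\tfrac1{n+1}W_{n+1},T\big)\to0$ almost surely. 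I expect the main difficulty to be the bookkeeping of Step 2: one must match the index-shifted norms $|Q_{n+1}|$ versus $|Q_n|$ and $\varepsilon_{n-1}$ versus $\varepsilon_n$ to the three hypotheses exactly, since these are engineered so that every ``discrete derivative'' appearing in the summation by parts is a null or summable sequence. The martingale step, though it is where the sharp exponent $a<1/2$ is consumed, is then routine.
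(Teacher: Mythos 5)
Your overall architecture is sound and is essentially the paper's own: both proofs push the Markovian noise through the Poisson equation \eqref{pseudo} to produce a martingale part, a telescoping (coboundary) part, and perturbation terms, and both consume $a<1/2$ only in the martingale part. Your Step 3 is a legitimate variant: where the paper uses the exponential martingale inequality plus Borel--Cantelli, you use a.s.\ convergence of an $L^2$ martingale with summable conditional variances $\sum_j j^{2a-2}<\infty$; that works, and the shift $|\varepsilon_j||Q_{j+1}|\leq |\varepsilon_j||Q_j|+|\varepsilon_j||Q_{j+1}-Q_j|$ you use there (and for the boundary terms) is indeed covered by hypotheses (i) and (iii), since $\varepsilon_j$ is held fixed while $Q$ is shifted. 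Step 1 and the drift estimate are also correct.

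However, Step 2 contains a genuine gap: your summation by parts pairs the indices the wrong way. Your interior terms are $\bigl(\tfrac{\varepsilon_j}{j+1}-\tfrac{\varepsilon_{j-1}}{j}\bigr)g_{j+1}(s_{j+1})$ and $\tfrac{\varepsilon_{j-1}}{j}(g_{j+1}-g_j)(s_{j+1})$, whose estimation requires that $|\varepsilon_j-\varepsilon_{j-1}|\,|Q_{j+1}|$ and $|\varepsilon_{j-1}|\,|Q_{j+1}-Q_j|$ be small. The hypotheses only give $|\varepsilon_j-\varepsilon_{j-1}|\,|Q_j|\to 0$ (ii) and $|\varepsilon_j|\,|Q_{j+1}-Q_j|\to 0$ (iii); the shifts you call harmless reduce precisely to $|\varepsilon_{j-1}|\,|Q_{j+1}-Q_j|$ and $|\varepsilon_j-\varepsilon_{j-1}|\,|Q_{j+1}-Q_j|$, which do \emph{not} follow. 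Concretely, let $P$ be a fixed irreducible reversible matrix, take $M_n=P$ for all $n$ except $M_{k+1}=(1-\tfrac1k)I+\tfrac1k P$ along a sequence of indices $k$ (so all $\pi_n$ coincide, $|Q_{k+1}|=k|Q_P|$, and all other $|Q_n|=|Q_P|$), and set $\varepsilon_{k-1}=1/\ln k$, $\varepsilon_n=0$ otherwise. Then (i)--(iii) hold, yet $|\varepsilon_{k-1}||Q_{k+1}-Q_k|\asymp k/\ln k\to\infty$ and $|\varepsilon_k-\varepsilon_{k-1}||Q_{k+1}|\asymp k/\ln k\to\infty$, so your ``$(T+1)\times\sup$'' bound fails; worse, if the indices $k$ are taken with spacing four, the window sums of your two interior pieces each diverge (they nearly cancel each other), so no termwise estimate of your grouping can work. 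The fix is to regroup so that the coefficient increment meets $g_j$ and the $Q$-increment meets $\varepsilon_j$:
\begin{equation*}
\frac{\varepsilon_j}{j+1}g_{j+1}(s_{j+1})-\frac{\varepsilon_{j-1}}{j}g_j(s_{j+1})
=\Bigl(\frac{\varepsilon_j}{j+1}-\frac{\varepsilon_{j-1}}{j}\Bigr)g_j(s_{j+1})+\frac{\varepsilon_j}{j+1}\bigl(g_{j+1}-g_j\bigr)(s_{j+1}),
\end{equation*}
after which hypothesis (ii) applies verbatim to the first piece (the residual $\tfrac{|\varepsilon_{j-1}||Q_j|}{j(j+1)}$ is handled by $|\varepsilon_{j-1}||Q_j|\leq|\varepsilon_j||Q_j|+|\varepsilon_j-\varepsilon_{j-1}||Q_j|\leq j^a+o(1)$, using (i) and (ii)), and hypothesis (iii) applies verbatim to the second. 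This corrected pairing is exactly what the paper's four-term decomposition $u_n^1,\dots,u_n^4$ implements: there the coefficient difference multiplies $(M_nQ_n\mathbf H)[s_n]$, of norm $O(|Q_n|)$, while the operator difference $M_{n+1}Q_{n+1}-M_nQ_n=(Q_{n+1}-Q_n)-(\Pi_{n+1}-\Pi_n)$ multiplies $\varepsilon_n$.
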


\begin{proof}
 Let $c$ be a positive constant that may change from line to line. In a similar manner as in the proof of \cite[Theorem 2.6]{br10}, we can decompose the noise term as follows:
\begin{align*}
\frac{1}{n+1}W_{n+1}&= \frac{\varepsilon_n}{n+1}(V_{n+1} - \mu_{n}),\\
&= \frac{\varepsilon_n}{n+1}(H(s_{n+1} )- \mu_{n}),\\
&= \frac{\varepsilon_n}{n+1}(H(s_{n+1} )- \sum_{s \in S} \pi_n(s) H(s)).
\end{align*}
 For a matrix $A \in \RR^{|S|\times |S|}$ and for any $r \in S$, let  $A[r]$ be the $r$-th line of $A$. Let us identify the function $H$ with the matrix $\mathbf H$ where, for each $r \in S$, $\mathbf H[r] = H(r)$. Notice that, by definition of the matrix $\Pi_n$, we have that $\Pi_n \mathbf H[r]= \mu_n$ for every $r \in S$. Therefore we can write
 
 \begin{equation*}
 \begin{aligned}
 \frac{1}{n+1}W_{n+1}&= \frac{\varepsilon_n}{n+1} \left ( (I - \Pi_n)\mathbf H \right )[s_{n+1}]\\
 &=  \frac{\varepsilon_n}{n+1} \left ( (Q_{n} - M_{n} Q_{n} )\mathbf H \right ) [s_{n+1}],\\
 &=  \frac{\varepsilon_n}{n+1}   \left ( (Q_{n}\mathbf H) [s_{n+1}] -   (M_{n} Q_{n} \mathbf H)  [s_{n+1}]  \right ), \\
  &= \sum_{j=1}^4u_k^j, \\
  \end{aligned}
 \end{equation*}
where the second identity follows from the definition of the pseudo-inverse matrix, and
\begin{align*}
u_n^1 &= \frac{\varepsilon_n}{n+1}\left ( (Q_{n}\mathbf H) [s_{n+1}]  - (M_{n} Q_{n}\mathbf H) [s_n] \right), \\
u_n^2 &=  \frac{\varepsilon_n}{n+1} (M_{n} Q_{n} \mathbf H) [s_n] - \frac{\varepsilon_{n-1}}{n} (M_{n} Q_{n} \mathbf H)[s_n] , \\
u_n^3 &=\frac{\varepsilon_{n-1}}{n} (M_{n} Q_{n} \mathbf H)[s_n] -  \frac{\varepsilon_n}{n+1} (M_{n+1}Q_{n+1} \mathbf H)[s_{n+1}], \\
u_n^4 &= \frac{\varepsilon_n}{n+1}(M_{n+1}Q_{n+1} \mathbf H)[s_{n+1}] -\frac{\varepsilon_n}{n+1} (M_{n}Q_{n} \mathbf H)[s_{n+1}],\\
&=\frac{\varepsilon_n}{n+1}\left (M_{n+1}Q_{n+1} - M_{n}Q_{n} \right )\mathbf H[s_{n+1}].
\end{align*}

\noindent Since $\escon{ (Q_{n}\mathbf H) [s_{n+1}]}{\mathcal F_n}= (M_{n} Q_{n}\mathbf H) [s_n]$, the random process $u_n^1$ is a martingale difference and 
$$\norm{u_n^1}\leq c \frac{|\varepsilon_n||Q_n|}{n+1}.$$
\noindent The exponential martingale inequality (see Equation~(18) in \cite{benaim99}) gives that, for all $K>0$  
\begin{equation*}
\mathbb P(\epsilon(u_n^1,T) \geq K) \leq c \expo \left ( \frac{-K^2}{c \sum_{j=n}^{m(\tau_n +T)} \varepsilon_j^2 |{Q}_j|^2/j^{2} }\right ).
\end{equation*} 
\noindent By assumption we have that, almost surely and for $j$ large enough,  $|\varepsilon_j| |Q_j| \leq  j^{a}$, for $a < 1/2$. So that 
\begin{equation*}
c \sum_{j=n}^{m(\tau_n +T)} \frac{ \varepsilon_j^2 |{Q}_j|^2}{j^{2}}  \leq c\frac{1}{n^{1 - 2a}} \sum_{j=n}^{m(\tau_n +T)} \frac{1}{j} \leq c\frac{T+1}{n^{1 - 2a}},
\end{equation*}
by definition of $m(t)$. Therefore
\begin{equation*}
  \mathbb P(\epsilon(u_n^1,T) \geq K) \leq c \expo \left ( - \frac{K^2}{T+1}n^{2a -1} \right ).
\end{equation*}
Finally, from the fact that $a< 1/2$, we have $\sum_{n\geq1} \mathbb P(\epsilon(u_k^1,T) \geq K)  < +\infty$ for all $K>0$ and the Borel--Cantelli lemma implies that $\epsilon(u_n^1,T) \to 0$ almost surely.

\noindent For the second term,  
\begin{equation*}
\begin{aligned}
\epsilon(u_n^2,T) & \leq c  \sum_{j=n}^{m(\tau_n +T)} |Q_j|\left | \frac{\varepsilon_{j-1}}{j} -  \frac{\varepsilon_j}{j+1}  \right |, \\
&=  c  \sum_{j=n}^{m(\tau_n +T)} |Q_j|\left | \frac{(j+1)\varepsilon_{j-1} - j \varepsilon_j}{j(j+1)} \right | =  c  \sum_{j=n}^{m(\tau_n +T)} |Q_j|\left |  \frac{\varepsilon_{j-1} -\varepsilon_{j}}{j} + \frac{\varepsilon_j}{j(j+1)} \right |\\
&\leq c\left [\sup_{j \geq n} |Q_j| |\varepsilon_{j} -\varepsilon_{j-1}| +  \sup_{j \geq n}\frac{ |Q_j| |\varepsilon_j|}{j+1} \right ]\sum_{j=n}^{m(\tau_n +T)} \frac{1}{j},\\
 &\leq c\left [\sup_{j \geq n} |Q_j| |\varepsilon_{j} -\varepsilon_{j-1}| +  \sup_{j \geq n}\frac{ |Q_j| |\varepsilon_j|}{j+1} \right ](T+1),\\
\end{aligned}
\end{equation*}
by definition of $m(t)$. Hence, from assumptions (i) and (ii), we conclude that $\epsilon(u_n^2,T) \to 0$ almost surely.

\noindent Now for $u_n^3$, by cancellation of successive terms, 
\begin{equation*}
\begin{aligned}
\epsilon(u_n^3,T) & = \frac{\varepsilon_{n-1}}{n} (M_{n} Q_{n} \mathbf H)[s_n] - \frac{\varepsilon_{m(\tau_n +T)-1}}{m(\tau_n +T)} (M_{m(\tau_n +T)} Q_{m(\tau_n +T)} \mathbf H) [s_{m(\tau_n +T)}], \\
& \leq 2 \sup_{j \geq n}\frac{ |Q_j| |\varepsilon_{j-1}|}{ j}.\\
\end{aligned}
\end{equation*}
which implies, by (i), that $\epsilon(u_n^3,T) \to 0$ almost surely.
\vspace{.2cm}

\noindent For the fourth term, recall that $M_nQ_n=Q_n - I + \Pi_n$ for all $n \in \NN$. Therefore, we can write
\begin{equation*}
u_n^4=\frac{\varepsilon_n}{n+1}  \left ( Q_{n+1} - Q_{n} -( \Pi_{n+1} - \Pi_{n}) \right ) \mathbf H[s_{n+1}].
\end{equation*}
Hence
\begin{equation*}
\begin{aligned}
\epsilon(u_n^4,T) &\leq  c \sum_{j=n}^{m(\tau_n +T)} \frac{1}{j} \left [ \sup_{j \geq n}|\varepsilon_j| |Q_{j+1} - Q_j| + |\varepsilon_j||\pi_{j+1} - \pi_{j}| \right ] ,\\
& \leq c (T+1)\left [ \sup_{j \geq n}|\varepsilon_j| |Q_{j+1} - Q_j| + |\varepsilon_j||\pi_{j+1} - \pi_{j}| \right ].
\end{aligned}
\end{equation*}
Assumption (iii) implies that $\epsilon(u_n^4,T) \to 0$ almost surely, as $n$ goes to infinity.
\end{proof}

\subsection{Stability}

\noindent The following lemma is a trivial consequence of the recursive definition of the vector $R_n^i$ and the fact that $\gamma_n^i(s) \in ]0,1]$.

\begin{lemma}  \label{lm:R} For any $i \in \{1,2\}$, $s \in S^i$, $n \in \mathbb{N}$, we have $R_n^i(s) \in [-K^i,K^i]$, where 
  \begin{equation}
    \label{eq:ki}
K^i = \max\{ \max_{s,r \in S^i}|R_0^i(s) - R_0^i(r)|, \max_{s \in S^i}\max_{s^{-i}, r^{-i} \in S^{-i}} |G^i(s,s^{-i}) - G^i(s, r^{-i})|\}.
  \end{equation}
\end{lemma}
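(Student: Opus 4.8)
The plan is to prove the claimed uniform bound on $R_n^i(s)$ by induction on $n$, exploiting the convex-combination structure of the updating rule \eqref{eq:ur}. The key observation is that since $\gamma_{n+1}^i(s) \in \left]0,1\right]$, whenever the component $s$ is actually updated (i.e. on the event $\{s_{n+1}^i = s\}$), the new value $R_{n+1}^i(s)$ is a convex combination of the old value $R_n^i(s)$ and the realized payoff $g_{n+1}^i = G^i(s, s_{n+1}^{-i})$; and when $s$ is not updated, $R_{n+1}^i(s) = R_n^i(s)$ is unchanged. In either case, the value stays within the convex hull of $\{R_n^i(s)\} \cup \{G^i(s, s^{-i}) : s^{-i} \in S^{-i}\}$.

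First I would rewrite the update for the component $s$ as
\begin{equation*}
R_{n+1}^i(s) = \left(1 - \gamma_{n+1}^i(s)\ind_{\{s_{n+1}^i=s\}}\right) R_n^i(s) + \gamma_{n+1}^i(s)\ind_{\{s_{n+1}^i=s\}}\, G^i(s, s_{n+1}^{-i}),
\end{equation*}
so that $R_{n+1}^i(s)$ is manifestly a convex combination of $R_n^i(s)$ and $G^i(s, s_{n+1}^{-i})$ with weights summing to one and lying in $[0,1]$, using $\gamma_{n+1}^i(s) \in \left]0,1\right]$. The natural inductive quantity is not a bound on $|R_n^i(s)|$ itself but on the oscillation, because the constant $K^i$ in \eqref{eq:ki} involves only differences $|R_0^i(s) - R_0^i(r)|$ and payoff differences $|G^i(s,s^{-i}) - G^i(s,r^{-i})|$. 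Thus I would track the interval
\begin{equation*}
\left[\min_{r^{-i}} G^i(s, r^{-i}) - \text{(slack)},\ \max_{r^{-i}} G^i(s, r^{-i}) + \text{(slack)}\right]
\end{equation*}
and show each $R_n^i(s)$ remains trapped inside an interval of half-width $K^i$ centered appropriately. Concretely, since each realized payoff $G^i(s, s_{n+1}^{-i})$ lies in $[\min_{r^{-i}} G^i(s,r^{-i}), \max_{r^{-i}} G^i(s,r^{-i})]$, whose length is at most the second term in the max defining $K^i$, and since $R_0^i(s)$ lies within $K^i$ of this interval by the first term, the convex-combination structure propagates the containment: a convex combination of a point in an interval $I$ and a point within distance $K^i$ of $I$ stays within distance $K^i$ of $I$.

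The induction is then routine: assuming $R_n^i(s)$ lies in the $K^i$-neighborhood of $[\min_{r^{-i}} G^i(s,r^{-i}), \max_{r^{-i}} G^i(s,r^{-i})]$, the convexity argument shows $R_{n+1}^i(s)$ does too, and this neighborhood is contained in $[-K^i, K^i]$ after verifying the base case $R_0^i$. I do not anticipate a genuine obstacle here, which is consistent with the lemma being described as ``a trivial consequence''; the only point requiring mild care is choosing the right inductive invariant so that the bound is stated in terms of $K^i$ rather than a constant that grows with $n$. The crucial structural fact making this work is precisely that only the single component $s = s_{n+1}^i$ is touched at each step and that the step size is capped at $1$, so no component can ever overshoot the convex hull of past payoffs and its initial value.
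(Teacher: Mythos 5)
Your rewriting of \eqref{eq:ur} as $R_{n+1}^i(s) = (1-\lambda_{n+1})R_n^i(s) + \lambda_{n+1}\, G^i(s,s_{n+1}^{-i})$ with $\lambda_{n+1} = \gamma_{n+1}^i(s)\ind_{\{s_{n+1}^i=s\}} \in [0,1]$ is correct, and it is exactly the mechanism behind the paper's one-line justification (the lemma is presented there as a trivial consequence of the recursion and of $0 < \gamma_n^i(s) \le 1$; no further detail is given). The gap lies in the two steps you use to turn this into the bound $[-K^i,K^i]$. Your base case asserts that $R_0^i(s)$ lies within $K^i$ of the interval $I_s = [\min_{r^{-i}} G^i(s,r^{-i}), \max_{r^{-i}} G^i(s,r^{-i})]$ ``by the first term'' of \eqref{eq:ki}; but that first term is $\max_{s,r}|R_0^i(s)-R_0^i(r)|$, an oscillation of $R_0^i$ across its own components, and it says nothing about the distance from $R_0^i(s)$ to $I_s$. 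Your final step asserts that the $K^i$-neighborhood of $I_s$ is contained in $[-K^i,K^i]$; since that neighborhood is $[\min I_s - K^i, \max I_s + K^i]$, the inclusion can hold only if $I_s \subseteq \{0\}$. Both steps fail already when $R_0^i \equiv 0$ and $G^i \equiv c$ for a constant $c \neq 0$: both maxima in \eqref{eq:ki} vanish, so $K^i = 0$, while $I_s = \{c\}$.

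The same example shows this is not a repairable defect of your write-up: since $\pi_0^i(s) \le 1$ forces $\gamma_1^i(s)=1$, after the first stage $R_1^i(s_1^i) = g_1^i = c \notin [-K^i,K^i] = \{0\}$, so the statement with the constant \eqref{eq:ki} cannot be proved at all --- the constant is built from payoff \emph{differences}, whereas the conclusion bounds \emph{absolute values}. What your convex-combination observation does prove, by exactly the induction you outline, is the invariant $R_n^i(s) \in \co\bigl(\{R_0^i(s)\} \cup \{G^i(s,s^{-i}) : s^{-i} \in S^{-i}\}\bigr)$ for all $n$, hence $\sup_n |R_n^i(s)| \le \tilde K^i := \max\bigl\{\max_{r \in S^i}|R_0^i(r)|,\; \max_{r \in S^i}\max_{s^{-i} \in S^{-i}}|G^i(r,s^{-i})|\bigr\}$. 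That is the statement to prove, and it is all the paper actually needs: under \eqref{hyp_param}, the proofs of Lemma \ref{stabilite} and Lemma \ref{lemma:spectral} only use that the values (in fact, only the differences) of $R_n^i$ are bounded by \emph{some} fixed constant, and since $A_n^i \to 0$ the size of that constant is immaterial, so $K^i$ may be replaced by $\tilde K^i$ throughout with no other change. The correct course is therefore to state the convex-hull invariant, run your induction (which then really is routine), and conclude with $\tilde K^i$ in place of $K^i$, rather than trying to force the oscillation-based constant of \eqref{eq:ki}.
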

\noindent The following result states that, without loss of generality, we can suppose that the step size $\gamma_n^i(s)$ is equal to $1/n \pi_{n-1}^i(s)$ for all $s \in S^i$.

\begin{lemma}\label{stabilite} Let $\alpha \in ]0,1[$
There exists $n_0(\alpha) \in \mathbb{N}$ \textup{(}which only depends on $\alpha$, $R^i_0$, the payoff functions $G^i$ and the vanishing sequence $(A_n^i)_n$\textup{)} such that, for any $n \geq n_0(\alpha)$ and $s \in S^i$, 
\[\pi_n^i(s) \geq    n^{- \alpha}.\]

In particular, there exists $n_0 \in \mathbb{N}$ such that, for any $n \geq n_0$ and $s \in S^i$, 
\[\gamma_n^i(s) = \frac{1}{n \pi_{n-1}^i(s)}.\]
\end{lemma}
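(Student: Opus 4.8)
The plan is to combine the uniform bound on the estimator $R_n^i$ furnished by Lemma~\ref{lm:R} with the sub-logarithmic growth of $\beta_n^i$ imposed by \eqref{hyp_param}. The starting point is the explicit formula \eqref{mesinv} for the invariant measure, which I would rewrite as
\[
\frac{1}{\pi_n^i(s)} = \sum_{r \in S^i} \frac{\pi_0^i(r)}{\pi_0^i(s)}\, \expo\big(\beta_n^i (R_n^i(r) - R_n^i(s))\big).
\]
By Lemma~\ref{lm:R} every component of $R_n^i$ lies in $[-K^i,K^i]$, so $R_n^i(r) - R_n^i(s) \leq 2K^i$ for all $r,s \in S^i$. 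Moreover $\pi_0^i$ is the invariant measure of an irreducible matrix and therefore has strictly positive entries, so $\rho^i := \max_{r,s \in S^i}\pi_0^i(r)/\pi_0^i(s)$ is finite. Substituting these two bounds into the sum yields the key estimate
\[
\pi_n^i(s) \geq \frac{1}{|S^i|\,\rho^i}\, \expo(-2K^i\beta_n^i),
\]
valid for every $n \in \NN$ and every $s \in S^i$.

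Next I would feed in the growth condition from \eqref{hyp_param}, namely $\beta_n^i \leq A_n^i \ln(n)$ with $A_n^i \to 0$. This gives $\expo(-2K^i\beta_n^i) \geq n^{-2K^iA_n^i}$, hence
\[
\pi_n^i(s) \geq \frac{1}{|S^i|\,\rho^i}\, n^{-2K^iA_n^i}.
\]
Fixing $\alpha \in (0,1)$, I would then argue that both factors are eventually at least $n^{-\alpha/2}$: since $A_n^i \to 0$ there is $n_1$ with $2K^iA_n^i \leq \alpha/2$ for $n \geq n_1$, and since $1/(|S^i|\rho^i)$ is a fixed positive constant there is $n_2$ with $1/(|S^i|\rho^i) \geq n^{-\alpha/2}$ for $n \geq n_2$. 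For $n \geq n_0(\alpha) := \max\{n_1,n_2\}$ the two contributions multiply to give $\pi_n^i(s) \geq n^{-\alpha/2}\,n^{-\alpha/2} = n^{-\alpha}$, which is the first claim; by construction $n_0(\alpha)$ depends only on $\alpha$, on the constant $K^i$ (hence on $R_0^i$ and $G^i$), on the sequence $(A_n^i)_n$, and on the fixed primitives $\pi_0^i$, $S^i$, as required.

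For the ``in particular'' statement I would apply the first part with a single fixed exponent, say $\alpha = 1/2$. For $n$ large one then has $\pi_{n-1}^i(s) \geq (n-1)^{-1/2}$, whence $n\,\pi_{n-1}^i(s) \geq n\,(n-1)^{-1/2} \to +\infty$; in particular $n\,\pi_{n-1}^i(s) \geq 1$ for all sufficiently large $n$, so the truncation in $\gamma_n^i(s) = \min\{1,\,1/(n\pi_{n-1}^i(s))\}$ becomes inactive and $\gamma_n^i(s) = 1/(n\pi_{n-1}^i(s))$, as claimed.

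I do not expect a genuine obstacle here; the one point to verify carefully is that there is no circularity. The lower bound on $\pi_n^i$ relies on the bound $R_n^i \in [-K^i,K^i]$, and that bound in turn must not presuppose the form of $\gamma_n^i$. This is fine: Lemma~\ref{lm:R} holds unconditionally because the update makes $R_{n+1}^i(s)$ a genuine convex combination of $R_n^i(s)$ and the realized payoff precisely when $\gamma_n^i(s) \in \,]0,1]$, which holds by the very definition of the step size as a minimum with $1$. The conceptual heart of the argument is simply that the requirement $\beta_n^i = o(\ln n)$ built into \eqref{hyp_param} is exactly calibrated so that the worst-case exponential decay $\expo(-2K^i\beta_n^i)$ is slower than every polynomial rate $n^{-\alpha}$ with $\alpha > 0$.
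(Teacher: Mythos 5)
Your proof is correct and follows essentially the same route as the paper's: both combine the uniform bound $R_n^i \in [-K^i,K^i]$ from Lemma~\ref{lm:R} with $\beta_n^i \leq A_n^i\ln(n)$, $A_n^i \to 0$, to get a lower bound of the form $c\, n^{-2K^iA_n^i}$ and then absorb the constant into a slightly larger exponent. The only cosmetic differences are that the paper normalizes the exponents by $\max_r R_n^i(r)$ (so the denominator is bounded by $1$ and the constant is $\min_r \pi_0^i(r)$ rather than your $1/(|S^i|\rho^i)$) and uses an auxiliary exponent $\alpha' < \alpha$ where you split $\alpha$ into two halves.
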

\begin{proof} Let $\alpha \in ]0,1[$ and $\alpha' \in ]0,\alpha[$. Choose $n_0 \in \mathbb{N}$ such that, for any $n \geq n_0$, $2K^i A_n^i  \leq \alpha'$, where $K^i$ is defined in \eqref{eq:ki}, and take $r_n \in S^i$ such that $R_n^i(r_n)= \max_r R_n^i(r)$. Then, for any $s \in S^i$,
\begin{equation*}
\begin{aligned}
\pi_n^i(s)&= \dfrac{\pi_0^i(s)\exp \left ( A_n^i \ln(n) (  R_n^i(s) - R_n^i(r_n) ) \right )}{\pi_0^i(r_n) + \sum_{r \neq r_n} \pi_0^i(r)\exp \left ( A_n^i \ln(n) (  R_n^i(r) - R_n^i(r_n)  \right ) }\\
&\geq \pi_0^i(s)  \exp \left (-2A^i_n \ln(n)K^i \right ) \geq  \min_r \pi_0^i(r) \exp (-\alpha' \ln(n)) \geq  \min_r \pi_0^i(r)  n^{- \alpha'}. \\
\end{aligned}
\end{equation*}
Without loss of generality, we can assume that $n_0$ is large enough so that $\min_r \pi_0^i(r)  n^{- \alpha'} \geq n^{- \alpha}$. This concludes the proof of the first point. In particular, there exists $n_0 \in \mathbb{N}$ such that, for any $n \geq n_0$, $(n+1) \pi_n^i(s) > 1$, which proves the second point. 
\end{proof}

\subsection{Analysis of the noise sequences}\label{est_noise}
Let us fix $i \in \{1,2\}$ and let $\chi_n^i$ be the spectral gap of the matrix $M_n^i=M[\beta_n^i, R_n^i]$, i.e.
\[\chi_n^i = \min \biggl \{ \frac{\mathcal E_n^i(f,f)}{\var^i_n(f)} \, :\, \var_n^i(f)\neq 0 \biggr \},\]
where 
\begin{align*}
\var_n^i(f)&= \sum_{s \in S^i} \pi_n^i(s)f^2(s) - \left ( \sum_{s \in S^i} \pi_n^i(s)f(s)\right )^2,\\
\mathcal E_n^i(f,f) &= \frac{1}{2} \sum_{s,r \in S^i} (f(s) - f(r))^2M_n^i(s,r)\pi_n^i(s).
\end{align*}

\noindent The following result is a direct consequence of results of Holley and Strook~\cite{HolleyStroock88}. 
\begin{lemma} \label{lemma:spectral}
 There exists a positive constant $c$  such that, for a sufficiently large $n \in \NN$
 \begin{equation}\label{spectral}
 c \expo(- 2K^i \beta_n^i )\leq \chi_n^i,
 \end{equation}
where $K^i$ is defined in Equation \eqref{eq:ki}.
\end{lemma}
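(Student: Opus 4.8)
The statement to be proved is the lower bound $c\,\expo(-2K^i\beta_n^i)\leq\chi_n^i$ on the spectral gap of the matrix $M_n^i=M^i[\beta_n^i,R_n^i]$, where $K^i$ is the uniform bound on the entries of $R_n^i$ given in Lemma~\ref{lm:R}. The natural route is to invoke the Poincar\'e-type inequality of Holley and Strook~\cite{HolleyStroock88} for reversible Markov chains of simulated-annealing type, which relates the spectral gap of a Gibbs-weighted chain to the so-called \emph{energy barrier} of the underlying landscape. The key structural fact I would exploit is that $M_n^i$ is built from the fixed irreducible, reversible exploration matrix $M_0^i$ by the Metropolis-like reweighting in \eqref{def_M}, and that its invariant measure $\pi_n^i=\pi^i[\beta_n^i,R_n^i]$ is the explicit Gibbs measure \eqref{mesinv} associated with the potential $-R_n^i$ at inverse temperature $\beta_n^i$.

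First I would reduce the estimate to a comparison between the Dirichlet form $\mathcal E_n^i$ and the variance $\var_n^i$. Writing the transition probabilities and the invariant measure explicitly in terms of $\beta_n^i$, $R_n^i$, and the base data $M_0^i,\pi_0^i$, one sees that every relevant quantity---the conductances $M_n^i(s,r)\pi_n^i(s)$ appearing in $\mathcal E_n^i$, and the weights $\pi_n^i(s)$ appearing in $\var_n^i$---is an exponential in $\beta_n^i$ multiplied by a factor depending only on $M_0^i$ and $\pi_0^i$. The second step is to control these exponentials uniformly: since $R_n^i(s)\in[-K^i,K^i]$ for all $s$ by Lemma~\ref{lm:R}, every exponent $\beta_n^i(R_n^i(s)-R_n^i(r))$ and $\beta_n^i|R_n^i(s)-R_n^i(r)|_+$ lies in $[-2K^i\beta_n^i,\,2K^i\beta_n^i]$. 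Consequently the ratio of any two conductances, or of a conductance to the worst-case variance contribution, is bounded below by a constant (coming from the irreducibility of $M_0^i$, which guarantees the relevant off-diagonal entries are strictly positive along a spanning path) times $\expo(-2K^i\beta_n^i)$.

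The third step is to assemble these pointwise comparisons into the global spectral-gap bound. Here the cleanest device is the canonical-path (or Holley--Strook energy-barrier) argument: for any pair $s,r\in S^i$ fix a path in the graph of $M_0^i$ (which exists by irreducibility) and bound the variance of $f$ by a sum of squared increments along such paths, each increment being controlled by $\mathcal E_n^i(f,f)$ up to the conductance ratios just estimated. Since the path lengths and the base conductances depend only on the fixed matrix $M_0^i$, the only $n$-dependence that survives is the uniform factor $\expo(-2K^i\beta_n^i)$, yielding $\chi_n^i\geq c\,\expo(-2K^i\beta_n^i)$ with $c$ depending only on $|S^i|$, $M_0^i$ and $\pi_0^i$. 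The factor $2K^i$ in the exponent is exactly the maximal energy barrier the landscape $R_n^i$ can present, bounded uniformly by Lemma~\ref{lm:R}.

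\textbf{Main obstacle.} The delicate point is not the algebra but verifying that the constant $c$ can be chosen \emph{uniformly in $n$}, i.e.\ independent of the particular value of $R_n^i$ (which itself is random and evolving). This is where the uniform bound $R_n^i\in[-K^i,K^i]$ of Lemma~\ref{lm:R} is indispensable: it allows the worst-case energy barrier to be replaced by the fixed quantity $2K^i$, so that Holley and Strook's estimate, which a priori produces a gap of the form $c\,\expo(-\beta\,m)$ with $m$ the energy barrier of the landscape, specializes to the claimed bound with $m\leq 2K^i$. The remaining care concerns the ``sufficiently large $n$'' clause, needed so that the Gibbs-measure representation \eqref{mesinv} and the reduction of the step size in Lemma~\ref{stabilite} are both in force, and so that the lower-order corrections to the Holley--Strook inequality are dominated by the leading exponential term.
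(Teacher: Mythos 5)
Your proposal is correct and takes essentially the same route as the paper: the paper's proof simply cites Lemma~2.7 of Holley and Stroock, which gives $\chi_n^i \geq c\,\expo(-\beta_n^i \mathbf{m}_n)$ with $\mathbf{m}_n$ the energy barrier of the landscape $R_n^i$, and then observes that $\mathbf{m}_n \leq 2K^i$ by the uniform bound of Lemma~\ref{lm:R}. The only difference is cosmetic: you sketch the canonical-path proof of the Holley--Stroock inequality, while the paper invokes it as a black box.
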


\begin{proof} By Lemma 2.7 in \cite{HolleyStroock88}, for sufficiently large $n$,
\[\chi_n^i \geq c \expo(- \beta_n^i \mathbf{m}_n),\]
where 
\[\mathbf{m}_n = \max_{s,r \in S^i} \left\{\min_{\gamma \in \Gamma} \max_{s' \in \gamma} R_n^i(s') - R_n^i(s) - R_n^i(r) + \min_{s' \in S^i} R_n^i(s') \right\},\]
and $\Gamma$ is the set of every path from $s$ to $r$ on the graph that represents the action set of player $i$. Now it is clear that $\mathbf{m}_n \leq 2K^i$, by Lemma \ref{lm:R}. 
\end{proof}

\begin{lemma} \label{vitesse_matrices:2} Given $s \in S^i$, under assumption \eqref{hyp_param} the following holds, almost surely, as $n \to +\infty$. 
\begin{itemize}
 \item[\textup{(i)}] $\dfrac{|Q_{n}^i|}{n^{a} \pi_n^i(s)^b} \longrightarrow 0$, for any $a >0, b >0$,
\item[\textup{(ii)}] $\dfrac{|Q_{n+1}^i - Q_n^i|n^{1-\alpha}}{ \pi_n^i(s)}\longrightarrow 0$ and $\dfrac{|\pi_{n+1}^i - \pi_n^i|n^{1-\alpha}}{\pi_n^i(s)}\longrightarrow 0$ for any $\alpha >0$.%\[A^1 G^{1,\#}< \frac{1}{3}\text{ and } A^2 G^{2,\#}< \frac{1}{3}.\]
\end{itemize}
\end{lemma}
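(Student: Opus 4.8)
The plan is to establish both asymptotic estimates by controlling the two ingredients separately: the spectral gap lower bound from Lemma~\ref{lemma:spectral}, which governs $|Q_n^i|$, and the smoothness of the maps $\beta \mapsto \pi^i[\beta,R]$ and $\beta \mapsto Q^i[\beta,R]$ together with the step-wise variation of $(\beta_n^i, R_n^i)_n$. The key observation driving everything is that under \eqref{hyp_param} we have $\beta_n^i \leq A_n^i \ln(n)$ with $A_n^i \to 0$, so the exponential factor $\expo(-2K^i\beta_n^i)$ decays \emph{subpolynomially}: for any fixed exponent $\eta>0$, $\expo(-2K^i\beta_n^i) \geq \expo(-2K^i A_n^i \ln n) = n^{-2K^i A_n^i}$, and since $A_n^i \to 0$ this is eventually bounded below by $n^{-\eta}$. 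This is the mechanism that lets us absorb any fixed polynomial factor.

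For part~(i), I would first recall (from the proof strategy sketched after Theorem~\ref{th:BR10}, via Proposition~3.4 of \cite{br10}) that the pseudo-inverse norm is controlled by the spectral gap, giving a bound of the form $|Q_n^i| \leq c/\chi_n^i$. Combining this with Lemma~\ref{lemma:spectral} yields $|Q_n^i| \leq c\,\expo(2K^i\beta_n^i) \leq c\, n^{2K^i A_n^i}$. On the other hand, Lemma~\ref{stabilite} gives $\pi_n^i(s) \geq n^{-\alpha}$ for any $\alpha \in \,]0,1[$ and $n$ large. Therefore
\begin{equation*}
\frac{|Q_n^i|}{n^a \pi_n^i(s)^b} \leq \frac{c\, n^{2K^i A_n^i}}{n^{a} n^{-\alpha b}} = c\, n^{2K^i A_n^i - a + \alpha b}.
\end{equation*}
Fixing $a,b>0$, I would choose $\alpha$ small enough that $\alpha b < a/2$, and then use $A_n^i \to 0$ to guarantee $2K^i A_n^i < a/2$ for $n$ large, making the exponent strictly negative; hence the ratio tends to $0$. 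The point is that the subpolynomial blow-up of $|Q_n^i|$ and the subpolynomial decay of $\pi_n^i(s)$ are both dominated by the genuine polynomial gain $n^{-a}$.

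For part~(ii), the extra ingredient is the increment bounds $|Q_{n+1}^i - Q_n^i|$ and $|\pi_{n+1}^i - \pi_n^i|$. The plan is to write each as a telescoping difference in the parameters $(\beta_n^i, R_n^i)$ and bound it by the variation of those parameters times a Lipschitz/derivative constant for the maps $\pi^i[\cdot,\cdot]$ and $Q^i[\cdot,\cdot]$. From \eqref{mesinv} the invariant measure is an explicit smooth function whose derivatives scale like $\beta_n^i$ times the gap-dependent constants, and the increments $|\beta_{n+1}^i - \beta_n^i|$ and $|R_{n+1}^i - R_n^i| = O(1/n)$ (the latter from the updating rule \eqref{eq:ur} and boundedness in Lemma~\ref{lm:R}) are each $O(\ln(n)/n)$ at worst. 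Differentiating $Q$ requires the resolvent-type identity $Q_{n+1} - Q_n = Q_{n+1}(M_{n+1}-M_n)Q_n$ combined with Poisson's equation \eqref{pseudo}, so the increment inherits a factor $|Q_{n+1}^i||Q_n^i| \leq c\,\expo(4K^i\beta_n^i)$. Collecting terms, each increment is bounded by $c\, n^{cA_n^i}/n$ up to a logarithmic factor, and multiplying by $n^{1-\alpha}/\pi_n^i(s) \leq c\, n^{1-\alpha+\alpha}$ leaves a net exponent $cA_n^i - \alpha' $ that is eventually negative by the same $A_n^i \to 0$ argument. The main obstacle I anticipate is the $Q$-increment: obtaining a clean bound on $|Q_{n+1}^i - Q_n^i|$ demands care with the perturbation identity and with how the spectral gap enters the resolvent, since a naive estimate introduces $|Q|^2$ and one must verify that the accumulated exponential factor $\expo(4K^i\beta_n^i)$ is still subpolynomial under \eqref{hyp_param}; this is exactly where the smallness of $A_n^i$ (rather than a fixed small constant as in (MFP)) is doing the essential work, and it is the step whose constants need to be tracked most carefully.
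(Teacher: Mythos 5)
Your proposal follows essentially the same route as the paper's proof: in part (i) you combine the Saloff-Coste-type bound on $|Q_n^i|$ (from Proposition~3.4 of \cite{br10}) with Lemma~\ref{lemma:spectral} and the lower bound $\pi_n^i(s) \geq n^{-\alpha}$ of Lemma~\ref{stabilite}, exploiting that $A_n^i \to 0$ makes every exponential factor subpolynomial; in part (ii) you use mean-value bounds for the increments of $M_n^i$ and $\pi_n^i$ in the parameters $(\beta,R)$ and a perturbation bound for $Q_{n+1}^i - Q_n^i$, which is exactly the paper's argument.

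One stated step is, however, false as written and needs the correction that the paper imports from Proposition~3.3 of \cite{br10}. The identity $Q_{n+1} - Q_n = Q_{n+1}(M_{n+1}-M_n)Q_n$ does not hold for these pseudo-inverses, because the projector $\Pi_n$ moves as well. Indeed, from $Q_{n+1}(I-M_{n+1}) = I - \Pi_{n+1}$, $(I-M_n)Q_n = I - \Pi_n$, together with $Q_{n+1}\Pi_n = 0$ (rows of $Q_{n+1}$ sum to zero) and $\pi_n Q_n = 0$, one gets
\begin{equation*}
Q_n - Q_{n+1} = Q_{n+1}(M_n - M_{n+1})Q_n + \mathbf{1}\,(\pi_{n+1}-\pi_n)Q_n ,
\end{equation*}
so the correct estimate is $|Q_{n+1}^i - Q_n^i| \leq c\left(|Q_{n+1}^i||Q_n^i||M_{n+1}^i-M_n^i| + |Q_n^i||\pi_{n+1}^i-\pi_n^i|\right)$, with an extra invariant-measure term that your identity drops. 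Since you bound $|\pi_{n+1}^i - \pi_n^i|$ anyway, the repair is immediate and your exponent bookkeeping goes through unchanged. Two smaller slips of the same harmless kind: the bound $|Q_n^i| \leq c/\chi_n^i$ should carry the additional factor $(\pi_n^i(s))^{-1/2}$ appearing in the Saloff-Coste inequality, and $|R_{n+1}^i - R_n^i|$ is $O(n^{\alpha-1})$ rather than $O(1/n)$, because the step size is $1/((n+1)\pi_n^i(s))$ and $\pi_n^i(s)$ may be as small as $n^{-\alpha}$; both factors are subpolynomial and are absorbed by the same choice-of-$\alpha$ argument you already use.
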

\begin{proof} Let $c$ be a general positive constant that may change from line to line.
\begin{enumerate}
\item  The first inequality in \cite[Proposition~3.4]{br10} (based on estimations obtained by Saloff-Coste~\cite{sc97}) reads in this case, for $n \in \NN$ and $s,s' \in S^i$,
  \begin{equation} \label{eq:SG}
    |Q_n^i(s,s')| \leq \frac{1}{\chi_n^i} \left(\frac{\pi_n^i(s')}{\pi_n^i(s)}\right)^{1/2} \leq \frac{1}{\chi_n^i} (\pi_n^i(s))^{-1/2},
  \end{equation}

\noindent Let $a>0$ and $b >0$. By Lemma~\ref{lemma:spectral}, $(\chi_{n}^i)^{-1} \leq c^{-1} n^{2K^i A_n^i }$. Pick $\frac{a}{b + 1/2} >\alpha >0$. There exists $n_0(\alpha)$ such that, for any $n \geq n_0$, for any $s \in S^i$, $\pi_n^i(s) \geq  n^{-\alpha}$  Therefore for sufficiently large $n$,
\begin{equation*}
\frac{|Q_{n}^i|}{n^{a}\pi_n^i(s)^b} \leq c^{-1} \frac{n^{2K^i A_n^i + \alpha/2}}{n^a n^{- b\alpha}} = c^{-1} n^{2 K^i A_n^i + \alpha (1/2 + b) - a}.
\end{equation*}
Thus the conclusion follows from the fact that $\alpha(1/2 + b) - a < 0$ and $\lim_n A_n^i = 0$.
\item Let $\alpha >0$. Recall that $M_n^i=M^i[\beta_n^i,R_n^i]$ and let us assume without loss of generality that the sequence $(A_n^i)_n$ is non-increasing. Therefore
\begin{equation*}
 |M_{n+1}^i - M_{n}^i| \leq \left |M^i[\beta_{n+1}^i,R_{n+1}^i] - M^i[\beta_n^i,R_{n+1}^i] \right | + \left | M^i[\beta_{n}^i,R_{n+1}^i] - M^i[\beta_{n}^i,R_{n}^i] \right |.
\end{equation*}
\noindent A simple application of the mean value theorem on the functions $\beta \to M^i[\beta, R]$ and $R \to M^i[\beta, R]$ yields, respectively,
\begin{equation*}
\left  |M^i[\beta_{n+1}^i,R_{n+1}^i]- M^i[\beta_{n}^i,R_{n+1}^i]\right | \leq \frac{A_n^i}{n},
\end{equation*}
and
\begin{equation*}
\left |M^i[\beta_{n}^i,R_{n+1}^i]-M^i[\beta_{n}^i, R_{n}^i] \right | \leq c \beta_n^i |R_{n+1}^i - R_n^i|,
\end{equation*}

By Lemma~\ref{stabilite}, and since $|R_{n+1}^i - R_n^i| \leq \max_{s \in S^i} c \gamma_{n+1}^i(s)$, we have that 
\[\left | M^i_{n+1} - M^i_{n}\right | \leq  \frac{1}{n^{1-\alpha/4}}\]
for sufficiently large $n$. Analogously, recalling that $\pi_n^i=\pi^i[\beta_n^i,R_n^i]$, we have
\begin{equation} \label{pi:vitesse}
|\pi_{n+1}^i -\pi_{n}^i| \leq  \frac{1}{n^{1- \alpha/4}}.
\end{equation}
for sufficiently large $n$.
\noindent Recall that, from part (i), $|Q_n^i| \leq n^{\alpha/8}$ for sufficiently large $n$. Also, $\pi_n^i(s) \geq n^{-\alpha/4}$.  Using the last inequality in the proof of \cite[Proposition 3.3]{br10}:  
\[|Q_{n+1}^i - Q_n^i| \leq c \left(|Q_{n+1}^i||Q_n^i||M_{n+1}^i-M_{n}^i| + |Q_n^i||\pi_{n+1}^i-\pi_n^i|\right),\]
we have that 
\begin{align*}
\frac{|Q_{n+1}^i - Q_n^i|n^{1- \alpha}}{\pi_n^i(s)} &\leq c n^{1- \alpha} n^{\alpha/4} \left(|Q_{n+1}^i||Q_n^i||M_{n+1}^i-M_{n}^i| + |Q_n^i||\pi_{n+1}^i-\pi_n^i|\right)\\
& \leq \frac{1}{n^{\alpha/8}},
\end{align*}
almost surely, for sufficiently large $n$. 
\end{enumerate}
\end{proof}

\noindent The following two propositions establish all the results on the noise terms that we need in the proof of Theorem~\ref{main2} ({\it c.f.} Section~\ref{proofs}).

\begin{proposition}\label{bruit}
Assume that \eqref{hyp_param} holds and let $i \in \{1,2\}$.
\begin{enumerate}
\item For $s \in S^i$, let 
\begin{equation} \label{ruido1}
 W^{i,1}_{n+1}(s) =\frac{R_n^i(s)}{\pi_{n}^i(s)} \left ( \ind_{\{s_{n+1}^i=s \}} -\pi_{n}^i(s) \right ) \in \RR. 
\end{equation}
Then, for all $T>0$, 
\begin{equation*}
\epsilon \left(\frac{1}{n+1} W^{i,1}_{n+1}(s),T\right) \to 0,
\end{equation*}
almost surely as $n$ goes to infinity.

\item Let 
$$ \overline W^{i}_{n+1}=\delta_{s_{n+1}^i}-\pi_{n}^i \in \RR^{|S^i|} .$$
Then, for all $T>0$, 
\begin{equation*}
\epsilon \left(\frac{1}{n+1} \overline W^{i,1}_{n+1},T\right) \to 0,
\end{equation*}
almost surely as $n$ goes to infinity.
\item Let 
$$W^{i,3}_{n+1} = G^i(\cdot ,s_{n+1}^{-i}) -G^i(\cdot,\pi_{n}^{-i}) \in \RR^{|S^i|}.$$
Then, for all $T>0$, 
\begin{equation*}
\epsilon \left(\frac{1}{n+1} W^{i,3}_{n+1},T \right) \to 0,
\end{equation*}
almost surely as $n$ goes to infinity.
\end{enumerate}
\end{proposition}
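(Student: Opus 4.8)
The plan is to derive all three statements as applications of the technical Proposition~\ref{prop:bruit}, each time recasting the relevant noise term in the form $W_{n+1}=\varepsilon_n(V_{n+1}-\mu_n)$ for a suitable observation function $H$, a suitable underlying chain, and a suitable scalar process $(\varepsilon_n)_n$, and then verifying the three hypotheses of that proposition by means of the stability and rate estimates already established (Lemmas~\ref{lm:R}, \ref{stabilite} and \ref{vitesse_matrices:2}).

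For (ii) and (iii) the choice of $\varepsilon_n$ is trivial. For (ii) I would apply Proposition~\ref{prop:bruit} to player $i$'s chain $(s_n^i)_n$ with the observation $H:S^i\to\RR^{|S^i|}$, $H(r)=\delta_r$, so that $V_{n+1}=\delta_{s_{n+1}^i}$ and $\mu_n=\pi_n^i$, together with the constant process $\varepsilon_n\equiv 1$. Hypotheses (1)--(3) are then immediate: $|Q_n^i|=o(n^a)$ for every $a>0$ by Lemma~\ref{vitesse_matrices:2}(i) (in particular for some $a<1/2$); the increment $\varepsilon_n-\varepsilon_{n-1}$ vanishes identically; and $|Q_{n+1}^i-Q_n^i|+|\pi_{n+1}^i-\pi_n^i|\to0$ by Lemma~\ref{vitesse_matrices:2}(ii), using $\pi_n^i(s)\le 1$. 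For (iii) the same argument is applied verbatim to the \emph{opponent's} chain $(s_n^{-i})_n$: taking $H:S^{-i}\to\RR^{|S^i|}$, $H(r)=G^i(\cdot,r)$, gives $V_{n+1}=G^i(\cdot,s_{n+1}^{-i})$ and, by multilinearity of $G^i$, $\mu_n=G^i(\cdot,\pi_n^{-i})$, whence $V_{n+1}-\mu_n=W_{n+1}^{i,3}$; with $\varepsilon_n\equiv1$ the three hypotheses follow from Lemma~\ref{vitesse_matrices:2} applied to player $-i$.

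The substantive part is (i), where I would take player $i$'s chain, the scalar observation $H(r)=\ind_{\{r=s\}}$ (so that $V_{n+1}=\ind_{\{s_{n+1}^i=s\}}$ and $\mu_n=\pi_n^i(s)$), and the nonconstant process $\varepsilon_n=R_n^i(s)/\pi_n^i(s)$, which reproduces exactly $W_{n+1}^{i,1}(s)=\varepsilon_n(V_{n+1}-\mu_n)$. Hypotheses (1) and (3) are controlled as above, combining the bound $|R_n^i(s)|\le K^i$ (Lemma~\ref{lm:R}) with the lower bound $\pi_n^i(s)\ge n^{-\alpha}$ (Lemma~\ref{stabilite}): for (1), $|\varepsilon_n||Q_n^i|\le K^i|Q_n^i|/\pi_n^i(s)=o(n^a)$ for any $a<1/2$ by Lemma~\ref{vitesse_matrices:2}(i) with $b=1$; for (3) one uses directly the two estimates of Lemma~\ref{vitesse_matrices:2}(ii), which already carry $\pi_n^i(s)$ in the denominator together with a spare factor $n^{1-\alpha}$.

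The main obstacle, and the only step requiring genuine work, is hypothesis (2): $|Q_n^i||\varepsilon_n-\varepsilon_{n-1}|\to0$. Here I would expand the increment as
\begin{equation*}
\varepsilon_n-\varepsilon_{n-1}=\frac{R_n^i(s)-R_{n-1}^i(s)}{\pi_n^i(s)}+\frac{R_{n-1}^i(s)\,(\pi_{n-1}^i(s)-\pi_n^i(s))}{\pi_n^i(s)\,\pi_{n-1}^i(s)}.
\end{equation*}
For the first summand, the updating rule \eqref{eq:ur} together with Lemma~\ref{stabilite} gives $|R_n^i(s)-R_{n-1}^i(s)|\le c\,\gamma_n^i(s)=c/(n\pi_{n-1}^i(s))$, so that after multiplication by $|Q_n^i|$ one obtains a bound of order $|Q_n^i|\,n^{2\alpha-1}$, which vanishes since $|Q_n^i|$ is sub-polynomial and $\alpha$ is arbitrarily small. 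For the second summand, bounding $|R_{n-1}^i(s)|\le K^i$ and invoking the measure-increment estimate of Lemma~\ref{vitesse_matrices:2}(ii) yields a bound of order $(|Q_n^i|/\pi_n^i(s))\cdot o(n^{-(1-\alpha)})$, again vanishing. The one point to handle carefully is the juggling of the two time indices $n$ and $n-1$ in the denominators, which is harmless because $\pi_{n-1}^i(s)$ and $\pi_n^i(s)$ obey the same polynomial lower bound up to a constant. Collecting the three verified hypotheses, Proposition~\ref{prop:bruit} delivers the claim in each case.
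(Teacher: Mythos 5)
Your proposal is correct and follows essentially the same route as the paper: all three parts are reduced to Proposition~\ref{prop:bruit}, with the same choice $\varepsilon_n = R_n^i(s)/\pi_n^i(s)$ in part (i) and the hypotheses verified through Lemmas~\ref{lm:R}, \ref{stabilite} and \ref{vitesse_matrices:2}, the paper checking hypothesis (2) by the same computation, merely written as the single fraction $\bigl(\pi_n^i(s)(R_n^i(s)-R_{n-1}^i(s)) + R_n^i(s)(\pi_{n-1}^i(s)-\pi_n^i(s))\bigr)/\bigl(\pi_n^i(s)\pi_{n-1}^i(s)\bigr)$ rather than your two-term split. The only cosmetic deviation is that in part (i) the paper keeps the vector observation $H(r)=\delta_r$ and extracts the $s$-th component at the end, whereas you use the scalar observation $H(r)=\ind_{\{r=s\}}$; both are covered by Proposition~\ref{prop:bruit}.
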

\begin{proof}
We prove part (i) in detail. Given that the arguments are very similar, the remaining proofs are omitted.

\noindent  We  apply Proposition~\ref{prop:bruit} with $S= S^i$, $\Sigma= \Delta(S^i)$, $s_n=s_n^i$, $M_n= M_n^i$, $\pi_n=\pi_n^i$ and $H(r)= \delta_r$ for all $r \in S^i$. Therefore in this case $\mu_n = \pi_n^i$ and $V_{n+1}= \delta_{s_{n+1}^i}$. We also put $\varepsilon_n= R_n^i(s)/\pi_n^i(s)$.  

\noindent From the fact that $R_n^i$ is bounded, it is easy to see that points $\text{(i)}$ and $\text{(ii)}$ of Lemma \ref{vitesse_matrices:2} respectively imply assumptions $\text{(i)}$ and $\text{(iii)}$ of Proposition~\ref{prop:bruit}.  To confirm that assumption (ii) holds, it suffices to compute
\begin{equation}
\begin{aligned} \label{eq:}
|Q_n^i||\varepsilon_n - \varepsilon_{n-1}|&= \dfrac{|Q_n^i||\pi_{n}^i(s)( R^i_{n}(s)- R^i_{n-1}(s)) + R^i_{n}(s)(\pi_{n-1}^i(s) - \pi_{n}^i(s) )|}{\pi_{n}^i(s)\pi_{n-1}^i(s)}\\
&\leq c |Q_n^i| n^{-1 +\alpha}
\end{aligned}
\end{equation}
by definition of $R_n^i$, Lemma~\ref{stabilite} and equation~\eqref{pi:vitesse}, for sufficiently large $n$ and any $\alpha>0$. Hence, by Lemma~\ref{vitesse_matrices:2}, $|Q_n^i||\varepsilon_n - \varepsilon_{n-1}|$ goes to zero almost surely as $n$ goes to infinity.
\noindent By using Proposition~\ref{prop:bruit}, we show that $\epsilon( \mathcal U^i_{n+1}/(n+1), T)$ goes to zero almost surely for any $T>0$, where

$$\mathcal U^i_{n+1}=\frac{R_n^i(s)}{\pi_{n}^i(s)} \left ( \delta_{s_{n+1}^i } - \pi_n^i \right ) \in \RR^{|S^i|}.$$
The result follows from the fact that the $s$-th component of the vector $\mathcal U_{n+1}^i$ is equal to  $W^{i,1}_{n+1}(s)$.

\end{proof}

\begin{proposition}\label{bruit2}
Assume that \eqref{hyp_param} holds and let us fix $i \in \{1,2\}$.

\begin{enumerate}
\item For $s \in S^i$, let 
\begin{equation*}
 W^{i,2}_{n+1}(s) =\frac{1}{\pi^i_{n}(s)} \left ( \ind_{\{s_{n+1}^i=s \}} G^i( s_{n+1}^1, s_{n+1}^2) -\pi_{n}^i(s) G^i(s, \pi^{-i}_{n}) \right ) \in \RR.
\end{equation*}
Then, for all $T>0$, 
\begin{equation*}
\epsilon \left(\frac{1}{n+1} W^{i,2}_{n+1},T\right) \to 0,
\end{equation*}
almost surely as $n$ goes to infinity.
\item Let 
\begin{equation*}
 W^{i,4}_{n+1} = G^i( s_{n+1}^i, s_{n+1}^{-i}) - G^i(\pi_n^i, \pi^{-i}_{n}) \in \RR.
\end{equation*}
Then, for all $T>0$, 
\begin{equation*}
\epsilon \left(\frac{1}{n+1} W^{i,4}_{n+1},T\right) \to 0,
\end{equation*}
almost surely as $n$ goes to infinity.
\end{enumerate}

\end{proposition}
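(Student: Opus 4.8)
The plan is to deduce both statements from Proposition~\ref{prop:bruit}, applied not to a single player's chain (as in the proof of Proposition~\ref{bruit}) but to the \emph{joint} process $\mathbf s_n=(s_n^1,s_n^2)$. The key observation is that, conditionally on $\mathcal F_n$, the two players draw their next actions independently, so that
\[
\pcon{\mathbf s_{n+1}=(r^1,r^2)}{\mathcal F_n}=M_n^1(s_n^1,r^1)\,M_n^2(s_n^2,r^2);
\]
hence $(\mathbf s_n)_n$ is a time-inhomogeneous Markov chain on $S=S^1\times S^2$ with transition matrix $\mathbf M_n=M_n^1\otimes M_n^2$. Since each $M_n^i$ is irreducible and reversible with respect to $\pi_n^i$, the matrix $\mathbf M_n$ is irreducible and reversible with respect to the product measure $\boldsymbol\pi_n=\pi_n^1\otimes\pi_n^2$, so $(\mathbf s_n,\mathbf M_n)_n$ satisfies \eqref{hip0}.

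With this chain in hand, both noise terms fit the template $W_{n+1}=\varepsilon_n(V_{n+1}-\mu_n)$ of Proposition~\ref{prop:bruit}, with $V_{n+1}=H(\mathbf s_{n+1})$ and $\mu_n=\sum_{r}\boldsymbol\pi_n(r)H(r)$. For part~(ii) I would take $H=G^i$ (viewed as a scalar function on $S^1\times S^2$) and $\varepsilon_n\equiv1$; multilinearity gives $\mu_n=G^i(\pi_n^1,\pi_n^2)$, so $W^{i,4}_{n+1}=V_{n+1}-\mu_n$ exactly. For part~(i) I would take $H(r^1,r^2)=\ind_{\{r^i=s\}}G^i(r^1,r^2)$ and $\varepsilon_n=1/\pi_n^i(s)$; using the product form of $\boldsymbol\pi_n$ one checks $\mu_n=\pi_n^i(s)\,G^i(s,\pi_n^{-i})$, which is precisely the term subtracted in $W^{i,2}_{n+1}(s)$. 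In both cases $\varepsilon_n$ is $\mathcal F_n$-measurable and, by Lemma~\ref{stabilite}, bounded by $n^{\alpha}$ for every $\alpha>0$ and $n$ large.

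What remains is to verify the three hypotheses of Proposition~\ref{prop:bruit} for $\mathbf M_n$, that is, to establish the analogue of Lemma~\ref{vitesse_matrices:2} for the pseudo-inverse $\mathbf Q_n$ of $\mathbf M_n$. The slow-variation bounds are inherited directly from the single-player ones: writing $\mathbf M_{n+1}-\mathbf M_n=(M_{n+1}^1-M_n^1)\otimes M_{n+1}^2+M_n^1\otimes(M_{n+1}^2-M_n^2)$, and likewise for $\boldsymbol\pi_n$, the estimates in the proof of Lemma~\ref{vitesse_matrices:2} together with \eqref{pi:vitesse} give $|\mathbf M_{n+1}-\mathbf M_n|+|\boldsymbol\pi_{n+1}-\boldsymbol\pi_n|\le c\,n^{-1+\alpha/4}$. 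The Saloff-Coste bound \eqref{eq:SG} and the pseudo-inverse increment bound of \cite{br10} are general facts about reversible chains and apply verbatim to $\mathbf M_n$, provided one has a lower bound on its spectral gap $\chi(\mathbf M_n)$ and on $\min_r\boldsymbol\pi_n(r)=\min_r\pi_n^1(r)\cdot\min_r\pi_n^2(r)$; the latter is $\ge n^{-2\alpha}$ by Lemma~\ref{stabilite}.

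The main obstacle is precisely the spectral gap of $\mathbf M_n$. Although $\mathbf M_n$ is reversible for the Gibbs measure of the \emph{additive} reward $(s^1,s^2)\mapsto R_n^1(s^1)+R_n^2(s^2)$, it is not itself a Metropolis chain — because $(a)_++(b)_+\ne(a+b)_+$ — so Lemma~\ref{lemma:spectral} (Holley--Stroock) does not transfer directly. In the tensor decomposition of the mean-zero space, the block acting on the product of the two mean-zero subspaces carries eigenvalues $\lambda_i^1\lambda_j^2$, and these approach $1$ exactly when both factors are near $-1$, so controlling $\mathbf Q_n$ genuinely requires bounding the \emph{absolute} spectral gaps of the $M_n^i$, not merely the Poincar\'e gaps supplied by Lemma~\ref{lemma:spectral}. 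I would resolve this by re-running the canonical-paths/Holley--Stroock estimate for $\mathbf M_n$ directly, obtaining $\chi(\mathbf M_n)\ge c\,e^{-C\beta_n}$ with $C$ controlled by $K^1+K^2$ (see \eqref{eq:ki}); since $A_n^i\to0$ this keeps the exponent in hypothesis~(i) of Proposition~\ref{prop:bruit} below $1/2$. An alternative that avoids the product chain altogether is to split each term, using conditional independence, into (a) honest bounded martingale differences, for which $\epsilon(\tfrac{1}{n+1}(\cdot),T)\to0$ follows from the exponential martingale inequality alone, and (b) single-player drift terms that invoke only the individual pseudo-inverses $Q_n^1,Q_n^2$ and the slow variation of $\pi_n^i$, to which Lemma~\ref{vitesse_matrices:2} applies as in Proposition~\ref{bruit}. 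Either way, once the noise estimates are secured the two parts follow, yielding \eqref{noise_general} for $\mathcal W_{n+1}^i$ and $W_{n+1}^{i,4}$ as needed in Section~\ref{proofs}.
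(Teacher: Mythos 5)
Your core reduction coincides with the paper's own proof: the paper also applies Proposition~\ref{prop:bruit} to the joint process $s_n=(s_n^1,s_n^2)$ with $M_n=M_n^1\otimes M_n^2$ and $\pi_n=\pi_n^1\otimes\pi_n^2$, choosing $H(r^1,r^2)=\delta_{r^1}G^1(r^1,r^2)$ and $\varepsilon_n=1/\pi_n^1(s)$ for part (i), $H=G^i$ and $\varepsilon_n\equiv 1$ for part (ii), and then re-deriving the estimates of Lemma~\ref{vitesse_matrices:2} for the product chain. So in substance you have reconstructed the intended argument.

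The one point where you go beyond the paper is the spectral gap of $M_n^1\otimes M_n^2$, and your caution is justified. The paper disposes of this step with the assertion that $\chi(M_n^1\otimes M_n^2)=\min\{\chi_n^1,\chi_n^2\}$ is ``easy to see''. With $\chi$ the Poincar\'e gap of a reversible chain, i.e. $\chi=1-\lambda_2$, the spectrum of the tensor product is $\{\lambda_i^1\lambda_j^2\}$, so this identity is correct when both spectra are nonnegative, but it fails in general: if each $M_n^i$ has an eigenvalue close to $-1$, the product chain has an eigenvalue close to $+1$, and $\chi(M_n^1\otimes M_n^2)$ can be far smaller than $\min\{\chi_n^1,\chi_n^2\}$; in the extreme case (both exploration matrices bipartite, $R_n^i$ constant) the product chain is not even irreducible, so \eqref{hip0} itself fails for it. Nothing in the standing assumptions on $M_0^i$ excludes this. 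Your first repair --- a Holley--Stroock/canonical-path bound applied directly to the product chain, giving $\chi(M_n^1\otimes M_n^2)\geq c\,\expo\bigl(-C(\beta_n^1+\beta_n^2)\bigr)$ --- supplies exactly what hypothesis (i) of Proposition~\ref{prop:bruit} needs, since under \eqref{hyp_param} this bound decays slower than any power of $n$; note, however, that it still presupposes irreducibility of the product chain. Your second repair is the more robust one: centering each noise term at its conditional expectation produces a bounded martingale difference (handled by the exponential martingale inequality alone, since $\varepsilon_n\leq n^{\alpha}$ for every $\alpha>0$) plus drift terms of the form $(M_n^i e_s)(s_n^i)-\pi_n^i(s)$ and $M_n^{-i}(s_n^{-i},\cdot)-\pi_n^{-i}$, which involve only the individual chains and are controlled by $Q_n^1,Q_n^2$ exactly as in Proposition~\ref{bruit}; this route never uses the product chain and therefore needs nothing beyond the irreducibility of each $M_0^i$. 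Either way the proposition follows, and on this point your write-up is more careful than the paper's.
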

\begin{proof}
  \begin{enumerate}
  \item For the sake of clarity, let us set $i=1$. Again, we use Proposition~\ref{prop:bruit}, where in this case,  $S= S^1 \times S^2$, $\Sigma$ is defined by 
\[\Sigma= \left\{ \sum_{s^2 \in S^2}\sigma^2(s^2)G^1(\cdot, s^2) \, : \, \sum_{s^2 \in S^2} \sigma^2(s^2) = 1 \text{ and } \sigma^2(s^2) \geq 0 \text{ for all } s^2 \in S^2  \right\} \subseteq \RR^{|S^1|},\]
$s_n=(s_n^1, s_n^2)$, $M_n= M_n^1 \otimes M_n^2$, $\pi_n= \pi_n^1 \otimes \pi_n^2$ and $H : S^1\times S^2  \to \Sigma$ where 
$$H(s^1,s^2)= \delta_{s^1} G^1(s^1,s^2),$$ 
for all $(s^1,s^2) \in S^1 \times S^2$. Notice that in this case $\delta$ is the Kronecker's delta function taking values in $\Delta(S^1)$. Therefore in this case $\mu_n=(\mu_n(s^1))_{s^1 \in S^1}$, with $\mu_n(s^1) = \pi_n^1(s^1) G^1(s^1, \pi_n^2)$ and $V_{n+1}= (V_{n+1}(s^1))_{s^1 \in S^1}$, where
$$V_{n+1}(s^1)= \ind_{\{s_{n+1}^1=s^1\}}G^1(s_{n+1}^1, s_{n+1}^2)=  \ind_{\{s_{n+1}^1=s^1 \}}G^1(s^1, s_{n+1}^2).$$ 

We also set in this case $\varepsilon_n= 1/\pi_n^1(s)$.  Let $Q_n$ be the pseudo-inverse matrix of the stochastic matrix $M_n$.  It is easy to see that the spectral gap of $M_n$ verifies that
\begin{equation*}
\chi(M_n)= \chi(M_n^1 \otimes M_n^2)= \min \{\chi(M_n^1), \chi(M_n^2) \}=\min \{\chi_n^1, \chi_n^2 \}.
\end{equation*}

\noindent By using inequality \eqref{eq:SG} for the matrix $Q_n$ and the fact that $\pi_n(s^1,s^2)=\pi_n^1(s^1)\pi_n^2(s^2) \geq n^{-\alpha}$ for any $\alpha>0$ and sufficiently large $n$, we can obtain exactly the same conclusions as in Lemma~\ref{vitesse_matrices:2} for $Q_n$ and $\pi_n$. 

\noindent Hence, as in the proof of Proposition~\ref{bruit}, we deduce that sequences $(\varepsilon_n)_n$ and $(Q_n)_n$ verify assumptions $\text{(i)-(iii)}$ of Proposition~\ref{prop:bruit}.  

\noindent Therefore,  we have that $\epsilon( \mathcal U^i_{n+1}/(n+1), T)$ goes to zero almost surely for any $T>0$ where, for $s^1 \in S^1$,
\begin{equation*}
\begin{aligned}
\mathcal U^i_{n+1}(s^1) &=\frac{1}{\pi_n^1(s)} \left  (\ind_{\{s_{n+1}^1=s^1 \}}G^1(s^1, s_{n+1}^2) - \pi_{n}^1(s^1) G^1(s^1, \pi^{2}_{n}) \right )\\
&=\frac{1}{\pi_n^1(s)} \left  (\ind_{\{s_{n+1}^1=s^1 \}}G^1(s_{n+1}^1, s_{n+1}^2) - \pi_{n}^1(s^1) G^1(s^1, \pi^{2}_{n}) \right ).
\end{aligned}
\end{equation*} 
\noindent The conclusion follows taking $s^1=s$ in the equation above.
\item The proof of this part also follows from Proposition~\ref{prop:bruit}, taking as $\Sigma$ a sufficiently large compact set in $\RR$, $s_n=(s_n^1, s_n^2)$, $M_n= M_n^1 \otimes M_n^2$, $\pi_n= \pi_n^1 \otimes \pi_n^2$ and $H : S^1\times S^2  \to \Sigma$, where 
$$H(s^1,s^2)= G^i(s^1,s^2).$$ 

\noindent Therefore $\mu_n= G^i(\pi_n^i, \pi^{-i}_{n})$ and $\varepsilon_n=1$ for all $n \in \NN$. Finally, using the same argument as in part (i), we prove that the assumptions (i)-(iii) hold and we conclude.
\end{enumerate}
\end{proof}
\bibliography{biblio/pomfp}

\begin{thebibliography}{34}
\expandafter\ifx\csname natexlab\endcsname\relax\def\natexlab#1{#1}\fi
\expandafter\ifx\csname url\endcsname\relax
  \def\url#1{\texttt{#1}}\fi
\expandafter\ifx\csname urlprefix\endcsname\relax\def\urlprefix{URL }\fi
\providecommand{\eprint}[2][]{\url{#2}}

\bibitem[{Aubin and Cellina(1984)}]{AubCel84}
\textsc{Aubin, J.} and \textsc{Cellina, A.} (1984).
\newblock \textit{{Differential Inclusions: Set-Valued Maps and Viability
  Theory}}.
\newblock Springer-Verlag, New York.

\bibitem[{Beggs(2005)}]{beggs05}
\textsc{Beggs, A.~W.} (2005).
\newblock On the convergence of reinforcement learning.
\newblock \textit{J. Econom. Theory}, \textbf{122} 1--36.

\bibitem[{Bena{\"{\i}}m(1999)}]{benaim99}
\textsc{Bena{\"{\i}}m, M.} (1999).
\newblock Dynamics of stochastic approximation algorithms.
\newblock In \textit{S\'eminaire de {P}robabilit\'es, {XXXIII}}, vol. 1709 of
  \textit{Lecture Notes in Math.} Springer, Berlin, 1--68.

\bibitem[{Bena{\"{\i}}m and Faure(2012)}]{bf12}
\textsc{Bena{\"{\i}}m, M.} and \textsc{Faure, M.} (2012).
\newblock Stochastic approximations, cooperative dynamics and supermodular
  games.
\newblock \textit{Annals of Applied Probability}, \textbf{22(5)} 2133--2164.

\bibitem[{Bena{\"{\i}}m and Hirsch(1999)}]{bh99a}
\textsc{Bena{\"{\i}}m, M.} and \textsc{Hirsch, M.~W.} (1999).
\newblock Mixed equilibria and dynamical systems arising from fictitious play
  in perturbed games.
\newblock \textit{Games Econom. Behav.}, \textbf{29} 36--72.

\bibitem[{Bena{\"{\i}}m et~al.(2005)Bena{\"{\i}}m, Hofbauer and Sorin}]{bhs05}
\textsc{Bena{\"{\i}}m, M.}, \textsc{Hofbauer, J.} and \textsc{Sorin, S.}
  (2005).
\newblock Stochastic approximations and differential inclusions.
\newblock \textit{SIAM J. Control Optim.}, \textbf{44} 328--348.

\bibitem[{Bena{\"\i}m and Raimond(2010)}]{br10}
\textsc{Bena{\"\i}m, M.} and \textsc{Raimond, O.} (2010).
\newblock A class of self-interacting processes with applications to games and
  reinforced random walks.
\newblock \textit{SIAM J. Control Optim.}, \textbf{48} 4707--4730.

\bibitem[{Benveniste et~al.(1990)Benveniste, M{\'e}tivier and Priouret}]{bmp90}
\textsc{Benveniste, A.}, \textsc{M{\'e}tivier, M.} and \textsc{Priouret, P.}
  (1990).
\newblock \textit{Adaptive {A}lgorithms and {S}tochastic {A}pproximations}.
\newblock Springer-Verlag, Berlin.

\bibitem[{Berger(2005)}]{berger05}
\textsc{Berger, U.} (2005).
\newblock Fictitious play in $2 \times n$ games.
\newblock \textit{J. Econom. Theor.}, \textbf{120} 134--154.

\bibitem[{B{\"o}rgers and Sarin(1997)}]{bs97}
\textsc{B{\"o}rgers, T.} and \textsc{Sarin, R.} (1997).
\newblock Learning through reinforcement and replicator dynamics.
\newblock \textit{J. Econom. Theory}, \textbf{77} 1--14.

\bibitem[{Brown(1951)}]{brown51}
\textsc{Brown, G.~W.} (1951).
\newblock Iterative solution of games by fictitious play.
\newblock In \textit{Activity {A}nalysis of {P}roduction and {A}llocation}.
  Wiley, New York, 374--376.

\bibitem[{Cominetti et~al.(2010)Cominetti, Melo and Sorin}]{cms10}
\textsc{Cominetti, R.}, \textsc{Melo, E.} and \textsc{Sorin, S.} (2010).
\newblock A payoff-based learning procedure and its application to traffic
  games.
\newblock \textit{Games Econ. Behav.}, \textbf{70} 71--83.

\bibitem[{Erev and Roth(1998)}]{er98}
\textsc{Erev, I.} and \textsc{Roth, A.~E.} (1998).
\newblock Predicting how people play games: Reinforcement learning in
  experimental games with unique, mixed strategy equilibria.
\newblock \textit{Amer. Econ. Rev.}, \textbf{88} 848--81.

\bibitem[{Faure and Roth(2010)}]{FauRot10}
\textsc{Faure, M.} and \textsc{Roth, G.} (2010).
\newblock Stochastic approximations of set-valued dynamical systems:
  Convergence with positive probability to an attractor.
\newblock \textit{Math. of Oper. Res.}, \textbf{35} 624--640.

\bibitem[{Fudenberg and Kreps(1993)}]{fk93}
\textsc{Fudenberg, D.} and \textsc{Kreps, D.~M.} (1993).
\newblock Learning mixed equilibria.
\newblock \textit{Games Econ. Behav.}, \textbf{5} 320--367.

\bibitem[{Fudenberg and Levine(1998)}]{fl98}
\textsc{Fudenberg, D.} and \textsc{Levine, D.~K.} (1998).
\newblock \textit{The {T}heory of {L}earning in {G}ames}.
\newblock MIT Press, Cambridge, MA.

\bibitem[{Gilboa and Matsui(1991)}]{gm91}
\textsc{Gilboa, I.} and \textsc{Matsui, A.} (1991).
\newblock Social stability and equilibrium.
\newblock \textit{Econometrica}, \textbf{59} 859--867.

\bibitem[{Hajek(1988)}]{hajek88}
\textsc{Hajek, B.} (1988).
\newblock Cooling schedules for optimal annealing.
\newblock \textit{{M}ath. {O}per. {R}es.}, \textbf{13} 311--329.

\bibitem[{Harsanyi(1973)}]{Har73}
\textsc{Harsanyi, J.~C.} (1973).
\newblock Games with randomly disturbed payoffs: A new rationale for
  mixed-strategy equilibrium points.
\newblock \textit{International Journal of Game Theory}, \textbf{2} 1--23.

\bibitem[{Hart and Mas-Colell(2000)}]{hmc00}
\textsc{Hart, S.} and \textsc{Mas-Colell, A.} (2000).
\newblock A simple adaptive procedure leading to correlated equilibrium.
\newblock \textit{Econometrica}, \textbf{68} 1127--1150.

\bibitem[{Hart and Mas-Colell(2001)}]{hmc01}
\textsc{Hart, S.} and \textsc{Mas-Colell, A.} (2001).
\newblock A reinforcement procedure leading to correlated equilibrium.
\newblock In \textit{Economics {E}ssays: A Festschrift for Werner Hildebrand}.
  Springer, Berlin, 181--200.

\bibitem[{Hofbauer and Sandholm(2002)}]{hs02}
\textsc{Hofbauer, J.} and \textsc{Sandholm, W.~H.} (2002).
\newblock On the global convergence of stochastic fictitious play.
\newblock \textit{Econometrica}, \textbf{70} 2265--2294.

\bibitem[{Hofbauer and Sorin(2006)}]{hs06}
\textsc{Hofbauer, J.} and \textsc{Sorin, S.} (2006).
\newblock Best response dynamics for continuous zero-sum games.
\newblock \textit{Discrete Contin. Dynam. Systems, Ser. \textbf{B}}, \textbf{6}
  215--224.

\bibitem[{Holley and Stroock(1988)}]{HolleyStroock88}
\textsc{Holley, R.} and \textsc{Stroock, D.} (1988).
\newblock {Simulated annealing via Sobolev inequalities}.
\newblock \textit{Comm. Math. Phys.}, \textbf{115} 553--569.

\bibitem[{Hopkins(2002)}]{hopkins02}
\textsc{Hopkins, E.} (2002).
\newblock Two competing models on how people learn in games.
\newblock \textit{Econometrica}, \textbf{70} 2141--2166.

\bibitem[{Hopkins and Posch(2005)}]{hp05}
\textsc{Hopkins, E.} and \textsc{Posch, M.} (2005).
\newblock Attainability of boundary points under reinforcement learning.
\newblock \textit{Games Econom. Behav.}, \textbf{53} 110--125.

\bibitem[{Kushner and Yin(2003)}]{ky03}
\textsc{Kushner, H.~J.} and \textsc{Yin, G.} (2003).
\newblock \textit{Stochastic {A}pproximation and {R}ecursive {A}lgorithms and
  {A}pplications}.
\newblock Springer-Verlag, New York.

\bibitem[{Leslie and Collins(2005)}]{lc05}
\textsc{Leslie, D.~S.} and \textsc{Collins, E.~J.} (2005).
\newblock Individual {$Q$}-learning in normal form games.
\newblock \textit{SIAM J. Control Optim.}, \textbf{44} 495--514.

\bibitem[{Leslie and Collins(2006)}]{lc06}
\textsc{Leslie, D.~S.} and \textsc{Collins, E.~J.} (2006).
\newblock Generalised weakened fictitious play.
\newblock \textit{Games Econom. Behav.}, \textbf{56} 285--298.

\bibitem[{Miyasawa(1961)}]{Miy61}
\textsc{Miyasawa, K.} (1961).
\newblock On the convergence of the learning process in a 2 x 2 non-zero-sum
  two-person game.
\newblock Tech. rep., DTIC Document.

\bibitem[{Monderer and Shapley(1996)}]{MonSha96b}
\textsc{Monderer, D.} and \textsc{Shapley, L.~S.} (1996).
\newblock Fictitious play property for games with identical interests.
\newblock \textit{Journal of Economic Theory}, \textbf{68} 258--265.

\bibitem[{Posch(1997)}]{posch97}
\textsc{Posch, M.} (1997).
\newblock Cycling in a stochastic learning algorithm for normal form games.
\newblock \textit{J. Evol. Econ.}, \textbf{7} 193--207.

\bibitem[{Robinson(1951)}]{robinson51}
\textsc{Robinson, J.} (1951).
\newblock An iterative method of solving a game.
\newblock \textit{Ann. Math.}, \textbf{54} 296--301.

\bibitem[{Saloff-Coste(1997)}]{sc97}
\textsc{Saloff-Coste, L.} (1997).
\newblock Lectures on finite {M}arkov chains.
\newblock In \textit{Lectures Notes in Math.} Springer, Berlin, 301--413.

\end{thebibliography}
\bibliographystyle{biblio/ims}
\end{document}